% ********** choose the arxiv version or the camera-ready version ***********
\newif\ifarxiv
\arxivtrue % Uncomment to compile the arxiv version.
% ***************************************************************

\ifarxiv
	\documentclass{acmart}
\else
	\documentclass[format=acmtog]{acmart}
\fi

% --------------- Packages --------------
 
%\usepackage{amsmath}
%\package{amssymb}
%\usepackage[inline]{enumitem}
\usepackage{enumerate}
\usepackage{mathtools}
\usepackage{xcolor,colortbl}
\usepackage{xfrac}
\usepackage{tabu}
\usepackage{enumerate}
\usepackage{stmaryrd}
\usepackage{hyperref}
\usepackage{ulem}
\usepackage{subcaption}
% \usepackage{bbm}
%\expandafter\def\csname v...@float.sty\endcsname{not really}
\usepackage{algorithm,algorithmicx}
\usepackage{tikz}
\usetikzlibrary{arrows,positioning,calc,automata,shapes.geometric}
\usetikzlibrary{decorations.pathreplacing,angles,quotes}
\usepackage{bbm}
%\expandafter\let\csname v...@float.sty\endcsname\relax
\usepackage{xspace}
\usepackage{siunitx}
%\usepackage{enumitem}

%\usepackage{floatrow}
% \floatsetup[figure]{style=plain}

\usepackage[noend]{algpseudocode}
\usepackage[b]{esvect}
% \usepackage{tikz,pgf}
% \usetikzlibrary{positioning,calc,backgrounds}
% \usepackage[bb=boondox]{mathalfa}
% \usepackage{wrapfig}
% \usepackage{siunitx}
%%\usepackage{array}
% \usepackage{multirow}
% \usepackage[font={sf,small},labelsep=quad,labelfont=sc]{caption}

% \usepackage{showframe}
%\usepackage[framemethod=tikz]{mdframed}
 
 \newtheorem{assumption}{Assumption}
 \newtheorem{problem}{Problem}

% ---------------- Macros ------------------

%\newcommand{\todo}[1]{}

% +++++++++ monitoring ++++++++++

% ++++++++ Subroutines +++++++
\newcommand{\ExpMon}{\texttt{ExpEstimator}\xspace}

% ++++++ Generic Macros ++++++

\newcommand{\set}[1]{\lbrace #1\rbrace}
\newcommand{\pr}{\mathbb{P}}

\newcommand{\expe}{\mathbb{E}}

\newcommand{\poisson}[1]{\mathit{Poisson}(#1)}

 \newcommand{\twodots}{\mathinner {\ldotp \ldotp}}

% ++++++++ Systems ++++++++

% ++++++++ Variables +++++++
\newcommand{\seq}[1]{\vv{#1}}

% ++++++ Markov Chain Macros +++++

% ++++++ Properties ++++++
% \newcommand{\p}[1]{p_{#1}}

% ++++++ Monitoring ++++++

\algnewcommand{\IfThenElse}[3]{% \IfThenElse{<if>}{<then>}{<else>}
  \State \algorithmicif\ #1\ \algorithmicthen\ #2\ \algorithmicelse\ #3}
  
% +++++ Bayesian +++++

% \newcommand{\mc}{\mathcal{M}}

% \newcommand{\simplex}{\Delta}

%\newcommand{\num}[1]{\#_{#1}}

% \newcommand{\rv}[1]{#1}

% \newcommand{\snf}[1]{|#1|_{\mathrm{NF}}}

%\newcommand{\obs}{\omicron}

\newcommand{\Naturals}{\mathbb{N}}

\title{Runtime Monitoring of Dynamic Fairness Properties}

\author{Thomas A.\ Henzinger}
\affiliation{%
	\institution{Institute of Science and Technology Austria (ISTA)}
	\city{Klosterneuburg}
	\country{Austria}
}
\email{tah@ist.ac.at}

\author{Mahyar	Karimi}
\email{	Mahyar.Karimi@ist.ac.at	}
\affiliation{%
	\institution{Institute of Science and Technology Austria (ISTA)}
	\city{Klosterneuburg}
	\country{Austria}
}

\author{Konstantin Kueffner}
\email{	konstantin.kueffner@ist.ac.at}
\affiliation{%
	\institution{Institute of Science and Technology Austria (ISTA)}
	\city{Klosterneuburg}
	\country{Austria}
}

\author{Kaushik Mallik}
\email{	kaushik.mallik@ist.ac.at	}
\affiliation{%
	\institution{Institute of Science and Technology Austria (ISTA)}
	\city{Klosterneuburg}
	\country{Austria}
}

% ***** ACM Copyright information ****
\copyrightyear{2023}
\acmYear{2023}
\setcopyright{rightsretained}
\acmConference[FAccT '23]{2023 ACM Conference on Fairness, Accountability, and Transparency}{June 12--15, 2023}{Chicago, IL, USA}
\acmBooktitle{2023 ACM Conference on Fairness, Accountability, and Transparency (FAccT '23), June 12--15, 2023, Chicago, IL, USA}
\acmDOI{10.1145/3593013.3594028}
\acmISBN{979-8-4007-0192-4/23/06}
% **********************************

%% Keywords. The author(s) should pick words that accurately describe
%% the work being presented. Separate the keywords with commas.
\keywords{algorithmic fairness, dynamic fairness, runtime monitor, online statistical estimator}

%%
%% The code below is generated by the tool at http://dl.acm.org/ccs.cfm.
%%
\begin{CCSXML}
<ccs2012>
   <concept>
       <concept_id>10010147.10010257</concept_id>
       <concept_desc>Computing methodologies~Machine learning</concept_desc>
       <concept_significance>300</concept_significance>
       </concept>
   <concept>
       <concept_id>10003456.10003462</concept_id>
       <concept_desc>Social and professional topics~Computing / technology policy</concept_desc>
       <concept_significance>500</concept_significance>
       </concept>
 </ccs2012>
\end{CCSXML}

\ccsdesc[300]{Computing methodologies~Machine learning}
\ccsdesc[500]{Social and professional topics~Computing / technology policy}
%% **************

	%!TEX root=main.tex

\begin{abstract}
	A machine-learned system that is fair in static decision-making tasks may have biased societal impacts in the long-run. This may happen when the system interacts with humans and feedback patterns emerge, reinforcing old biases in the system and creating new biases. While existing works try to identify and mitigate long-run biases through smart system design, we introduce techniques for monitoring fairness in real time. Our goal is to build and deploy a monitor that will continuously observe a long sequence of events generated by the system in the wild, and will output, with each event, a verdict on how fair the system is at the current point in time. The advantages of monitoring are two-fold. Firstly, fairness is evaluated at run-time, which is important because unfair behaviors may not be eliminated a priori, at design-time, due to partial knowledge about the system and the environment, as well as uncertainties and dynamic changes in the system and the environment, such as the unpredictability of human behavior. Secondly, monitors are by design oblivious to how the monitored system is constructed, which makes them suitable to be used as trusted third-party fairness watchdogs. They function as computationally lightweight statistical estimators, and their correctness proofs rely on the rigorous analysis of the stochastic process that models the assumptions about the underlying dynamics of the system. We show, both in theory and experiments, how monitors can warn us (1) if a bank’s credit policy over time has created an unfair distribution of credit scores among the population, and (2) if a resource allocator’s allocation policy over time has made unfair allocations. Our experiments demonstrate that the monitors introduce very low overhead. We believe that runtime monitoring is an important and mathematically rigorous new addition to the fairness toolbox.
\end{abstract}
	\ifarxiv
		\pagestyle{plain} % adds page numbers, comment out before submitting
		
		\settopmatter{printacmref=false} % Removes citation information below abstract
		\renewcommand\footnotetextcopyrightpermission[1]{} % removes footnote with conference information in first column
	\else
	\fi

\begin{document}
	\tikzset{elliptic state/.style={draw,ellipse}}
	\maketitle
	% !TEX root=main.tex

\section{Introduction}

A majority of works in the fairness literature have considered fairness in static decision making problems, such as classification, regression, etc \cite{feldman2015certifying,hardt2016equality,dwork2012fairness}.
Recent results suggest that fairness itself is not static, but rather dynamic: 
a system that is fair in its static decision-making tasks may become biased in its overall societal impacts  over time \cite{liu2018delayed,heidari2019long,liu2020disparate,mouzannar2019fair,zhang2019group,d2020fairness,zhang2021fairness}.
This happens when the system makes sequential decisions about humans, and every decision of the system is met with some human reaction in return, possibly changing the parameters and the future decisions of the system. 
Such feedback patterns often reinforce historical biases in the dataset and introduce new biases in the society in the long-run as well.
While there are many works that have proposed analysis and mitigation techniques for long-run biases, to our best knowledge, there does not exist any technique that could detect such biases in real-time.
We propose  \emph{runtime monitoring}, as a new addition to the fairness toolbox, for the real-time detection of dynamic social biases in deployed machine-learned decision makers, whose models are unknown and may change over time (e.g., due to retraining, changes in parameters, etc.).

The goal of runtime monitoring is to design a \emph{monitor} which will observe the sequential interactions between the decision-maker and its environment, and, after each observation, will output a quantitative, statistically rigorous estimate of how fair or biased the system is at that point in time.
Unlike  most existing approaches \cite{liu2020disparate,mouzannar2019fair,zhang2019group}, our monitors do not require any assumption or explicit knowledge of the system model.

Monitoring can help us in two ways.
Firstly, by detecting biases in real-time, it can trigger corrective measures or retraining, whenever necessary.
Statically designed fairness interventions are based on an assumed dynamic model of the system.
In practice, models are rarely perfect due to imperfect knowledge of the systems and the involved uncertainties, making it often impossible to predict if a long-run fairness intervention is going to work in practice.
Moreover, the underlying environment conditions may change over time, making static interventions even harder.
Monitoring offers an additional, complementary tool that enables us to close this gap by warning us of the presence of biases in real-time, so that we can adapt our intervention techniques whenever necessary.
There is an analogy to control theory, where it is well-known that closed-loop (feedback) controllers fare much better against modeling uncertainties than open-loop (feed-forward) controllers \cite[Sec.~1.3]{ogata2010modern}. 

The other area where monitoring can help us is in the creation of trusted third-party watchdogs for overseeing the fairness of decision-makers.
They can work neutrally in public interest, since they are by-design independent of the implementation of the system.

Consider the following situation where fairness is dynamic, and we show that monitoring will be useful.
Consider a bank that gives loans to individuals based on their credit scores.
The population is divided into two groups, with one group having higher average credit score than the other.
A policy of the bank that gives loans to the eligible individuals from each group with equal probabilities (equalized opportunity \cite{hardt2016equality}) may seem fair and noble.
However, in doing so, the bank may end up giving more loans to less eligible individuals from the disadvantaged group.
If the credit score distribution of the disadvantaged group is heavily skewed towards a  higher default rate, then there will be many loan defaults, causing a further drop in average credit score of the disadvantaged group \cite{zhang2021fairness}.
For this example, we present a monitor which observes a single long sequence of lending events, consisting of sampling of an applicant, the decision made by the bank on this applicant, and if the loan was granted then whether it was repaid or not.
After each observation, the monitor computes a quantitative statistical estimate of the difference between the average credit scores of the two groups.
It does so by being completely oblivious to the bank's policy and by not assuming any prior knowledge about the humans' behaviors (whether they repay or not).

Now consider the following situation.
It has been shown that voice assistants, such as Amazon Alexa and Google Home, are biased towards the English accents of native speakers, where the native speakers experience significantly higher quality service than the non-native speakers \cite{Harwell2018Alexa}.
This happens when there is an imbalance between group representations in the dataset, with more data available for one demographic group than the other. 
If over time, more and more non-native speakers stop using the service out of dissatisfaction, then the dataset gets more skewed towards the native speakers, intensifying the biases further \cite{hashimoto2018fairness}.
Similar representation-driven biases were reported in other areas as well, such as recommendation systems \cite{chaney2018algorithmic}, credit market \cite{fuster2022predictably}, and crime prediction \cite{ensign2018runaway}.
While, in theory, there are remedies that work if the reactions of the humans can be perfectly predicted, in practice, they may worsen the situation whenever the modeling assumptions do not align well with the true intentions of the humans \cite{zhang2019group}.
This demonstrates that it is difficult to design a static fairness intervention that will always work in the long run. 
Monitoring can help us to, firstly, detect such dynamic biases and warn us in time, and, secondly, to change the interventions whenever necessary.

We consider \emph{time-varying social fairness properties}, as a class of dynamic fairness properties. They can be written as the difference in expected values of a given function over unknown time-varying feature distributions across two demographic groups.
Such properties can capture many existing aspects of long-run fairness properties in the society, such as the time-varying difference in expected credit scores across two groups \cite{zhang2021fairness}, the time-varying difference in group representations \cite{hashimoto2018fairness}, etc.

Our monitors perform statistical estimations to obtain a PAC-style estimate of the value of the social fairness properties in real-time.
We do not make any assumptions about the policies of the already deployed machine-learned agent and the human users (i.e., the environment).
The only assumption we make is that the monitor can observe the features of the selected individual, the actions of the agent, and the reactions of the individual.
Moreover, we assume the availability of a \emph{change function}, such that from each observation the monitor can infer the resulting change in the expected value of the {unknown distribution}.
For instance, in the lending example, we assume the observability of the credit scores and the group memberships of the sampled individuals, the bank's decisions, and the reactions of repaying or defaulting of loans by the individuals.
At any time, if the individual is selected from a group with size $N$, then the change function tells us that a repayment of the loan will increase the credit score of the individual by, say, $1$ point, thereby increasing the average credit score of their group by $\sfrac{1}{N}$.
Similarly, a loan default will decrease the credit score of the individual by, say, $1$ point, thereby decreasing the average credit score of their group by $\sfrac{1}{N}$.
Our monitor observes one long sequence of lending events, and, after each new observation and based on the given history of past lending events and the past valuations of the change function, computes an updated PAC-style estimate of the disparity in average credit scores across the two groups.

Computationally, our monitors are extremely lightweight, and their implementations required only a few lines of code.
Yet, the mathematical analysis of their correctness is nontrivial.
The difficulty stems from the fact that the samples observed on any given sequence are all statistically dependent on each other.
For instance, the probability of sampling an individual with a certain credit score will depend on whether the previous individual who was from the same group and had the same credit score repaid the loan or not. 
As our monitor, we present an unbiased statistical estimator as well as PAC-style bounds for its estimates. 
The bounds are obtained by constructing a martingale from the estimates, analyzing the corresponding martingale difference sequence, and applying suitable concentration inequalities for martingale difference sequences.

We implemented our monitors in a prototype tool.
Using this implementation, we designed monitors for two practical examples from the literature.
The first example concerns the \emph{lending problem} that we discussed earlier, where we monitored, in real-time, to what degree the lending policy of the bank has widened the disparity of average credit scores across the two demographic groups.
The second example is an \emph{attention allocation problem} \cite{d2020fairness}, where incidents keep occurring at every step in multiple locations, and we have a machine-learned allocator for allocating its limited units of attention to the locations to discover the incidents.
The rate at which incidents occur at each location is inversely proportional to the amount of attention allocated to that location in the previous step.
Real-world applications of this example include child services, pest control, etc.
We monitored, in real-time, to what degree the allocator's allocation policy has widened the disparity of discovery probability of incidents among two of the given locations.
Implementations of these systems were already available in the tool \texttt{ml-fairness-gym} \cite{d2020fairness}.
We executed our monitors on the simulation traces of the systems as extracted from \texttt{ml-fairness-gym}. 
We demonstrate that our monitors are able to produce tight statistical estimates of the considered fairness properties in real-time.

We believe that runtime monitors will be an important new addition to the fairness toolbox.
On one hand, they will complement the existing model-based analysis and design tools by checking dynamic fairness in real-time, and helping us to trigger on-demand corrective measures.
On the other hand, they will be useful in building trusted third-party fairness watchdogs.

\ifarxiv
\else
	All the technical proofs have been omitted for the lack of space; they can be found in the longer version of our paper \cite{??}.
\fi

\subsection*{Related Work}

Fairness in automated decision making has become an active field of research in recent years.
Early works only considered fairness in the static decision making settings, where the decision maker needs to be fair with respect to a time-invariant distribution.
Several group fairness \cite{feldman2015certifying,hardt2016equality} and individual fairness \cite{dwork2012fairness} criteria were proposed, and measures for implementing them were invented.
The proposed measures in this setting can be grouped into three categories: 
(a) ones which \emph{pre-process} the training dataset to eliminate historical biases \cite{calders2013unbiased,kamiran2012data,zemel2013learning,gordaliza2019obtaining},
(b) ones which design training algorithms that are more robust to biases (called \emph{in-processing}) \cite{berk2017convex,zafar2017fairness,zafar2019fairness,agarwal2018reductions}, and
(c) ones which \emph{post-process} the decision-maker's output to eliminate biases \cite{hardt2016equality}.

Later, it was observed by many authors that, surprisingly, decision policies that are statically fair may lead to unfair behavior in the sequential setting.
In this regard, the simplest sequential setting studied in the literature is the two-stage one: in the first stage, the agent makes decisions on humans from two groups, which may cause the humans to take certain actions, and the resulting impact on the groups are then examined in the second stage \cite{liu2018delayed,heidari2019long}.
In the more general long-term setting, the agent is allowed to retrain its decision policy over time, which may be affected by a change or bias in the dataset, caused by the reactions of humans to decisions made by the agent in the past.
This closed feedback was shown to self-reinforce biases that were present in the dataset as well as introduce new biases.
Relevant works on the sequential setting can be found in a recent survey \cite{zhang2021fairness}.
While most of the existing works attempt to eliminate biases at design-time and assume information about the model \cite{liu2020disparate,mouzannar2019fair,zhang2019group}, we detect them at runtime with little knowledge about the model.
There are also simulation-based studies which study long-term impacts of static fairness measures \cite{d2020fairness}.
They are also incomparable to our monitoring setup:
in simulations, it is shown how bias changes over time for an assumed model of interactions, whereas we make almost no assumptions on the model and use the concrete measurements to estimate the bias in the system.

Our monitors are designed to operate in a dynamic setting. Hence, static systems or systems where the decisions of the agent do not affect the parameters of the underlying population, which have been studied extensively in the literature (see Mehrabi et.al.~ \cite{mehrabi2021survey}), are a special case of our setting. Therefore, monitoring could be applied. A natural setting would be the deployment of monitors to check whether an agent in a bandit setting is fair \cite{hossain2021fair,chen2020fair}.

Runtime monitoring is a well-studied subject in the area of formal methods in computer science \cite{bartocci2018lectures}.
The goal is to check, at runtime, if an unknown system satisfies or violates a given safety property.
For instance, a monitor may be used to detect traffic congestion in the roads of a smart city \cite{ma2017runtime}, or safety violation of autonomous vehicles \cite{mao2012runtime}.
The outputs of monitors are usually passed to a safety-supervisory control layer, which takes necessary actions to prevent damages, for example through a default fail-safe action \cite{bloem2015shield}.

Unfortunately, a majority of the existing works in runtime monitoring cannot handle statistical properties, such as fairness.
Notable exceptions include the work by Ferr{\`e}re et al.~\cite{ferrere2019monitoring}, which develops efficient techniques for monitoring statistical properties of systems.
However, they do not consider fairness properties.
Moreover, their monitors' outputs are correct only asymptotically, whereas our monitors output PAC-style error bounds for every observed sequence of finite length.

The closest to our work are the papers by Albarghouthi et al.~\cite{albarghouthi2019fairness} and a recent paper by us~\cite{henzinger2023monitoring}.
Albarghouthi et al.~\cite{albarghouthi2019fairness} presented an approach for monitoring fairness in sequential decision-making tasks, which we generalized to monitoring fairness over Markov chains using techniques from both frequentist and Bayesian statistics~\cite{henzinger2023monitoring}.
These works can be used to monitor only group fairness and individual fairness properties in static decision-making problems, whereas we monitor time-varying social fairness properties in dynamic decision-making problems.

There is a body of research that is ideologically similar to ours, and developed sequential statistical tests to evaluate the performance of already deployed machine-learned systems at runtime.
Podkopaev et al.~\cite{podkopaev2021tracking} proposed an algorithm for monitoring the expected loss of a given classifier due to shift in the dataset distribution.
The expected loss is based on the misclassification rate of a classifier, and is uncomparable to fairness properties that we consider in this work.
Waudby-Smith et al.~\cite{waudby2021time} proposed a sequential estimation algorithm that was used to estimate the time-varying average treatment effect (ATE) in a randomized experiment, which gives a measure of the expected difference in outcome between an individual chosen from the population receiving a treatment (like a medical drug that is being tested) and not receiving the treatment.
Although there are some structural similarities between ATEs and fairness properties, they estimate how much the ATE was \emph{on an average until} the present time, whereas we estimate how much fair the system is \emph{at} the present time.
Moreover, their estimates are asymptotically correct, whereas we provide finite-sample correctness guarantees.

	%!TEX root=main.tex

\section{Problem Setup}
\label{sec:general problem}

\subsection{The Sequential Agent-Environment-Interaction Model}

We call a machine-learned decision maker an \emph{agent}, and the population of the subjects of its decisions the \emph{environment}.
For example, in a lending scenario, a bank's machine-learned lending policy is the agent, and the population of the loan applicants is the environment.
We use a setup similar to the work of D’Amour et al.\ \cite{d2020fairness}, where an agent engages in a sequential interaction with its environment, and as a result the parameters of the environment change.
The environment contains a distribution over the individuals, where each individual is represented by a real-valued (scalar) feature of interest, such as their credit score, and a sensitive attribute, such as their ethnicity.  
In this work we only consider fairness properties that depend on the single available feature of the individuals; extension to fairness with respect to feature vectors is left open for future work.
In general, we allow the individuals to have additional features, though they do not influence the fairness.
For simplicity of notation, we suppress such additional unimportant features when considering the individuals.

At each step $t$, the environment samples a \emph{single} individual with feature $X_t$ and the group membership $G_t$, where $X_t$ is a real-valued random variable and $G_t$ is a random variable which is assumed to have a binary support $\set{A,B}$ for simplicity.
We use the shorthand notations $\pr_A(X_t=x_t)$ and $\pr_B(X_t=x_t)$ to denote, respectively,
 the conditional probabilities $\pr(X_t=x_t \mid G_t = A)$ and $\pr(X_t=x_t\mid G_t=B)$.
Moreover, for any two random variables $W$ and $V$ with their respective outcomes $w$ and $v$, we use the shorthand notation $\pr(w\mid v)$ and $\expe(W\mid v)$ instead of $\pr(W=w \mid V=v )$ and $\expe(W\mid V=v)$, respectively.

At time $t$, the agent performs an \emph{action} $Y_t$, which is also treated as a random variable.
Given the agent's action, the environment may react by using its own \emph{reactions}, which we denote using the random variable $Z_t$. 
The randomness in $Y_t$ and $Z_t$ capture the modeling uncertainties, such as unknown factors that influence the agent's actions and unpredictability in the environment's reactions.
In the lending example, the agent's (i.e., the bank's) actions are granting or rejecting the loan to the selected individual, whereas the environment's reactions are repaying or defaulting of the loan by the same individual.
Some problems, such as the attention allocation example, do not require environment's reactions.
(Although, in practice, $Z_t$ may lag from $Y_t$, for simplicity, we assume that they happen at the same time step.)
This completes one round of interaction between the agent and the environment, and a sequence contains many such interaction rounds.

The interactions between the agent and the environment form a sequence of (tuples of) random variables, i.e., a stochastic process $\vec{O} = ((G_t,X_t,Y_t,Z_t))_{t>0}$.
For every $t$, the tuple of concrete values that the random variables take is called an \emph{observation}, denoted as $o_t\coloneqq(g_t,x_t,y_t,z_t)$.
The sequence $\vec{o}_{t}=(o_s)_{s\in[1\twodots t]}=((g_s,x_s,y_s,z_s))_{s\in[1\twodots t]}$ is called an \emph{observation sequence}.

In the process $\vec{O}$, the feature distribution $X_t$ is subject to changes over time; $Y_t$ and $Z_t$ may also change, but that is irrelevant for us. 
We assume that the monitor can infer, from the observations, the resulting change in the current expected value of $X_t$ (given the history of observations).
For instance, in the lending scenario, if at any time the selected individual fails to repay the loan, then the credit score of that individual goes down, and so the distribution of credit scores in the population shifts.
We assume that we can infer the shift in expected credit scores from the lending decision of the bank and the event of repayment/default.
We formalize this in the following.

\begin{assumption}\label{assump:change function}
Runtime monitors have access to a function $\Delta$, called the \emph{change function}, which maps every concrete observation
$o_{t}$ to a change in the expected value of $X_t$, such that for each group $g\in \{A,B\}$, for every time $t$, and for every past sequence of observations $\vec{o}_t$, we have: (i) $\expe_g\left(X_{t+1} \middle| \vec{o}_t \right) = \expe_g\left(X_{t} \middle|\vec{o}_{t-1} \right) + \Delta(o_t)$; (ii) $\expe_g(|X_t|\mid \seq{o}_{t-1})<\infty$; (iii) $X_t$, when centered, is a sub-exponential random variable with parameters $(\sigma^2, \nu)$.
\end{assumption}

Assump.~\ref{assump:change function} imposes mild technical restrictions that are fulfilled by many real-world problems, including the lending example and the attention allocation example that we consider here.
Whenever clear from the context, for simplicity, we write $\Delta_t$ instead of $\Delta(g_t,x_t,y_t,z_t)$.

\subsection{Time-Varying Social Fairness Properties}
Let $f\colon o_t=(g_t,x_t,y_t,z_t)\mapsto \mathbb{R}$ be a function, called the \emph{well-being function}, which is a measure of the well-being of the individual $(g_t,x_t)$ who was subjected to the agent's action $y_t$ to which they reacted with $z_t$.
In the lending example, $f$ maps an observation to the credit score of the selected individual.
In the attention allocation example, $f$ maps an observation to the ratio of the attention (action) to the number of incidents (reaction).

For each group $g$, and for every observation sequence $\vec{o}_{t}$, we define the (group-specific) \emph{expected well-being} as: 
\begin{align}	
	 \omega^g(\vec{o}_t)\coloneqq \expe_g(f(O_t)\mid \vec{o}_{t-1}, y_t, z_t). \label{equ:expected well-being}
\end{align}
Observe that the expectation is with respect to the randomness in the feature distribution $X_t$, which makes $O_t$ also random.
We do not condition on the currently observed feature $x_t$, as it would make the expectation trivially equal to $f(o_t)$.
In other words, the expectation in the well-being is only with respect to the past observations of credit scores in the lending example, and is only with respect to the past observations of incidents in the attention allocation example.

We consider a class of fairness properties, which we call the \emph{time-varying social fairness properties},
defined as the difference in expected well-beings of the two groups for a given observation sequence $\vec{o}_t$:
\begin{align}
	\varphi(\vec{o}_{t}) \coloneqq \omega^A(\vec{o}_t) - \omega^B(\vec{o}_t). \label{equ:general fairness spec}
\end{align}
Time-varying social fairness properties capture many interesting properties that were already studied in the context of sequential decision-making, such as the time-varying disparity in average credit score \cite{liu2018delayed}, time-varying disparity in the discovery probability of incidents \cite{ensign2018runaway,elzayn2019fair}, etc.

In \eqref{equ:general fairness spec}, we present the general class of time-varying fairness properties that we consider, and the exact property will depend on the application and the definition of the function $f$.
For instance, in the lending example, $f(o_t)$ will be independent of $g_t,y_t,z_t$ and will give us the credit score of the individual sampled at time $t$.

We point out that we do not impose any assumption on the agent's and the environment's policies for choosing their respective actions and reactions.
However, following Assumption~\ref{assump:change function}, (re-)actions at each time influence the expected observation at the next step. 
Hence, it is impossible to statically predict the conditional expectation in advance. 
Intuitively, this means that without observing the loan decisions of the bank and the subsequent repayment or default events, we cannot predict what the expected credit score will be at a particular point in the future.

As a result, we cannot statically predict the social fairness in the system in the long-run, even if we knew its initial value.
Thereby, it is only possible to measure social fairness retrospectively, which is what we do using runtime monitoring.
To our best knowledge, no prior work in the fairness literature considered this problem.

\subsection{The Monitoring Problem}
A \emph{monitor} is a function that maps every observation sequence to a real interval, where the output interval computed by the monitor is a PAC-style statistical estimate of the given social fairness property.
We summarize the monitoring problem in the following.

\begin{problem}\label{prob:general problem}
	Let $\vec{O}$ be a stochastic process, $\varphi$ be a social fairness property, and $\delta\in [0,1]$ be a parameter.
	Design a monitor $M$ such that for every time $t$, the following holds:
	\begin{align*}
		\pr\left( \varphi(\vec{O}_{t})  \in M(\vec{O}_{t}) \right) \geq 1-\delta.
	\end{align*}
\end{problem}

The probabilistic uncertainty in the monitor's output is due to the non-availability of the parameters of the initial feature distribution: 
were the initial parameters known to the monitor, at every time, a precise value of the fairness property could be calculated from the net change in the parameters as deduced from the change function. 
On the other hand, a na\"ive PAC estimate of $\varphi(\vec{O}_{t})$ at each time step is also not feasible, since the feature distribution is constantly changing.

For the fixed observation sequence $\vec{o}_{t}$, the estimate $[l,u]=M(\vec{o}_{t})$ is called the $(1-\delta)\cdot 100\%$ \emph{confidence interval} for $\varphi(o_{t})$.
The radius, given by $\varepsilon=0.5\cdot (u-l)$, is called the \emph{estimation error}, and the quantity $1-\delta$ is called the \emph{confidence}.
The estimate gets more precise as the error gets smaller and the confidence gets higher.
For the lending example, Prob.~\ref{prob:general problem} asks us to design a monitor which will observe a sequence of lending events, and, after each observation, will output a $(1-\delta)\cdot 100\%$ confidence interval for the estimated disparity in average credit scores.

While our monitors output \emph{interval estimates} of fairness properties in the form of confidence intervals, internally, they first compute \emph{point estimates} of the \emph{expected feature} of each group $g$ for a given observation sequence $\seq{o}_{t-1}$, defined as:
\begin{align}
	\psi^g(\seq{o}_{t-1}) \coloneqq \expe_g(X_t\mid \seq{o}_{t-1}). \label{equ:expected feature}
\end{align}
Note that the quantity, $\psi^g(\seq{o}_{t-1})$ gives us only the expected feature, which will be an intermediate step for estimating the well-being.
Notice that the expected feature at time $t$ only depends on the past observations until time $t-1$, whereas the well-being at time $t$ requires the action and the reactions, such as units of attentions allocated by the attention allocator at the current time $t$.
A \emph{point estimator} $\hat{E}$ of $X_t$ for a given $\seq{o}_t$ and a given group $g$ is a function $\hat{E}\colon \vec{o}_t\mapsto \mathbb{R}$.
Additionally, $\hat{E}$ will be called \emph{unbiased}, if for every observation sequence $\vec{o}_{t-1}$ that may occur with positive probability, we have:
\begin{align*}
	 \expe_g\left( \hat{E}(\vec{O}_t) \,\middle|\, \vec{o}_{t-1}\right) = \psi^g(\vec{o}_{t-1}).
\end{align*}
Intuitively, unbiasedness guarantees that, for any given history $\vec{o}_{t-1}$ of loan events that may occur with positive probability, the expected credit score of a group at time $t$ will be equal to the expected output of the estimator at time $t$.

While unbiasedness guarantees that the estimator $\hat{E}$'s output coincides with $\psi^g(\vec{o}_{t-1})$ \emph{in expectation}, we also require that the output error remains statistically bounded at all time.
To this end, we bound the estimation error by computing confidence intervals for $\psi^g(\vec{o}_{t-1})$, obtained through application of concentration inequalities around the point estimate.
%}
These confidence intervals of group-specific expected features are then used to obtain confidence intervals for group-specific expected well-beings (i.e., $\omega^g(\vec{o}_t)$), which are then subtracted from each other to finally obtain the output confidence interval of the monitor for the time-varying social fairness property $\varphi(\vec{o}_t)$. 
We illustrate this sequence of steps in Fig.~\ref{fig:operational diagram}, along with references to the sections where they are described.

While we provide a general procedure for estimating $\psi^g(\vec{o}_{t-1})$ in the first step, a general overall estimation procedure for any arbitrary $\varphi(\vec{o}_t)$ is difficult to derive and is left open.
This is because the final confidence interval for $\varphi(\vec{o}_t)$ depends on the structure of the well-being function, which is problem-specific.

As a convention, by ``monitor,'' we will exclusively refer to the final interval estimator of $\varphi(\vec{o}_t)$, though it is not the only interval estimator that we will use.

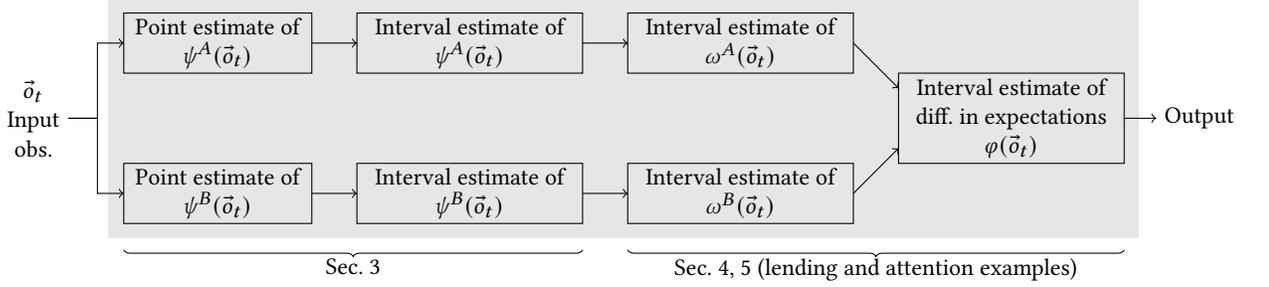
\begin{figure*}
	\begin{tikzpicture}
		\fill [black!10!white]	(-0.2,-2.2)	rectangle	(13.5,1.0);
		
		\draw	(0,0)	rectangle	node	[align=center]	{Point estimate of\\ $\psi^A(\vec{o}_t)$}	(2.5,0.8);
		\draw	(3.1,0)	rectangle	node	[align=center]	{Interval estimate of\\ $\psi^A(\vec{o}_t)$}	(6.1,0.8);
		\draw	(6.7,0)	rectangle	node	[align=center]	{Interval estimate of\\ $\omega^A(\vec{o}_t)$}	(9.7,0.8);
		
		\draw	(0,-2)	rectangle	node	[align=center]	{Point estimate of\\ $\psi^B(\vec{o}_t)$}	(2.5,-1.2);
		\draw	(3.1,-2)	rectangle	node	[align=center]	{Interval estimate of\\ $\psi^B(\vec{o}_t)$}	(6.1,-1.2);
		\draw	(6.7,-2)	rectangle	node	[align=center]	{Interval estimate of\\ $\omega^B(\vec{o}_t)$}	(9.7,-1.2);
		
		\draw	(10.3,-1.2)	rectangle	node	[align=center]	{Interval estimate of\\ diff.\ in expectations \\ $\varphi(\vec{o}_t)$}	(13.3,0);
		
		\draw[->]	(2.5,0.4)	--	(3.1,0.4);
		\draw[->]	(6.1,0.4)	--	(6.7,0.4);
		\draw[->]	(9.7,0.4)	--	(10.3,-0.2);
		\draw[->]	(2.5,-1.6)	--	(3.1,-1.6);
		\draw[->]	(6.1,-1.6)	--	(6.7,-1.6);
		\draw[->]	(9.7,-1.6)	--	(10.3,-1);
		
		\node[align=center]	(A)	at	(-1.2,-0.6)	{$\vec{o}_t$\\ Input\\ obs.};

		\draw[->]	(A.east)	--	(-0.35,-0.6)	--	(-0.35,0.4)	--	(0,0.4);
		\draw[->]	(A.east)		--	(-0.35,-0.6)	--	(-0.35,-1.6)	--	(0,-1.6);
		
		\node	(B)	at	(14.3,-0.6)	{Output};
		\draw[->]	(13.3,-0.6)	--	(B.west);
		
		\draw[decoration={brace,mirror,raise=10pt},decorate]	(0,-2)		--	node	[below=10pt]		{Sec.~\ref{sec:interval estimator expected value}}	(6.1,-2);
		\draw[decoration={brace,mirror,raise=10pt},decorate]	(6.7,-2)		--	node	[below=10pt]		{Sec.~\ref{sec:lending}, \ref{sec:attention} (lending and attention examples)}	(13.3,-2);
	\end{tikzpicture}
	\caption{The operational diagram of the monitor and the sections of this paper where the components are presented.
	}
	\label{fig:operational diagram}
\end{figure*}

	% !TEX root=main.tex

\section{An Interval Estimator for the Time-Varying Expected Feature}
\label{sec:interval estimator expected value}

For any group $g$ and an arbitrary observation $\vec{o}_t$, 
in this section, we construct an interval estimator for the group-specific expected feature $\psi^g(\vec{o}_{t-1})=\expe_g(X_t\mid \seq{o}_{t-1})$ (defined in \eqref{equ:expected feature}).
As $\psi^g(\vec{o}_{t-1})$ does not depend on the fairness metric, hence the estimator is not tied to the fairness monitoring problem and can have other use. 
The estimator for $\psi^g(\vec{o}_{t-1})$ will be the essential component of the monitors (i.e., the interval estimator for the social fairness property $\varphi(\vec{o}_t)$), which will be presented later in the respective example sections.
The interval estimate of $\psi^g(\vec{o}_{t-1})$ is obtained by first computing a point estimate of it, and then using concentration inequalities to bound the estimation error.
The first part, i.e., the point estimation part, is explained using a coin-toss analogy.

\subsection{Warm-Up: A Coin-Toss Puzzle}

Suppose we have a coin with unknown initial bias.
After each toss, its bias changes in a predefined manner as a function of the outcome of the previous toss.
How to compute a point estimate of the bias of the coin at any given point in time, based on the given sequence of the observed past outcomes?

Let, us formalize the problem first.
Let, at time $t$, the probability that the coin shows $1$ (heads) be $p_t$,  and the probability that the coin  shows $0$ (tails) be $1-p_t$.
The toss outcome at time $t$ is denoted using the random variable $X_t$.
Let $p_1$ be the initial bias which is fixed but unknown.
After every toss, the coin's bias shifts by a constant $\epsilon\in [0,1]$, and the direction of the shift depends on the outcome of the previous toss:
if we see $X_t=1$ at time $t$, the bias shifts to $p_{t+1} = p_t+\epsilon$ in the next step, whereas if we see $X_t=0$ at time $t$, the bias shifts to $p_{t+1}=p_t-\epsilon$.
Let's assume for simplicity that the true initial bias is not too close to the boundaries $0$ and $1$, and moreover $\epsilon$ is small enough and the observed sequence is short enough that the true bias never reaches the boundaries.
We can succinctly write $p_{t+1} = p_t + \Delta_t$, where $\Delta_t = x_t\epsilon + (1-x_t)(-\epsilon)$ is the change of bias recorded at time $t$.
Then the estimation problem asks: for any given sequence $x_1,\ldots,x_t$ of toss outcomes, how to compute a point estimate of the unknown parameter $p_t$, i.e., the value of 
$\expe(X_t \mid x_1, \dots, x_{t-1})$?

Observe that even if we knew $p_1$, without seeing the observations $x_1,\ldots,x_t$, the value of $p_t$ would only be probabilistically known.
Thus a static analysis would not be possible even in that case.

For the trivial case of $\epsilon=0$, i.e., when we know that the bias remains fixed at $p_1$ throughout, we can compute an unbiased point estimate of $p_t=p_1$ by simply computing the empirical average of the observed sequence as $\hat{p}_t = \frac{1}{t}\sum_{s=1}^t x_s$. 

When $\epsilon > 0$, we show 
\ifarxiv
	(see App.~\ref{lemma:estimator_unbiased}) 
\else
	(proof in the longer version \cite{??})
\fi
that, for the given observation sequence $x_1,\ldots,x_t$, the following is an unbiased point estimator of $p_1$:
\begin{align}
	\hat{p}_1 = \frac{1}{t}\sum_{s=1}^t \left( x_s - \sum_{r=1}^{s-1} \Delta_r \right). \label{equ:estimating time-varying bias in a coin time 1}
\end{align} 
Once an estimate of $p_1$ is known, we can obtain an unbiased point estimate for $\expe(X_t \mid x_1, \dots, x_{t-1})$ by accounting for the observed changes up to time $t$:
\begin{align}
	\hat{p}_t = \frac{1}{t}\sum_{s=1}^t \left( x_s - \sum_{r=1}^{s-1} \Delta_r \right) + \sum_{s=1}^{t-1} \Delta_s. \label{equ:estimating time-varying bias in a coin}
\end{align} 

While there are techniques to estimate the non-time-varying mean of a statistical process from sequential observations \cite{waudby2020estimating}, to our best knowledge, the problem we consider and the solution we propose are completely novel.

\subsection{The Interval Estimator for Expected Features}

Now we extend the coin-toss analogy to the point estimation, followed by the interval estimation, of the expected features $\psi^g(\cdot)$.
For simpler notation, in the rest of this section, we assume that there is only one group, so that all the past observations correspond to that group only.
We drop the superscript $g$ from the property $\psi^g(\cdot)$.

Drawing comparison with the coin-toss setting,
the bias $p_t$ of the coin at every $t$ is now replaced by $\expe(X_t\mid \vec{o}_{t-1})$, and the bias shift $\Delta_t$ of the coin at time $t$ is now replaced by the value of the change function $\Delta(o_{t})$.
We make these adjustments in the point estimator of bias of the coin in \eqref{equ:estimating time-varying bias in a coin}, and obtain the following point estimator for $\psi^g(\vec{o}_{t-1})$:
\begin{align*}
	\hat{E}(\vec{o}_t) \coloneqq \frac{1}{t} \sum_{s=1}^{t} \left( x_s - \sum_{r=1}^{s-1} \Delta(o_r)\right) + \sum_{s=1}^{t-1} \Delta(o_s).
\end{align*}
From this point estimator, we can obtain an interval estimator $\psi(\vec{o}_{t-1})$ by applying a suitable version of Azuma-style inequality (see Theorem \ref{theorem:estimator_concentration_general}) to compute a $(1-\delta)\cdot 100\%$ confidence interval around $\hat{E}(\vec{o}_t)$ for any given $\delta$. 
We call the interval estimator of $\psi(\cdot)$ \ExpMon, and present its pseudocode in Alg.~\ref{alg:general monitor:single random variable}.
In the function $\mathtt{Init}$, the monitor first initializes various internal registers.
After each new observation $(g,x,y,z)$, the monitor invokes the function $\mathtt{Compute}$ to compute a new $(1-\delta)\cdot 100\%$ confidence interval for the expected features $\psi^g(\vec{o}_{t-1})$.

\begin{algorithm*}
	\caption{\ExpMon}
	\label{alg:general monitor:single random variable}
	\begin{minipage}{0.45\textwidth}
		\begin{algorithmic}[1]
			\Function{$\mathtt{Init}$}{$\Delta,\delta,\sigma,\nu$}
				\State $\Delta \gets \Delta$ \Comment{change function}
				\State $\delta \gets \delta$ \Comment{target confidence level}
				\State $(\sigma,\nu)\gets (\sigma,\nu)$ \Comment{sub-exp. parameters}
				\State $t\gets 0$ \Comment{clock counter}
				\State $\hat{E}_1\gets 0$ \Comment{curr.\ estimate of the initial expected value}
				\State $d \gets 0$ \Comment{net distribution shift}
			\EndFunction
		\end{algorithmic}
	\end{minipage}
	\begin{minipage}{0.45\textwidth}
		\begin{algorithmic}[1]
			\Function{$\mathtt{Compute}$}{$x,y$}
				\State $t\gets t+1$
				\State $\hat{E}_1\gets \frac{1}{t}\left(\hat{E}\cdot (t-1) + (x-d)\right)$ \Comment{update}
				\State $\hat{E} \gets \hat{E}_1+d$ \Comment{curr.\ estimate}
				\State $d\gets d+\Delta(x,y)$  \Comment{update net shift}
				\State $\varepsilon \gets \mathit{Azuma}(t,\delta,\sigma,\nu)$ \Comment{see Theorem \ref{theorem:estimator_concentration_general}}
				\State \Return $[\hat{E}-\varepsilon, \hat{E}+\varepsilon]$
			\EndFunction
		\end{algorithmic}
	\end{minipage}
\end{algorithm*}

\subsection{Soundness of the Interval Estimator}

For soundness, we need to show that (a) the point estimator of $\psi(\vec{o}_{t-1})$ is unbiased,  
and that (b) the interval estimate computed using the Azuma-style inequality is statistically sound.
Claim~(a) follows from the definition of unbiasedness. 
For claim~(b), we show that the sequence of the expected point estimates, conditioned on the increasingly longer sequence of prefixes of a given observation, is a Doob martingale.
Furthermore, we show that the difference between any two consecutive elements of the Doob martingale is a sub-exponential random variable, which enables us to use an Azuma-style concentration inequality to compute the desired confidence interval from the point estimate. 
Below, we present the highlights of the soundness proof 
\ifarxiv
	(details in App.~\ref{sec:proofs_general_monitor}),
\else
	(details in the longer version~\cite{??}),
\fi
 which can be skipped over without any loss of continuity.

\smallskip
\noindent\textbf{Unbiasedness:}
To demonstrate that the estimator $\hat{E}(\vec{O}_t)$ is unbiased, we first show: 

\begin{lemma}
	\label{lemma:main unbiased}
	The estimator $\hat{E}_1(\vec{o}_t)\coloneqq \frac{1}{t} \sum_{s=1}^{t} \left( x_s - \sum_{r=1}^{s-1} \Delta(o_r)\right)$ is an unbiased estimator of $\expe\left(X_1\right)$.
\end{lemma}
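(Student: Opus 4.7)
The plan is to exploit Assumption~\ref{assump:change function}(i), which gives us a recursion for the conditional expectation of the feature random variable, and combine it with the tower property of conditional expectation. The intuition is that the inner sum $\sum_{r=1}^{s-1} \Delta(o_r)$ is precisely the cumulative drift that, by Assumption~\ref{assump:change function}(i), separates $\mathbb{E}(X_s \mid \vec{o}_{s-1})$ from $\mathbb{E}(X_1)$; subtracting it off should therefore produce a term whose unconditional expectation equals $\mathbb{E}(X_1)$.

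First, I would iterate the identity $\mathbb{E}(X_{t+1} \mid \vec{o}_t) = \mathbb{E}(X_t \mid \vec{o}_{t-1}) + \Delta(o_t)$ from Assumption~\ref{assump:change function}(i) by a short telescoping induction on $s$ to obtain
\begin{align*}
\mathbb{E}\bigl(X_s \,\big|\, \vec{O}_{s-1}\bigr) = \mathbb{E}(X_1) + \sum_{r=1}^{s-1} \Delta(O_r),
\end{align*}
where for $s=1$ the empty history $\vec{O}_0$ yields the unconditional expectation $\mathbb{E}(X_1)$. The finiteness condition in Assumption~\ref{assump:change function}(ii) ensures all conditional expectations appearing here are well-defined.

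Next, I would compute $\mathbb{E}(\hat{E}_1(\vec{O}_t))$ by linearity and the tower property:
\begin{align*}
\mathbb{E}\bigl(\hat{E}_1(\vec{O}_t)\bigr) = \frac{1}{t}\sum_{s=1}^{t} \mathbb{E}\left( \mathbb{E}\Bigl( X_s - \textstyle\sum_{r=1}^{s-1} \Delta(O_r) \,\Big|\, \vec{O}_{s-1} \Bigr) \right).
\end{align*}
Since $\sum_{r=1}^{s-1} \Delta(O_r)$ is $\vec{O}_{s-1}$-measurable, it pulls out of the inner conditional expectation, and the telescoping identity from the previous step causes it to cancel exactly with the drift embedded in $\mathbb{E}(X_s \mid \vec{O}_{s-1})$. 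This leaves $\mathbb{E}(X_1)$ inside the sum for every $s$, and averaging over $t$ terms yields the claim.

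I do not anticipate a serious obstacle: the argument is essentially a careful bookkeeping exercise once the telescoping identity is in hand. The only subtlety is to make sure the conditioning is set up correctly so that the drift terms are indeed measurable with respect to the relevant sigma-algebra and can be pulled out; this is immediate since each $\Delta(O_r)$ for $r \leq s-1$ is a function of the observations up to time $s-1$. A minor point worth flagging is the base case $s=1$, where the inner sum is empty and one must read $\mathbb{E}(X_1 \mid \vec{O}_0)$ as $\mathbb{E}(X_1)$ by convention.
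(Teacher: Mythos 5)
Your proof is correct and follows essentially the same route as the paper's: telescope the change-function recursion to get $\expe(X_s \mid \vec{O}_{s-1}) = \expe(X_1) + \sum_{r=1}^{s-1}\Delta(O_r)$, then condition each summand on $\vec{O}_{s-1}$ so that the drift term (being measurable with respect to the history) cancels. The paper merely writes the tower-property step out explicitly as a sum over the joint distribution of discrete histories, whereas you invoke it abstractly; the substance is identical.
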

Then we utilize the change function $\Delta$ and the definition of conditional expectation to transfer the result to $\hat{E}(\vec{O}_t)$.
\begin{align}
	&\expe\left(X_t \;\middle|\; \vec{O}_{t-1}\right) 
	= 
	\expe(X_1)+ \sum_{s=1}^{t-1} \Delta(O_s) \\
	&= 
	\expe\left(\hat{E}_1(\vec{O}_t)\right) + \sum_{s=1}^{t-1} \Delta(O_s) 
	=
	\expe\left(\hat{E}(\vec{O}_t)\;\middle|\; \vec{O}_{t-1} \right)
\end{align}

\smallskip
\noindent\textbf{Proving concentration around the mean using martingales:}
To demonstrate that the estimator $\hat{E}(\vec{O}_t)$ concentrates around its mean, we construct the \emph{Doob martingale}: $(\expe( \hat{E}_1(\vec{O}_t)\mid \vec{O}_s))_{s\in [0\twodots t]}.$
Intuitively, this martingale is a step-by-step approximation process. That is, it starts from the quantity we want to estimate, i.e.\ $\expe(\hat{E}_1(\vec{O}_t))$ which due to Lemma \ref{lemma:main unbiased} is equal to $\expe(X_1)$, and ends at $\expe( \hat{E}_1(\vec{O}_t) \mid \vec{O}_t)$ which given the definition of the conditional expectation is the complete random variable, i.e.\ our estimator $\hat{E}_1(\vec{O}_t)$.

\smallskip
\noindent\textbf{Applying an Azuma-style concentration inequality:}
To bound the distance between the first and last step of the martingale we want to apply some form of Azuma-style concentration inequality. However, this requires knowledge about the behavior of the difference between two consecutive martingale steps.

\begin{lemma}
	\label{lemma:main martingale difference}
	The martingale difference for any $s\in[1\twodots t]$
	\begin{align*}
		\expe\left( \hat{E}_1(\vec{O}_t)\; \middle| \; \vec{O}_{s+1}\right) - \expe\left( \hat{E}_1(\vec{O}_t)\; \middle| \; \vec{O}_{s}\right)=
		\frac{1}{t}\left(X_{s+1} - \expe\left( X_{s+1}\;  \middle| \; \vec{O}_s\right)\right)
	\end{align*}
	and therefore it is a sub-exponential random variable with parameter $(\frac{\sigma^2}{t^2}, \frac{\nu}{t})$.
\end{lemma}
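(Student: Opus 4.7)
The plan is to compute the conditional expectations $\expe(\hat{E}_1(\vec{O}_t)\mid \vec{O}_{s+1})$ and $\expe(\hat{E}_1(\vec{O}_t)\mid \vec{O}_s)$ by decomposing $\hat{E}_1$ term by term and then take the difference, so that almost every summand cancels. To streamline the bookkeeping, I would introduce the shorthand $Y_u \coloneqq X_u - \sum_{r=1}^{u-1} \Delta(O_r)$, so that $\hat{E}_1(\vec{O}_t) = \frac{1}{t}\sum_{u=1}^{t} Y_u$. The key observation, obtained from Assump.~\ref{assump:change function} by induction on $u$, is that $\expe(X_u\mid \vec{O}_{u-1}) = \expe(X_1) + \sum_{r=1}^{u-1}\Delta(O_r)$, hence $\expe(Y_u\mid \vec{O}_{u-1}) = \expe(X_1)$ and, by the tower property, $\expe(Y_u\mid \vec{O}_v) = \expe(X_1)$ for every $v < u$.

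Next I would split the telescoping difference into three regimes. For $u \leq s$, both $X_u$ and $\sum_{r=1}^{u-1}\Delta(O_r)$ are $\vec{O}_s$-measurable, so $\expe(Y_u \mid \vec{O}_{s+1}) = \expe(Y_u\mid \vec{O}_s) = Y_u$ and the contribution is zero. For $u = s+1$, $Y_{s+1}$ is $\vec{O}_{s+1}$-measurable, so $\expe(Y_{s+1}\mid \vec{O}_{s+1}) = Y_{s+1}$, while $\expe(Y_{s+1}\mid \vec{O}_s) = \expe(X_1)$ by the observation above. Thus the $u=s+1$ term contributes $\frac{1}{t}(Y_{s+1} - \expe(X_1))$, which rewrites as $\frac{1}{t}(X_{s+1} - \expe(X_{s+1}\mid \vec{O}_s))$ using the identity $\expe(X_{s+1}\mid \vec{O}_s) = \expe(X_1) + \sum_{r=1}^{s}\Delta(O_r)$. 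For $u > s+1$, the tower property together with $\expe(Y_u\mid \vec{O}_{u-1}) = \expe(X_1)$ yields $\expe(Y_u\mid \vec{O}_{s+1}) = \expe(X_1) = \expe(Y_u\mid \vec{O}_s)$, so these contributions also vanish. Summing the three regimes gives the claimed identity for the martingale difference.

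For the sub-exponential claim, I would invoke Assump.~\ref{assump:change function}(iii) in its conditional form: given $\vec{O}_s$, the centered variable $X_{s+1} - \expe(X_{s+1}\mid \vec{O}_s)$ is sub-exponential with parameters $(\sigma^2, \nu)$. Scaling by $\frac{1}{t}$ multiplies the variance proxy by $\frac{1}{t^2}$ and the tail parameter by $\frac{1}{t}$, since if $\expe[e^{\lambda Z}\mid \vec{O}_s] \leq e^{\lambda^2 \sigma^2 /2}$ for $|\lambda| < 1/\nu$, then substituting $\lambda/t$ yields $\expe[e^{\lambda Z/t}\mid \vec{O}_s] \leq e^{\lambda^2 \sigma^2/(2t^2)}$ on the range $|\lambda| < t/\nu$. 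This gives parameters $(\sigma^2/t^2, \nu/t)$ as required.

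The bulk of the argument is routine once $Y_u$ is introduced; the only nontrivial step is the $u > s+1$ case, where one has to resist the temptation to expand $Y_u$ directly (since it involves $\Delta(O_r)$ for $r > s+1$, which are not $\vec{O}_{s+1}$-measurable) and instead exploit the tower property together with the fact that $\expe(Y_u\mid \vec{O}_{u-1})$ collapses to the constant $\expe(X_1)$. This is where I expect most of the care to be needed in writing up the proof.
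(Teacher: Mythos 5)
Your proposal is correct and follows essentially the same route as the paper's proof: the key identity $\expe(X_u\mid \vec{O}_{u-1})=\expe(X_1)+\sum_{r=1}^{u-1}\Delta(O_r)$ is exactly the paper's supporting lemma on conditional expectations, and your three-regime split of the sum at $u\leq s$, $u=s+1$, and $u>s+1$ matches the paper's splitting of the sums at $k$ and $k+1$, with the same tower-property argument for the later terms and the same scaling argument for the sub-exponential parameters.
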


Hence, we can bound the probability that $\expe(X_1)$ is within an $\varepsilon>0$ interval around the estimator $\hat{E}_1(\vec{O}_{t})$, i.e.\ $\pr(| \expe( \hat{E}_1(\vec{O}_t)) - \expe(X_1)| \geq \varepsilon ) $. This directly translates into a bound of the difference between $\expe( X_t\mid \vec{O}_{t-1}  )$ and $\hat{E}( \vec{O}_{t})$,
\begin{align*}
	&\pr\left(\left| \expe\left( \hat{E}_1(\vec{O}_t)\right) - \expe(X_1) + \sum_{s=1}^{t-1} \Delta(O_s)-\Delta(O_s)\right| \geq \varepsilon \right) \\
	&= \pr\left(\left| \expe\left( \hat{E}(\vec{O}_t)\right)  - \expe\left( X_t\; \middle|\; \vec{O}_{t-1}  \right)\right| \geq \varepsilon \right)
\end{align*}
Thus we finally obtain the soundness theorem of our expected value estimator.

\begin{theorem}
    \label{theorem:estimator_concentration_general}
    Let $\delta\in [0,1]$.
	Let $\vec{O}$ be a stochastic process as specified in Assumption \ref{assump:change function}.
	For every time step $t$,
	\begin{align*}
        \left| \psi(\vec{O}_{t-1})-\hat{E}(\vec{O}_t) \right| \leq  \max\left\lbrace 
		\sqrt{\frac{2\sigma^2}{t} \cdot \log\left( \frac{2}{\delta}\right)} ,
		\frac{2\nu}{t}\log\left( \frac{2}{\delta}\right) \right\rbrace
    \end{align*}
	holds with probability $1-\delta$. We define $\mathit{Azuma}(t,\delta,\sigma^2,\nu)$ to refer to this bound.
\end{theorem}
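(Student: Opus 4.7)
The plan is to reduce the claimed concentration bound on $|\psi(\vec{O}_{t-1}) - \hat{E}(\vec{O}_t)|$ to a concentration bound on the Doob martingale that the paper has already set up. First, I would argue that since $\hat{E}(\vec{O}_t) = \hat{E}_1(\vec{O}_t) + \sum_{s=1}^{t-1}\Delta(O_s)$ and, by Assumption~\ref{assump:change function}, $\psi(\vec{O}_{t-1}) = \expe(X_1) + \sum_{s=1}^{t-1}\Delta(O_s)$, the difference $\psi(\vec{O}_{t-1}) - \hat{E}(\vec{O}_t)$ equals $\expe(X_1) - \hat{E}_1(\vec{O}_t)$, and by Lemma~\ref{lemma:main unbiased} this is precisely $\expe(\hat{E}_1(\vec{O}_t)) - \hat{E}_1(\vec{O}_t)$. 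Hence it suffices to bound the deviation of $\hat{E}_1(\vec{O}_t)$ from its mean.

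Next, I would invoke the Doob martingale $M_s \coloneqq \expe(\hat{E}_1(\vec{O}_t) \mid \vec{O}_s)$ for $s \in [0\twodots t]$, whose endpoints are $M_0 = \expe(\hat{E}_1(\vec{O}_t))$ and $M_t = \hat{E}_1(\vec{O}_t)$. Thus the deviation I must control is $M_t - M_0 = \sum_{s=0}^{t-1} D_s$, where $D_s = M_{s+1} - M_s$ is the martingale difference. By Lemma~\ref{lemma:main martingale difference}, each $D_s$ is a sub-exponential random variable with parameters $(\sigma^2/t^2, \nu/t)$, conditioned on the past.

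The remaining step is to apply a Bernstein/Azuma-style concentration inequality for sums of conditionally sub-exponential martingale differences. Such an inequality (see, e.g., the standard Bernstein bound for sub-exponential martingales) yields, for any $\varepsilon > 0$,
\begin{align*}
\pr\left(\left|\sum_{s=0}^{t-1} D_s\right| \geq \varepsilon\right) \leq 2\exp\left(-\min\left\{\frac{\varepsilon^2}{2 t \cdot \sigma^2/t^2},\;\frac{\varepsilon}{2\nu/t}\right\}\right) = 2\exp\left(-\min\left\{\frac{t\varepsilon^2}{2\sigma^2},\;\frac{t\varepsilon}{2\nu}\right\}\right).
\end{align*}
Setting the right-hand side equal to $\delta$ and solving in each regime separately gives the two candidate radii $\varepsilon_1 = \sqrt{(2\sigma^2/t)\log(2/\delta)}$ (sub-Gaussian regime) and $\varepsilon_2 = (2\nu/t)\log(2/\delta)$ (sub-exponential regime). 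Taking the maximum of these two yields a radius that is valid in both regimes simultaneously, which is exactly the bound stated in the theorem.

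The main obstacle I anticipate is the correct invocation of the sub-exponential Azuma inequality in a \emph{conditional} form: the parameters $(\sigma^2/t^2, \nu/t)$ in Lemma~\ref{lemma:main martingale difference} must hold for $D_s$ given $\vec{O}_s$ (not merely marginally) so that the standard proof via bounding the conditional moment generating function $\expe(\exp(\lambda D_s)\mid \vec{O}_s)$ and iterating the tower property goes through. Verifying this conditional sub-exponentiality reduces to Assumption~\ref{assump:change function}(iii) applied to each $X_{s+1}$ conditional on $\vec{o}_s$, which is a mild technical check but is the step that must be done carefully; once it is in place, the union of the two regime-specific Chernoff bounds produces the claimed maximum form.
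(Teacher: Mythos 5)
Your proposal is correct and follows essentially the same route as the paper: reduce the deviation of $\hat{E}$ from $\psi(\vec{O}_{t-1})$ to the deviation of $\hat{E}_1$ from $\expe(X_1)$ via the change-function decomposition and unbiasedness, control the latter through the Doob martingale and the sub-exponential martingale differences of Lemma~\ref{lemma:main martingale difference}, apply the Bernstein/Azuma tail bound, and invert for $\varepsilon$ in the two regimes. Your remark that the sub-exponential parameters must hold for $D_s$ \emph{conditionally} on $\vec{O}_s$ is exactly the right technical caveat, and it is precisely what Assumption~\ref{assump:change function}(iii) (centering $X_{s+1}$ by $\expe(X_{s+1}\mid\vec{O}_s)$) supplies.
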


	% !TEX root=main.tex

\section{A Dynamic Lending Problem}
\label{sec:lending}

Now we present a monitoring algorithm for the lending example that we took from the literature \cite{liu2018delayed}.
Suppose we have a bank that lends money to individuals by taking into account their credit score and group membership information.
Every individual has a credit score that may change over time, and let $[0\twodots c_{\mathsf{max}}]$ be the set of all possible credit scores of every individual.
Also, the population is divided into groups $A$ and $B$, as usual, and let the number of individuals in the groups be $N_A$ and $N_B$, respectively. 
At every step, the bank gets the credit score and the group information of a randomly chosen individual, and decides whether to accept or reject the loan application.
If the loan is granted and is subsequently repaid, then the respective individual's credit score increases by $1$, provided her initial credit score was smaller than $c_{\mathsf{max}}$.
On the other hand, if the loan is granted but defaulted, then the credit score decreases by $1$, provided her initial credit score was greater than $0$.
If the loan is rejected, then the individual's credit score remains unchanged.
We want to monitor, after each observation, if the bank's policy leads to unfair distribution of the expected credit score among the individuals of the two groups $A$ and $B$.

\subsection{Problem Formulation}

We assume a uniform distribution over the set of individuals at every time step.
Given a (random) individual with features $X_t$ and group $G_t$ sampled uniformly at time $t$, the bank's (random) action $Y_t$ of rejecting or accepting the individual is given by $Y_t=0$ or $Y_t=1$, respectively.
Once the bank has chosen an action, the individual reacts as follows:
if $Y_t=0$ then the individual's action is immaterial, and if $Y_t=1$ then the individual performs (random) action $Z_t$, i.e., they can pick either $Z_t=1$ or $Z_t = 0$ denoting, respectively, whether they repay the loan or not.
The resulting change in the distribution is specified using the change function $\Delta$ defined as below:
\begin{align*}
	\Delta(x_t,g_t,y_t,z_t) = \begin{cases}
								+\frac{1}{N_{g_t}} & \text{if } c<c_{\mathrm{max}},\, y_t = 1,\, z_t = 1 ,\\
								-\frac{1}{N_{g_t}} & \text{if } c>0,\, y_t = 1,\, z_t = 0,\\
								0				& \text{otherwise.}
							\end{cases}
\end{align*}
That is, if the loan is granted and the individual repays the loan, the expected credit score increases by $\frac{1}{N_{g_t}}$; if the loan is granted and the individual fails to repay the loan, the expected credit score decreases by $\frac{1}{N_{g_t}}$; otherwise the expected credit score remains the same.

The well-being function $f$ maps to the credit score (i.e., the feature itself), and the time-varying social fairness criteria is given by the disparity in expected credit scores of the two groups, i.e., 
\begin{align*}
	\varphi_t = \expe_A(X_{t} \mid \vec{o}_{t-1}) - \expe_B(X_{t}\mid \vec{o}_{t-1}). 
\end{align*}

\subsection{The Runtime Monitor}

The monitor for the lending problem, called \textsf{LendingMonitor}, is shown in Alg.~\ref{alg:lending monitor}.
Internally, \textsf{LendingMonitor} runs two \ExpMon (see Alg.~\ref{alg:general monitor:single random variable}) sub-monitors $M_A$ and $M_B$ in parallel, which compute the interval estimates for the expected credit scores of the two groups respectively.
After each new observation, depending on the group information of the selected individual, the respective monitor's \texttt{Compute} function is invoked, and a new estimate for that group is computed.
This is possible because $X_t$ is a bounded random variable, making it sub-gaussian with parameter $c_{\mathsf{max}}^2$ (which implies it is sub-exponential with $\nu=0$).
The output of \textsf{LendingMonitor} is the difference between the two interval estimates computed by $M_A$ and $M_B$.
Observe that each of $M_A$ and $M_B$ uses higher confidence level $1-\frac{\delta}{2}$, so that the final confidence of the output estimate becomes $1-\delta$, after applying the union bound. 
For simplicity, we chose union bound to compute the overall confidence, however, tighter interval estimates can be computed by observing that the group-specific stochastic processes are statistically independent, thereby allowing us to use the sharper bounds from the Hoeffding's inequality (see \cite[p.~24]{wainwright2019high}).

\begin{algorithm}
	\caption{\textsf{LendingMonitor}}
	\label{alg:lending monitor}
	\begin{minipage}{0.55\textwidth}
		\begin{algorithmic}[1]
			\Function{$\mathtt{Init}$}{$\Delta,\delta,c_{\mathsf{max}}$}
				\State $M_A \gets \ExpMon.\mathtt{Init}(\Delta,\frac{\delta}{2},c_{\mathsf{max}},0)$
				\State $M_B \gets \ExpMon.\mathtt{Init}(\Delta,\frac{\delta}{2},c_{\mathsf{max}},0)$
				\State $\overline{E}_A \gets 0$
				\State $\overline{E}_B \gets 0$
			\EndFunction
		\end{algorithmic}
	\end{minipage}
	\begin{minipage}{0.4\textwidth}
		\begin{algorithmic}[1]
			\Function{$\mathtt{Compute}$}{$x,g,y,z$}
				\State $\overline{E}_g\gets M_g.\mathtt{Compute}(x,g,y,z) $
				\State $\overline{\varphi} \gets \overline{E}_A - \overline{E}_B$ \Comment{interval difference}
				\State \Return $\overline{\varphi}$
			\EndFunction
		\end{algorithmic}
	\end{minipage}
\end{algorithm}

\subsection{Experimental Outcome}

\begin{figure*}
	\includegraphics[scale=0.3]{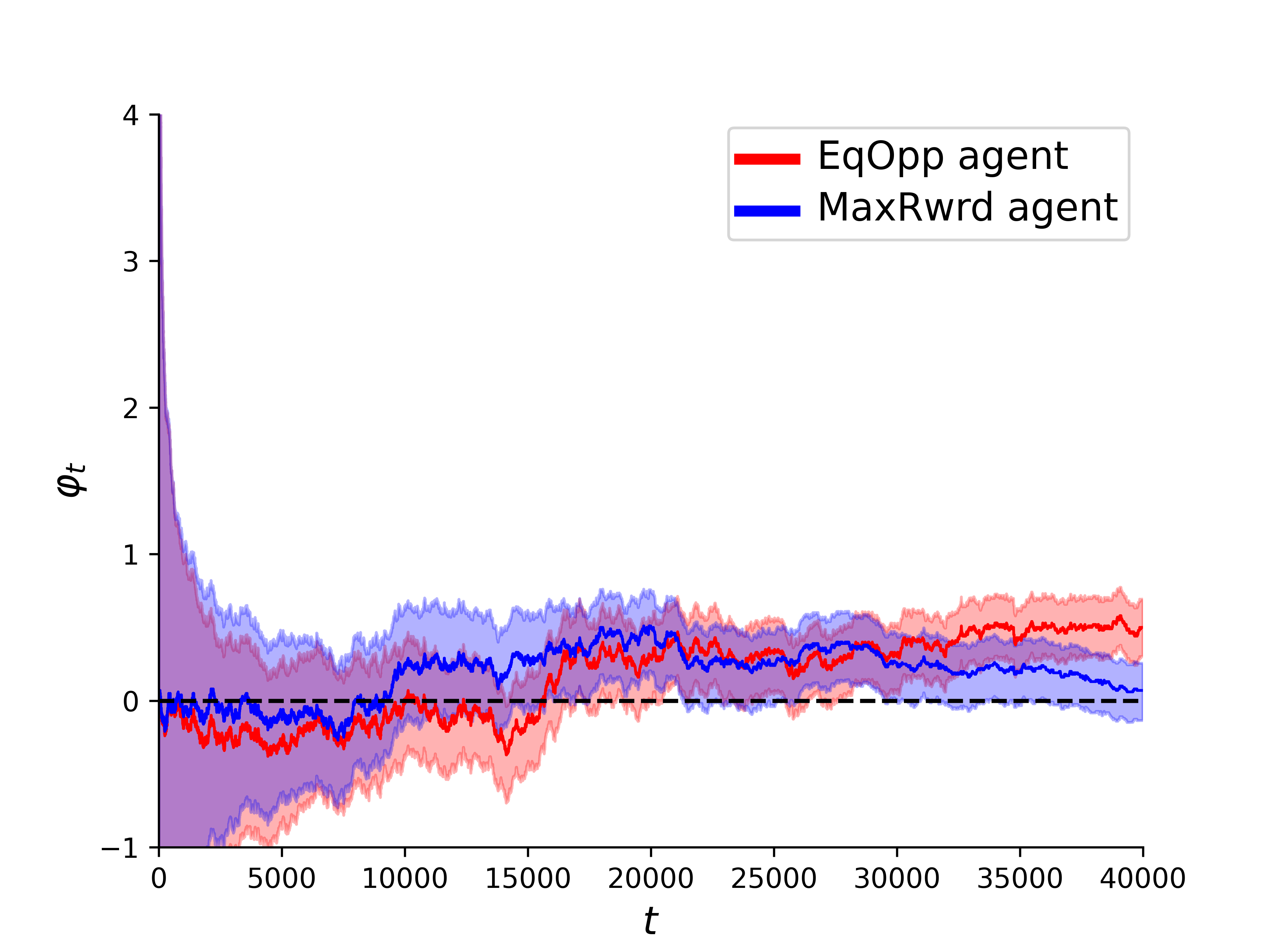}
	\includegraphics[scale=0.3]{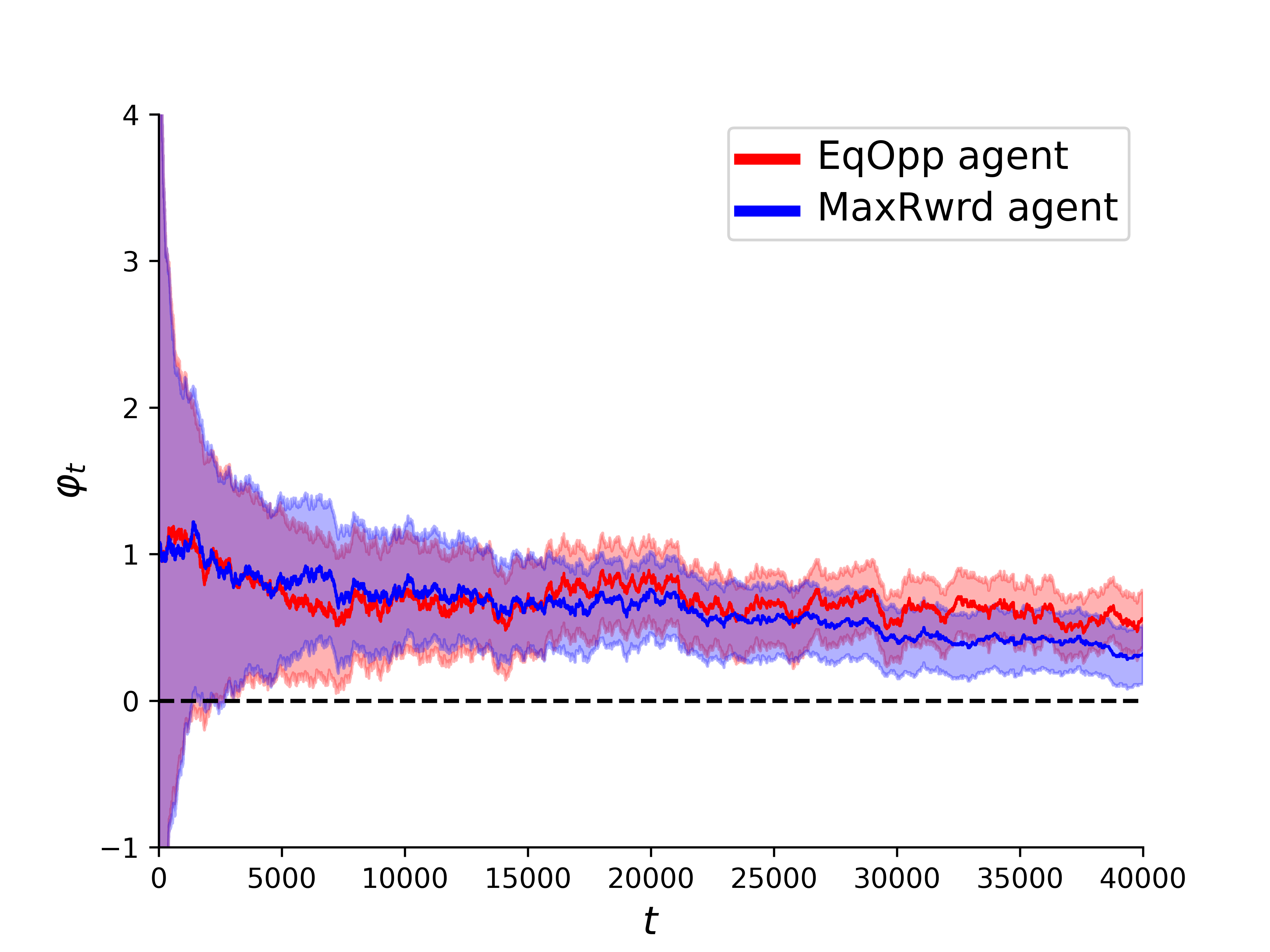}
	\includegraphics[scale=0.3]{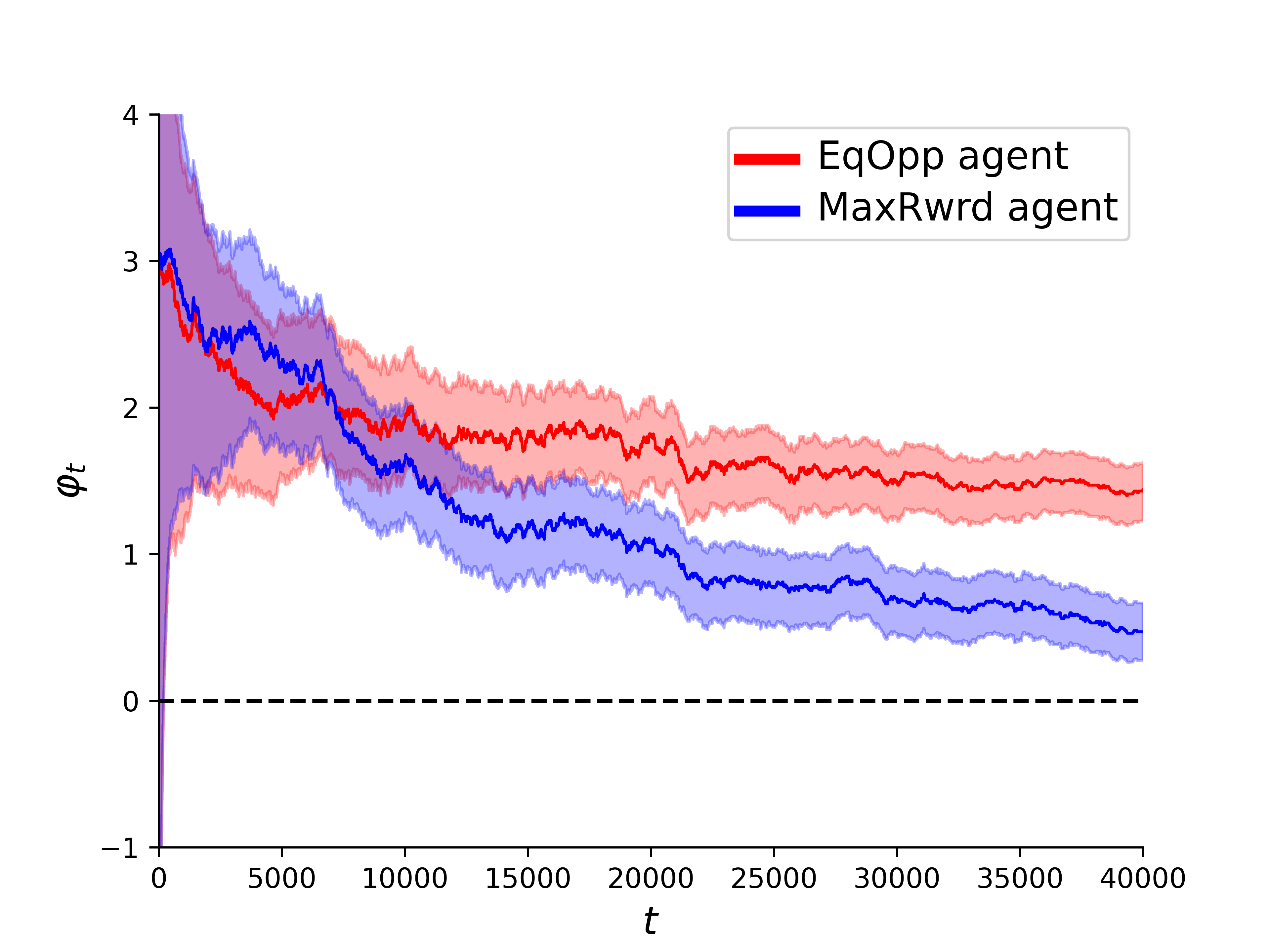}
	\caption{Output estimates of the monitors at each time step on simulated trajectories for the lending example obtained from \texttt{ml-fairness-gym}.
	The three plots are ordered from left to right in the increasing order of initial bias. 
	For each case, we considered two different policies of the agent: the \texttt{MaxRwrd} agent (blue) maximizes its own reward without trying to optimize any fairness criterion, whereas the \texttt{EqOpp} agent (red) also tries to ensure equalized opportunity statically (i.e., in its one-shot decisions).
	The shaded regions are the intervals computed by the monitor \textsf{LendingMonitor}, whereas the solid lines are the (unknown to the monitor) true values of the properties. 
	The horizontal dashed line corresponds to the perfectly fair scenario (i.e., $\varphi_t = 0$).
	}
	\label{fig:lending monitor:plots}
\end{figure*}

We summarize the outputs of our monitor in Fig.~\ref{fig:lending monitor:plots}.
We observe that our monitor's outputs match with the correct value of bias in the system at every point in time.
Moreover, the monitor \textsf{LendingMonitor} took 
on an average \num{3}\,\si{\micro\second} to compute a new estimate after receiving every new observation.

Interestingly, we note that the \texttt{MaxRwrd} agent's policy becomes more fair in the long run compared to the \texttt{EqOpp} agent's policy.
This phenomenon has already been demonstrated in the existing literature \cite{d2020fairness}.
With our monitors, it is actually possible to detect such biases in real-time, without any information or assumption about what policy the bank is using or how the individuals will react (i.e., whether they repay or not).

	% !TEX root=main.tex

\section{A Dynamic Attention Allocation Problem}
\label{sec:attention}

Now we consider the attention allocation problem.
Suppose there are $L\geq 2$ locations, and at each location and at each time step, a number of incidents take place.
There is a machine-learned allocator who needs to allocate its limited amount of resources to the locations in order to discover the incidents, where every event needs one unit of attention to be discovered.
We design a runtime monitor to check, in real-time, if the allocator is fair in allocating its limited amount of attention among two particular locations $A$ and $B$.
Suppose in each location and at every time step, some number of events appear according to the Poisson distribution with unknown parameters. 
At any location and at any given time, the rate with which events appear is inversely proportional to the number of attention units allocated to that location at the last time step; the exact relationship will be formalized in Sec.~\ref{sec:attention:problem formulation}.
We assume that the allocator has knowledge about this relationship.
The fairness criteria that we wish to monitor requires that the probability with which any event will be discovered across the two locations should be close to each other.

We streamline our exposition on a simpler instance of the original problem that was studied by D’Amour et al.\ \cite{d2020fairness}, where they considered the fairness measure as the maximum pairwise disparity in discovery probabilities.
We point out that this general property can also be handled using our monitors, by simply having a separate monitor for each pair of locations, and then aggregating the outputs of all the monitors using interval arithmetic and union bound. 

\subsection{Problem Formulation}
\label{sec:attention:problem formulation}

\noindent\textbf{The agent and the environment:}
Here, the two locations $A$ and $B$ are the two groups, as well as the only members in the respective groups.
The feature of each location is the number of events (in $\mathbb{N}$).
At every time step $t$, the environment samples a pair of (random) features $X_t^A$ and $X_t^B$ for the two locations, such that $X_t^g \sim \poisson{\lambda_t^g}$, where $\poisson{\lambda_t^g}$ represents the Poisson distribution with parameter $\lambda_t^g$.
The numbers of incidents in the two locations are given by $X_t^A+1$ and $X_t^B+1$, ensuring that the minimum number of events is $1$, which is necessary for technical reasons (will be explained later).
Observe that, we slightly deviate from the setting that we introduced in Sec.~\ref{sec:general problem}, in that we obtain features of two individuals (i.e., the locations) from both groups simultaneously at each time step.
Technique-wise, this is not a problem since we are going to use two separate \ExpMon monitors for the two locations. % which will observe the respective projection of the overall observation sequence.
Notation-wise, this is simpler, as otherwise we would need \emph{vectors}-valued features.

The agent's action is a (random) vector $(Y_t^A,Y_t^B) \in \mathbb{N}^2$ to allocate its $K$ units of attention to the respective locations, i.e. $ Y^A_t + Y^B_t \leq K$.
The entry $Y^{g}_t$ represents the number of attention units allocated to location $g$ at time $t$. 
In this example, the reaction of the environment to the agent's action has no role, i.e., we set $Z_t=\bot$ (a dummy symbol).

As usual, we assume a change function $\Delta_t$ is given (defined below) that causes a shift in the expected value of $X_t$.
Since the expected value of $\poisson{\lambda_t^g}$ is $\lambda_t^g$, this corresponds to a change in the Poisson parameter, causing changes in the rate at which events appear in the respective locations in the next step.
Given a fixed parameter $\gamma>0$, which controls how dynamic the system is, the change function is defined as:
\begin{align*}
	\forall g\in \set{A,B}\;.\;\Delta(g,x_t^g,y_t^g) = \begin{cases}
															\gamma	& \text{if } y_t^g = 0,\\
															-\gamma\cdot y_t^g & \text{otherwise},\\
														\end{cases}
\end{align*}
where we drop the fourth argument  $z_t$ from $\Delta_t$ as it has no role.

\smallskip
\noindent\textbf{The fairness property:}
The well-being function $f$ in \eqref{equ:expected well-being} in this example is called the discovery probability, and we want to monitor its disparity between the two locations.
The discovery probability at time $t$ and in location $g$ can be formalized as the expected value of the ratio of the discovered events $\overline{X}^{g}_t = \min\set{X^{g}_t+1,Y^{g}_t}$ to the  the actual number of events $X^{g}_t+1$.
Notice that had we defined the number of events as $X^{g}_t$, discovery probability would be undefined (since $X^g_t$ can be zero).
The discovery probability at time $t$ for a given observation sequence $\vec{o}_t$ can also be represented as the following conditional expectation:
\begin{align*}
	\omega_t^g\coloneqq \expe_g\left(\frac{\overline{X}^g_{t}}{X^g_{t}+1}\; \middle| \vec{o}_{t-1}, y_t^g \right).
\end{align*}
The time-varying social fairness criteria at every time $t$ is given by $\varphi_t\coloneqq \omega_t^A  - \omega_t^B$.

\subsection{The Runtime Monitor}
We show that $\omega_t^g$ has a closed-form expression $\eta(y_t^g,\lambda_t^g)$ for a given concrete $\lambda_t^g$ and $y_t^g$, 
where
\begin{align} \label{eq:eta}
	\eta(y, \lambda)\coloneqq e^{-\lambda} \sum_{k=0}^{y-1}\frac{\lambda^k}{k!} \left( 1- \frac{y}{k+1}\right) + \frac{y}{\lambda}\left(1-e^{-\lambda} \right).
\end{align}
Furthermore, we show that, for a fixed $y$, the function $\eta(y,\cdot)$ is strictly decreasing everywhere in the positive reals 
\ifarxiv
	(see App.~\ref{theorem:eta_decreasing}).
\else
	(proof is in the longer version~\cite{??}).
\fi
This property of $\eta(y,\cdot)$ enables us to efficiently compute an interval estimate of $\eta(y,\cdot)$ and, in turn $\omega_t^g$, from an interval estimate of $\lambda$.

Using these auxiliary results, we construct the monitor as follows.
We first use the general monitor \ExpMon from Alg.~\ref{alg:general monitor:single random variable} to estimate, for each location $g$, the expected value $\expe_g(X_t^g)$, which is the same as $\lambda_t^g$ (follows from the property of the Poisson distributions).
We make two mild assumptions:
First, we assume that the Poisson parameters in both locations are bounded between two positive reals $\underline{\lambda}<\overline{\lambda}$, allowing us to establish that $X_t^g$ is a sub-exponential random variable with parameters $(2\lambda,2)$ 
\ifarxiv
	(see App.~\ref{lemma:subexponential_poisson}).
\else
	(proof is in the longer version~\cite{??}).
\fi
Second, we assume that the sequence of observations are such that the parameter would not reach zero no matter what its true initial value was. 
Otherwise the function $\Delta$ would no longer reflect the differences in the expected values. 
This can be checked by the monitor at runtime, by checking if the parameter would reach zero had it started from $\underline{\lambda}$ in the worst case (the value closest to zero). 
We omit the check for simplicity.

Suppose $[\lambda_{\mathsf{min}}^g,\lambda_{\mathsf{max}}^g]$ is the interval output by \ExpMon as the estimate of $\lambda_t^g$.
Then from $\omega_t^g= \eta(y_t^g,\lambda_t^g)$ and the strictly decreasing property of $\eta$, we obtain the corresponding interval estimate for $\omega_t^g$ as $[\eta(y_t^g,\lambda_{\mathsf{max}}),\eta(y_t^g,\lambda_{\mathsf{min}})]$.
From the interval estimates for the discovery probability of each group, we obtain the overall fairness estimate by computing the interval difference, as we did for the lending monitor.
The pseudocode of the monitor is in Alg.~\ref{alg:attention monitor}.

\begin{algorithm}
	\caption{\textsf{AttentionMonitor}}
	\label{alg:attention monitor}
	\begin{minipage}{0.52\textwidth}
		\begin{algorithmic}[1]
			\Function{$\mathtt{Init}$}{$\Delta,\delta,\overline{\lambda}$}
 				\State $M_A \gets \ExpMon.\mathtt{Init}(\Delta,\frac{\delta}{2},2\overline{\lambda},2)$
				\State $M_B \gets \ExpMon.\mathtt{Init}(\Delta,\frac{\delta}{2},2\overline{\lambda},2)$
			\EndFunction
		\end{algorithmic}
	\end{minipage}
	\begin{minipage}{0.47\textwidth}
		\begin{algorithmic}[1]
			\Function{$\mathtt{Compute}$}{$(x^A,x^B),(y^A,y^B)$}
				\State $[\lambda^A_{\mathsf{min}},\lambda^A_{\mathsf{max}}] \gets M_A.\mathtt{Compute}(x^A,A,y^A,\bot)$
				\State $[\lambda^B_{\mathsf{min}},\lambda^B_{\mathsf{max}}] \gets M_B.\mathtt{Compute}(x^B,B,y^B,\bot)$
				\State $\overline{E}_A\gets \left[\eta( y^A, \lambda^A_{\mathsf{max}}),\eta(y^A,\lambda^A_{\mathsf{min}})\right] $
				\State $\overline{E}_B\gets \left[\eta( y^B, \lambda^B_{\mathsf{max}}),\eta(y^B,\lambda^B_{\mathsf{min}})\right] $
				\State $\overline{\varphi} \gets \overline{E}_A - \overline{E}_B$ \Comment{interval difference}
				\State \Return $\overline{\varphi}$
			\EndFunction
		\end{algorithmic}
	\end{minipage}
\end{algorithm}

\subsection{Experimental Outcome}

We summarize the outputs of our monitor in Fig.~\ref{fig:attention monitor:plots}.
%We observe that our monitor's outputs match with the correct value of bias in the system at every point in time.
We consider three types of agent, uniform, greedy, and fair-constrained greed (with $\alpha=0.75)$.
While the uniform agent behaves randomly, without taking into account the actual incidence rate, the greedy agent tries to minimize the chances of missed discoveries by keeping an estimate of the incidence rates.
The constrained greedy agent needs to additionally ensure a fairness criteria.
Our experiments empirically show that no matter what the experimental conditions are, our monitor is able to provide real-time information about the time-varying biases in the system.
Moreover, the monitor \textsf{AttentionMonitor} took 
%between \num{24}\,\si{\micro\second} and \num{119}\,\si{\micro\second}, with 
on an average \num{28}\,\si{\micro\second} to compute a new estimate after receiving every new observation.
The experiments demonstrate the practical usefulness of our monitors.

\begin{figure*}
	\includegraphics[scale=0.3]{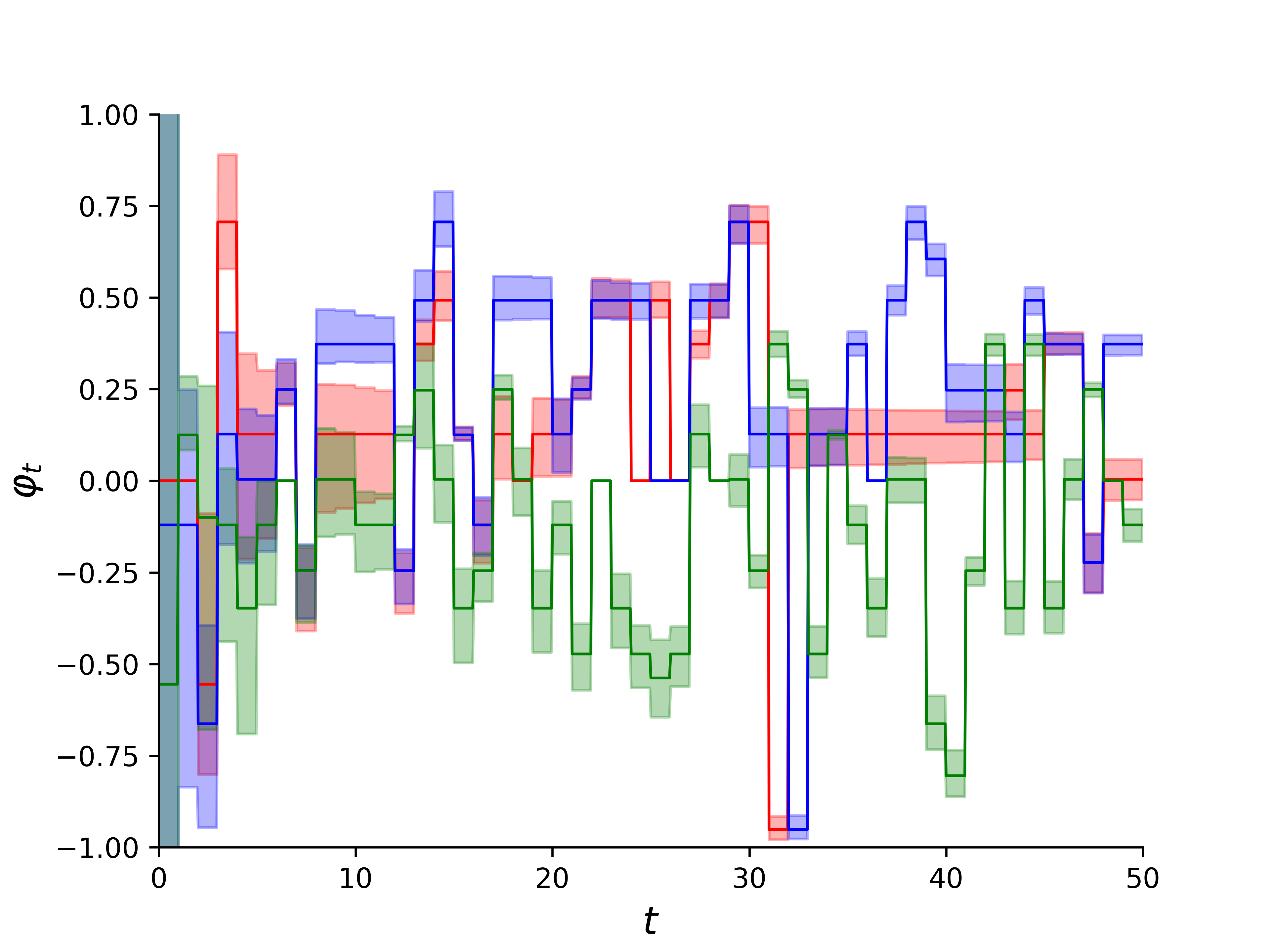}
	\includegraphics[scale=0.3]{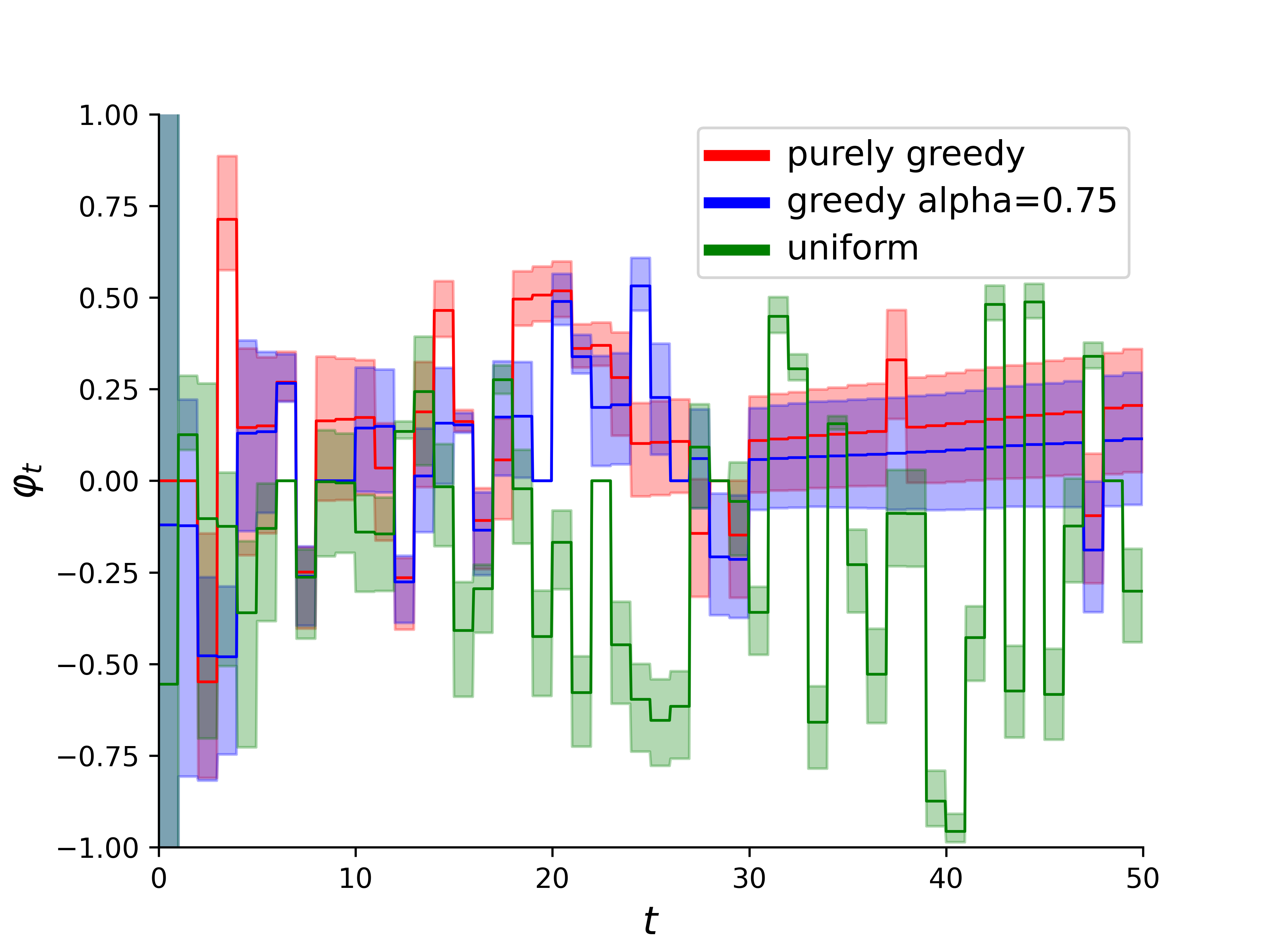}
		\includegraphics[scale=0.3]{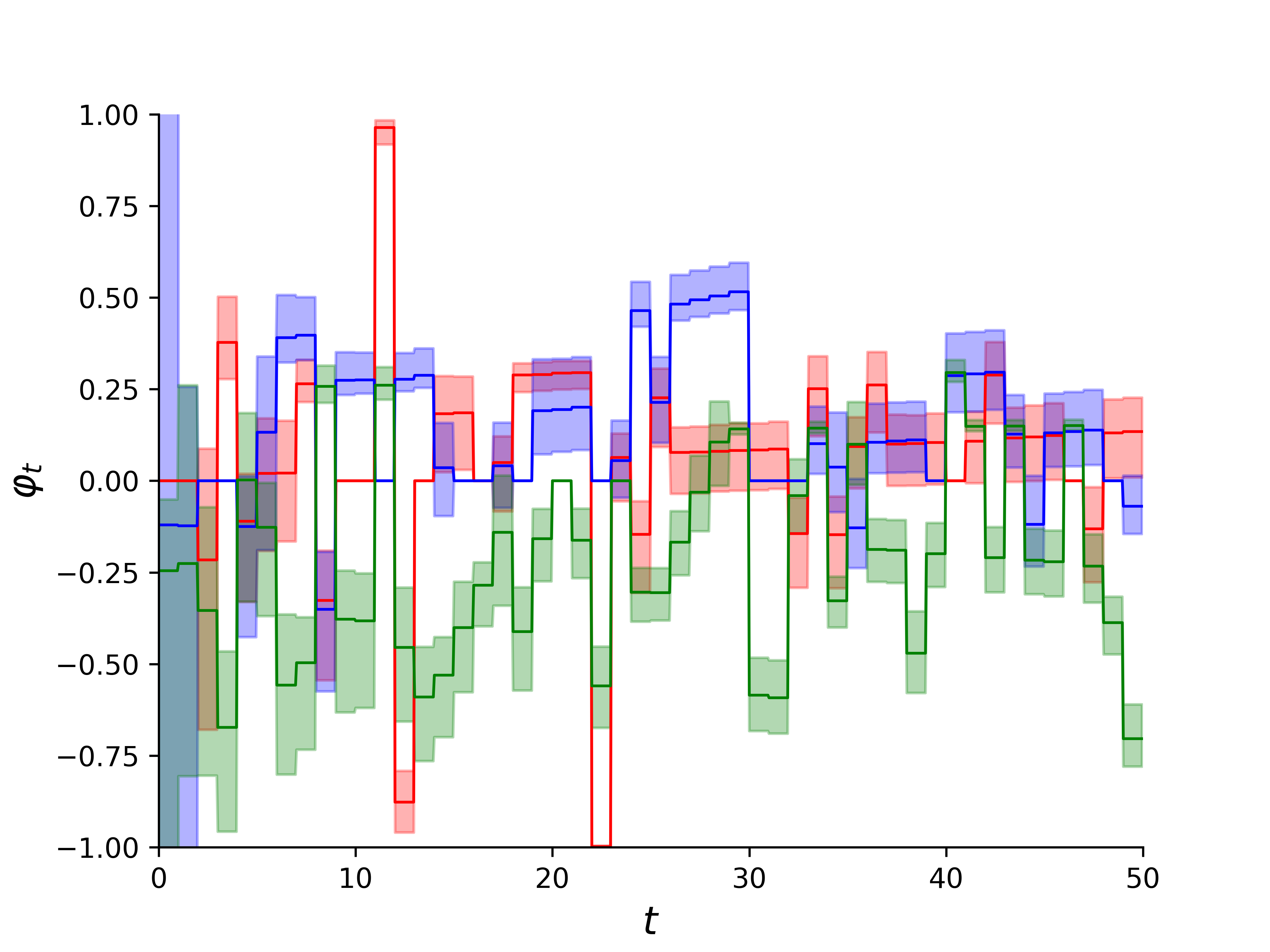}
	\caption{Output estimates of the monitors at each time step on simulation traces for the attention allocation example obtained from \texttt{ml-fairness-gym}. 
	Left: $L=5$ (no.\ of locations), $K=6$ (total units of attention), $\gamma = 0$ (change in the Poisson parameters).
	Middle: $L=5$, $K=6$, $\gamma = 0.0025$.
	Right: $L=10$, $K=10$, $\gamma = 0.0025$.
	For each case, we considered three different policies of the agent (description in the text). 
	The shaded regions are the intervals computed by the monitor \textsf{AttentionMonitor}, whereas the solid lines are the (unknown) true values of the properties.
	}
	\label{fig:attention monitor:plots}
\end{figure*}
	%!TEX root=main.tex
\section{Conclusion and Future Work}
\label{sec:conclusion}

We present an approach for real-time monitoring of the long-run fairness of machine-learned agents deployed in dynamic environments.
Our monitors observe a long sequence of events generated from the interactions between the agent and its environment, and output, after each new observation, a quantitative PAC-style estimate of how fair or biased the agent's policy was until that point in time. 
The strength of our monitors is their ability to compute interval estimates of fairness values in the face of frequent changes in the underlying distribution, a setting that prevents a static estimation of fairness at each time step.
The presented method allows for the computation of confidence intervals when the monitored random variable is sub-exponential. By extension, they can handle sub-Gaussian random variables as well.
Using a prototype implementation, we demonstrated the practical usefulness of the monitoring approach on examples taken from the literature.

We took great effort to ensure that the interval estimates computed by the monitor 
hold non-asymptotically.
Consequently, we avoided a direct comparison with methods that rely on the central limit theorem. 
However, we acknowledge that loosening this restriction will allow for a wider range of applications.
Furthermore, we showed computations of PAC estimates on fairness properties with some specific well-being functions, such as the expected credit score in the lending problem and the discovery probability in the dynamic attention allocation problem, etc.
Although an extension to general well-being functions is difficult, a generalization to restricted classes of well-being functions is conceivable: for instance, when the well-being function is in the form of arithmetic expressions, we can use interval arithmetic to deduce the overall PAC bound, or when the well-being function is convex, we can use convex optimization to deduce tight PAC estimates, etc. 

We see several immediate future directions.
Firstly, we considered only one particular class of fairness properties, namely the ones which can be written as a difference between expected values of a given function of population parameters across two groups.
Investigating other classes of properties will be an important goal.
We will be able to extend our monitors to handle individuals with multiple features, i.e.\ feature vectors, without adding any additional technical machinery.
Secondly, we assumed perfect observability about the actions of the agent and the environment, whereas in reality actions are often either partially observable or the observations are noisy (e.g., the college admission example in D’Amour et al.\ \cite{d2020fairness}).
Thus extensions to partially observed and noisy models will be helpful.
Thirdly, our monitors require, at least partially, information about the changes in the system. A natural extension, would be to relax this condition.
Finally, we only considered the monitoring, i.e., the problem of checking fairness in real-time.
The next step will be to combine monitoring and intervention, so that we obtain an automated procedure for controlling dynamic fairness properties of a machine-learned agent.

	\begin{acks}
	The authors would like to thank the anonymous reviewers for
	their valuable comments and helpful suggestions. This work is
	supported by the \grantsponsor{ERC-2020-AdG}{European Research Council}{} under Grant
	No.:~\grantnum{ERC-2020-AdG}{ERC-2020-AdG 101020093}.
	\end{acks}
	
	\bibliographystyle{ACM-Reference-Format}
	\bibliography{references}

%%% -*-BibTeX-*-
%%% Do NOT edit. File created by BibTeX with style
%%% ACM-Reference-Format-Journals [18-Jan-2012].

\begin{thebibliography}{39}

%%% ====================================================================
%%% NOTE TO THE USER: you can override these defaults by providing
%%% customized versions of any of these macros before the \bibliography
%%% command.  Each of them MUST provide its own final punctuation,
%%% except for \shownote{}, \showDOI{}, and \showURL{}.  The latter two
%%% do not use final punctuation, in order to avoid confusing it with
%%% the Web address.
%%%
%%% To suppress output of a particular field, define its macro to expand
%%% to an empty string, or better, \unskip, like this:
%%%
%%% \newcommand{\showDOI}[1]{\unskip}   % LaTeX syntax
%%%
%%% \def \showDOI #1{\unskip}           % plain TeX syntax
%%%
%%% ====================================================================

\ifx \showCODEN    \undefined \def \showCODEN     #1{\unskip}     \fi
\ifx \showDOI      \undefined \def \showDOI       #1{#1}\fi
\ifx \showISBNx    \undefined \def \showISBNx     #1{\unskip}     \fi
\ifx \showISBNxiii \undefined \def \showISBNxiii  #1{\unskip}     \fi
\ifx \showISSN     \undefined \def \showISSN      #1{\unskip}     \fi
\ifx \showLCCN     \undefined \def \showLCCN      #1{\unskip}     \fi
\ifx \shownote     \undefined \def \shownote      #1{#1}          \fi
\ifx \showarticletitle \undefined \def \showarticletitle #1{#1}   \fi
\ifx \showURL      \undefined \def \showURL       {\relax}        \fi
% The following commands are used for tagged output and should be
% invisible to TeX
\providecommand\bibfield[2]{#2}
\providecommand\bibinfo[2]{#2}
\providecommand\natexlab[1]{#1}
\providecommand\showeprint[2][]{arXiv:#2}

\bibitem[Agarwal et~al\mbox{.}(2018)]%
        {agarwal2018reductions}
\bibfield{author}{\bibinfo{person}{Alekh Agarwal}, \bibinfo{person}{Alina
  Beygelzimer}, \bibinfo{person}{Miroslav Dud{\'\i}k}, \bibinfo{person}{John
  Langford}, {and} \bibinfo{person}{Hanna Wallach}.}
  \bibinfo{year}{2018}\natexlab{}.
\newblock \showarticletitle{A reductions approach to fair classification}. In
  \bibinfo{booktitle}{\emph{International Conference on Machine Learning}}.
  PMLR, \bibinfo{pages}{60--69}.
\newblock


\bibitem[Albarghouthi and Vinitsky(2019)]%
        {albarghouthi2019fairness}
\bibfield{author}{\bibinfo{person}{Aws Albarghouthi} {and}
  \bibinfo{person}{Samuel Vinitsky}.} \bibinfo{year}{2019}\natexlab{}.
\newblock \showarticletitle{Fairness-aware programming}. In
  \bibinfo{booktitle}{\emph{Proceedings of the Conference on Fairness,
  Accountability, and Transparency}}. \bibinfo{pages}{211--219}.
\newblock


\bibitem[Bartocci and Falcone(2018)]%
        {bartocci2018lectures}
\bibfield{author}{\bibinfo{person}{Ezio Bartocci} {and}
  \bibinfo{person}{Yli{\`e}s Falcone}.} \bibinfo{year}{2018}\natexlab{}.
\newblock \bibinfo{booktitle}{\emph{Lectures on Runtime Verification}}.
\newblock \bibinfo{publisher}{Springer}.
\newblock


\bibitem[Berk et~al\mbox{.}(2017)]%
        {berk2017convex}
\bibfield{author}{\bibinfo{person}{Richard Berk}, \bibinfo{person}{Hoda
  Heidari}, \bibinfo{person}{Shahin Jabbari}, \bibinfo{person}{Matthew Joseph},
  \bibinfo{person}{Michael Kearns}, \bibinfo{person}{Jamie Morgenstern},
  \bibinfo{person}{Seth Neel}, {and} \bibinfo{person}{Aaron Roth}.}
  \bibinfo{year}{2017}\natexlab{}.
\newblock \showarticletitle{A convex framework for fair regression}.
\newblock \bibinfo{journal}{\emph{arXiv preprint arXiv:1706.02409}}
  (\bibinfo{year}{2017}).
\newblock


\bibitem[Bloem et~al\mbox{.}(2015)]%
        {bloem2015shield}
\bibfield{author}{\bibinfo{person}{Roderick Bloem}, \bibinfo{person}{Bettina
  K{\"o}nighofer}, \bibinfo{person}{Robert K{\"o}nighofer}, {and}
  \bibinfo{person}{Chao Wang}.} \bibinfo{year}{2015}\natexlab{}.
\newblock \showarticletitle{Shield synthesis: Runtime enforcement for reactive
  systems}. In \bibinfo{booktitle}{\emph{Tools and Algorithms for the
  Construction and Analysis of Systems: 21st International Conference, TACAS
  2015, Held as Part of the European Joint Conferences on Theory and Practice
  of Software, ETAPS 2015, London, UK, April 11-18, 2015, Proceedings 21}}.
  Springer, \bibinfo{pages}{533--548}.
\newblock


\bibitem[Calders and {\v{Z}}liobait{\.e}(2013)]%
        {calders2013unbiased}
\bibfield{author}{\bibinfo{person}{Toon Calders} {and}
  \bibinfo{person}{Indr{\.e} {\v{Z}}liobait{\.e}}.}
  \bibinfo{year}{2013}\natexlab{}.
\newblock \showarticletitle{Why unbiased computational processes can lead to
  discriminative decision procedures}.
\newblock \bibinfo{journal}{\emph{Discrimination and Privacy in the Information
  Society: Data mining and profiling in large databases}}
  (\bibinfo{year}{2013}), \bibinfo{pages}{43--57}.
\newblock


\bibitem[Chaney et~al\mbox{.}(2018)]%
        {chaney2018algorithmic}
\bibfield{author}{\bibinfo{person}{Allison~JB Chaney},
  \bibinfo{person}{Brandon~M Stewart}, {and} \bibinfo{person}{Barbara~E
  Engelhardt}.} \bibinfo{year}{2018}\natexlab{}.
\newblock \showarticletitle{How algorithmic confounding in recommendation
  systems increases homogeneity and decreases utility}. In
  \bibinfo{booktitle}{\emph{Proceedings of the 12th ACM conference on
  recommender systems}}. \bibinfo{pages}{224--232}.
\newblock


\bibitem[Chen et~al\mbox{.}(2020)]%
        {chen2020fair}
\bibfield{author}{\bibinfo{person}{Yifang Chen}, \bibinfo{person}{Alex
  Cuellar}, \bibinfo{person}{Haipeng Luo}, \bibinfo{person}{Jignesh Modi},
  \bibinfo{person}{Heramb Nemlekar}, {and} \bibinfo{person}{Stefanos
  Nikolaidis}.} \bibinfo{year}{2020}\natexlab{}.
\newblock \showarticletitle{The fair contextual multi-armed bandit}. In
  \bibinfo{booktitle}{\emph{Proceedings of the 19th International Conference on
  Autonomous Agents and MultiAgent Systems}}.
\newblock


\bibitem[D'Amour et~al\mbox{.}(2020)]%
        {d2020fairness}
\bibfield{author}{\bibinfo{person}{Alexander D'Amour}, \bibinfo{person}{Hansa
  Srinivasan}, \bibinfo{person}{James Atwood}, \bibinfo{person}{Pallavi
  Baljekar}, \bibinfo{person}{David Sculley}, {and} \bibinfo{person}{Yoni
  Halpern}.} \bibinfo{year}{2020}\natexlab{}.
\newblock \showarticletitle{Fairness is not static: deeper understanding of
  long term fairness via simulation studies}. In
  \bibinfo{booktitle}{\emph{Proceedings of the 2020 Conference on Fairness,
  Accountability, and Transparency}}. \bibinfo{pages}{525--534}.
\newblock


\bibitem[Dwork et~al\mbox{.}(2012)]%
        {dwork2012fairness}
\bibfield{author}{\bibinfo{person}{Cynthia Dwork}, \bibinfo{person}{Moritz
  Hardt}, \bibinfo{person}{Toniann Pitassi}, \bibinfo{person}{Omer Reingold},
  {and} \bibinfo{person}{Richard Zemel}.} \bibinfo{year}{2012}\natexlab{}.
\newblock \showarticletitle{Fairness through awareness}. In
  \bibinfo{booktitle}{\emph{Proceedings of the 3rd innovations in theoretical
  computer science conference}}. \bibinfo{pages}{214--226}.
\newblock


\bibitem[Elzayn et~al\mbox{.}(2019)]%
        {elzayn2019fair}
\bibfield{author}{\bibinfo{person}{Hadi Elzayn}, \bibinfo{person}{Shahin
  Jabbari}, \bibinfo{person}{Christopher Jung}, \bibinfo{person}{Michael
  Kearns}, \bibinfo{person}{Seth Neel}, \bibinfo{person}{Aaron Roth}, {and}
  \bibinfo{person}{Zachary Schutzman}.} \bibinfo{year}{2019}\natexlab{}.
\newblock \showarticletitle{Fair algorithms for learning in allocation
  problems}. In \bibinfo{booktitle}{\emph{Proceedings of the Conference on
  Fairness, Accountability, and Transparency}}. \bibinfo{pages}{170--179}.
\newblock


\bibitem[Ensign et~al\mbox{.}(2018)]%
        {ensign2018runaway}
\bibfield{author}{\bibinfo{person}{Danielle Ensign}, \bibinfo{person}{Sorelle~A
  Friedler}, \bibinfo{person}{Scott Neville}, \bibinfo{person}{Carlos
  Scheidegger}, {and} \bibinfo{person}{Suresh Venkatasubramanian}.}
  \bibinfo{year}{2018}\natexlab{}.
\newblock \showarticletitle{Runaway feedback loops in predictive policing}. In
  \bibinfo{booktitle}{\emph{Conference on Fairness, Accountability and
  Transparency}}. PMLR, \bibinfo{pages}{160--171}.
\newblock


\bibitem[Feldman et~al\mbox{.}(2015)]%
        {feldman2015certifying}
\bibfield{author}{\bibinfo{person}{Michael Feldman}, \bibinfo{person}{Sorelle~A
  Friedler}, \bibinfo{person}{John Moeller}, \bibinfo{person}{Carlos
  Scheidegger}, {and} \bibinfo{person}{Suresh Venkatasubramanian}.}
  \bibinfo{year}{2015}\natexlab{}.
\newblock \showarticletitle{Certifying and removing disparate impact}. In
  \bibinfo{booktitle}{\emph{proceedings of the 21th ACM SIGKDD international
  conference on knowledge discovery and data mining}}.
  \bibinfo{pages}{259--268}.
\newblock


\bibitem[Ferr{\`e}re et~al\mbox{.}(2020)]%
        {ferrere2019monitoring}
\bibfield{author}{\bibinfo{person}{Thomas Ferr{\`e}re},
  \bibinfo{person}{Thomas~A. Henzinger}, {and} \bibinfo{person}{Bernhard
  Kragl}.} \bibinfo{year}{2020}\natexlab{}.
\newblock \showarticletitle{{Monitoring Event Frequencies}}. In
  \bibinfo{booktitle}{\emph{28th EACSL Annual Conference on Computer Science
  Logic (CSL 2020)}} \emph{(\bibinfo{series}{Leibniz International Proceedings
  in Informatics (LIPIcs)}, Vol.~\bibinfo{volume}{152})},
  \bibfield{editor}{\bibinfo{person}{Maribel Fern{\'a}ndez} {and}
  \bibinfo{person}{Anca Muscholl}} (Eds.). \bibinfo{publisher}{Schloss
  Dagstuhl--Leibniz-Zentrum fuer Informatik}, \bibinfo{address}{Dagstuhl,
  Germany}, \bibinfo{pages}{20:1--20:16}.
\newblock
\showISBNx{978-3-95977-132-0}
\showISSN{1868-8969}
\urldef\tempurl%
\url{https://doi.org/10.4230/LIPIcs.CSL.2020.20}
\showDOI{\tempurl}


\bibitem[Fuster et~al\mbox{.}(2022)]%
        {fuster2022predictably}
\bibfield{author}{\bibinfo{person}{Andreas Fuster}, \bibinfo{person}{Paul
  Goldsmith-Pinkham}, \bibinfo{person}{Tarun Ramadorai}, {and}
  \bibinfo{person}{Ansgar Walther}.} \bibinfo{year}{2022}\natexlab{}.
\newblock \showarticletitle{Predictably unequal? The effects of machine
  learning on credit markets}.
\newblock \bibinfo{journal}{\emph{The Journal of Finance}}
  \bibinfo{volume}{77}, \bibinfo{number}{1} (\bibinfo{year}{2022}),
  \bibinfo{pages}{5--47}.
\newblock


\bibitem[Gordaliza et~al\mbox{.}(2019)]%
        {gordaliza2019obtaining}
\bibfield{author}{\bibinfo{person}{Paula Gordaliza}, \bibinfo{person}{Eustasio
  Del~Barrio}, \bibinfo{person}{Gamboa Fabrice}, {and}
  \bibinfo{person}{Jean-Michel Loubes}.} \bibinfo{year}{2019}\natexlab{}.
\newblock \showarticletitle{Obtaining fairness using optimal transport theory}.
  In \bibinfo{booktitle}{\emph{International Conference on Machine Learning}}.
  PMLR, \bibinfo{pages}{2357--2365}.
\newblock


\bibitem[Hardt et~al\mbox{.}(2016)]%
        {hardt2016equality}
\bibfield{author}{\bibinfo{person}{Moritz Hardt}, \bibinfo{person}{Eric Price},
  {and} \bibinfo{person}{Nati Srebro}.} \bibinfo{year}{2016}\natexlab{}.
\newblock \showarticletitle{Equality of opportunity in supervised learning}.
\newblock \bibinfo{journal}{\emph{Advances in neural information processing
  systems}}  \bibinfo{volume}{29} (\bibinfo{year}{2016}).
\newblock


\bibitem[Harwell(2018)]%
        {Harwell2018Alexa}
\bibfield{author}{\bibinfo{person}{Drew Harwell}.}
  \bibinfo{year}{2018}\natexlab{}.
\newblock \bibinfo{title}{Amazon’s Alexa and Google Home show accent bias,
  with Chinese and Spanish hardest to understand}.
\newblock \bibinfo{howpublished}{\url{http://bit.ly/2QFA1MR}}.
\newblock
\newblock
\shownote{Accessed: 05.02.2023}.


\bibitem[Hashimoto et~al\mbox{.}(2018)]%
        {hashimoto2018fairness}
\bibfield{author}{\bibinfo{person}{Tatsunori Hashimoto}, \bibinfo{person}{Megha
  Srivastava}, \bibinfo{person}{Hongseok Namkoong}, {and}
  \bibinfo{person}{Percy Liang}.} \bibinfo{year}{2018}\natexlab{}.
\newblock \showarticletitle{Fairness without demographics in repeated loss
  minimization}. In \bibinfo{booktitle}{\emph{International Conference on
  Machine Learning}}. PMLR, \bibinfo{pages}{1929--1938}.
\newblock


\bibitem[Heidari et~al\mbox{.}(2019)]%
        {heidari2019long}
\bibfield{author}{\bibinfo{person}{Hoda Heidari}, \bibinfo{person}{Vedant
  Nanda}, {and} \bibinfo{person}{Krishna~P Gummadi}.}
  \bibinfo{year}{2019}\natexlab{}.
\newblock \showarticletitle{On the long-term impact of algorithmic decision
  policies: Effort unfairness and feature segregation through social learning}.
\newblock \bibinfo{journal}{\emph{arXiv preprint arXiv:1903.01209}}
  (\bibinfo{year}{2019}).
\newblock


\bibitem[Henzinger et~al\mbox{.}(2023)]%
        {henzinger2023monitoring}
\bibfield{author}{\bibinfo{person}{Thomas~A.\ Henzinger},
  \bibinfo{person}{Mahyar Karimi}, \bibinfo{person}{Konstantin Kueffner}, {and}
  \bibinfo{person}{Kaushik Mallik}.} \bibinfo{year}{2023}\natexlab{}.
\newblock \showarticletitle{Monitoring Algorithmic Fairness}. In
  \bibinfo{booktitle}{\emph{Computer Aided Verification}}.
\newblock
\newblock
\shownote{(to appear)}.


\bibitem[Hossain et~al\mbox{.}(2021)]%
        {hossain2021fair}
\bibfield{author}{\bibinfo{person}{Safwan Hossain}, \bibinfo{person}{Evi
  Micha}, {and} \bibinfo{person}{Nisarg Shah}.}
  \bibinfo{year}{2021}\natexlab{}.
\newblock \showarticletitle{Fair algorithms for multi-agent multi-armed
  bandits}.
\newblock \bibinfo{journal}{\emph{Advances in Neural Information Processing
  Systems}}  \bibinfo{volume}{34} (\bibinfo{year}{2021}),
  \bibinfo{pages}{24005--24017}.
\newblock


\bibitem[Kamiran and Calders(2012)]%
        {kamiran2012data}
\bibfield{author}{\bibinfo{person}{Faisal Kamiran} {and} \bibinfo{person}{Toon
  Calders}.} \bibinfo{year}{2012}\natexlab{}.
\newblock \showarticletitle{Data preprocessing techniques for classification
  without discrimination}.
\newblock \bibinfo{journal}{\emph{Knowledge and information systems}}
  \bibinfo{volume}{33}, \bibinfo{number}{1} (\bibinfo{year}{2012}),
  \bibinfo{pages}{1--33}.
\newblock


\bibitem[Liu et~al\mbox{.}(2018)]%
        {liu2018delayed}
\bibfield{author}{\bibinfo{person}{Lydia~T Liu}, \bibinfo{person}{Sarah Dean},
  \bibinfo{person}{Esther Rolf}, \bibinfo{person}{Max Simchowitz}, {and}
  \bibinfo{person}{Moritz Hardt}.} \bibinfo{year}{2018}\natexlab{}.
\newblock \showarticletitle{Delayed impact of fair machine learning}. In
  \bibinfo{booktitle}{\emph{International Conference on Machine Learning}}.
  PMLR, \bibinfo{pages}{3150--3158}.
\newblock


\bibitem[Liu et~al\mbox{.}(2020)]%
        {liu2020disparate}
\bibfield{author}{\bibinfo{person}{Lydia~T Liu}, \bibinfo{person}{Ashia
  Wilson}, \bibinfo{person}{Nika Haghtalab}, \bibinfo{person}{Adam~Tauman
  Kalai}, \bibinfo{person}{Christian Borgs}, {and} \bibinfo{person}{Jennifer
  Chayes}.} \bibinfo{year}{2020}\natexlab{}.
\newblock \showarticletitle{The disparate equilibria of algorithmic decision
  making when individuals invest rationally}. In
  \bibinfo{booktitle}{\emph{Proceedings of the 2020 Conference on Fairness,
  Accountability, and Transparency}}. \bibinfo{pages}{381--391}.
\newblock


\bibitem[Ma et~al\mbox{.}(2017)]%
        {ma2017runtime}
\bibfield{author}{\bibinfo{person}{Meiyi Ma}, \bibinfo{person}{John~A
  Stankovic}, {and} \bibinfo{person}{Lu Feng}.}
  \bibinfo{year}{2017}\natexlab{}.
\newblock \showarticletitle{Runtime monitoring of safety and performance
  requirements in smart cities}. In \bibinfo{booktitle}{\emph{Proceedings of
  the 1st ACM Workshop on the Internet of Safe Things}}.
  \bibinfo{pages}{44--50}.
\newblock


\bibitem[Mao and Chen(2012)]%
        {mao2012runtime}
\bibfield{author}{\bibinfo{person}{Jianhui Mao} {and} \bibinfo{person}{Liqian
  Chen}.} \bibinfo{year}{2012}\natexlab{}.
\newblock \showarticletitle{Runtime monitoring for cyber-physical systems: a
  case study of cooperative adaptive cruise control}. In
  \bibinfo{booktitle}{\emph{2012 Second International Conference on Intelligent
  System Design and Engineering Application}}. IEEE, \bibinfo{pages}{509--515}.
\newblock


\bibitem[Mehrabi et~al\mbox{.}(2021)]%
        {mehrabi2021survey}
\bibfield{author}{\bibinfo{person}{Ninareh Mehrabi}, \bibinfo{person}{Fred
  Morstatter}, \bibinfo{person}{Nripsuta Saxena}, \bibinfo{person}{Kristina
  Lerman}, {and} \bibinfo{person}{Aram Galstyan}.}
  \bibinfo{year}{2021}\natexlab{}.
\newblock \showarticletitle{A survey on bias and fairness in machine learning}.
\newblock \bibinfo{journal}{\emph{ACM Computing Surveys (CSUR)}}
  \bibinfo{volume}{54}, \bibinfo{number}{6} (\bibinfo{year}{2021}),
  \bibinfo{pages}{1--35}.
\newblock


\bibitem[Mouzannar et~al\mbox{.}(2019)]%
        {mouzannar2019fair}
\bibfield{author}{\bibinfo{person}{Hussein Mouzannar},
  \bibinfo{person}{Mesrob~I Ohannessian}, {and} \bibinfo{person}{Nathan
  Srebro}.} \bibinfo{year}{2019}\natexlab{}.
\newblock \showarticletitle{From fair decision making to social equality}. In
  \bibinfo{booktitle}{\emph{Proceedings of the Conference on Fairness,
  Accountability, and Transparency}}. \bibinfo{pages}{359--368}.
\newblock


\bibitem[Ogata et~al\mbox{.}(2010)]%
        {ogata2010modern}
\bibfield{author}{\bibinfo{person}{Katsuhiko Ogata} {et~al\mbox{.}}}
  \bibinfo{year}{2010}\natexlab{}.
\newblock \bibinfo{booktitle}{\emph{Modern control engineering}}.
  Vol.~\bibinfo{volume}{5}.
\newblock \bibinfo{publisher}{Prentice hall Upper Saddle River, NJ}.
\newblock


\bibitem[Podkopaev and Ramdas(2021)]%
        {podkopaev2021tracking}
\bibfield{author}{\bibinfo{person}{Aleksandr Podkopaev} {and}
  \bibinfo{person}{Aaditya Ramdas}.} \bibinfo{year}{2021}\natexlab{}.
\newblock \showarticletitle{Tracking the risk of a deployed model and detecting
  harmful distribution shifts}.
\newblock \bibinfo{journal}{\emph{arXiv preprint arXiv:2110.06177}}
  (\bibinfo{year}{2021}).
\newblock


\bibitem[Wainwright(2019)]%
        {wainwright2019high}
\bibfield{author}{\bibinfo{person}{Martin~J Wainwright}.}
  \bibinfo{year}{2019}\natexlab{}.
\newblock \bibinfo{booktitle}{\emph{High-dimensional statistics: A
  non-asymptotic viewpoint}}. Vol.~\bibinfo{volume}{48}.
\newblock \bibinfo{publisher}{Cambridge university press}.
\newblock


\bibitem[Waudby-Smith et~al\mbox{.}(2021)]%
        {waudby2021time}
\bibfield{author}{\bibinfo{person}{Ian Waudby-Smith}, \bibinfo{person}{David
  Arbour}, \bibinfo{person}{Ritwik Sinha}, \bibinfo{person}{Edward~H Kennedy},
  {and} \bibinfo{person}{Aaditya Ramdas}.} \bibinfo{year}{2021}\natexlab{}.
\newblock \showarticletitle{Time-uniform central limit theory, asymptotic
  confidence sequences, and anytime-valid causal inference}.
\newblock \bibinfo{journal}{\emph{arXiv preprint arXiv:2103.06476}}
  (\bibinfo{year}{2021}).
\newblock


\bibitem[Waudby-Smith and Ramdas(2020)]%
        {waudby2020estimating}
\bibfield{author}{\bibinfo{person}{Ian Waudby-Smith} {and}
  \bibinfo{person}{Aaditya Ramdas}.} \bibinfo{year}{2020}\natexlab{}.
\newblock \showarticletitle{Estimating means of bounded random variables by
  betting}.
\newblock \bibinfo{journal}{\emph{arXiv preprint arXiv:2010.09686}}
  (\bibinfo{year}{2020}).
\newblock


\bibitem[Zafar et~al\mbox{.}(2017)]%
        {zafar2017fairness}
\bibfield{author}{\bibinfo{person}{Muhammad~Bilal Zafar},
  \bibinfo{person}{Isabel Valera}, \bibinfo{person}{Manuel Gomez~Rodriguez},
  {and} \bibinfo{person}{Krishna~P Gummadi}.} \bibinfo{year}{2017}\natexlab{}.
\newblock \showarticletitle{Fairness beyond disparate treatment \& disparate
  impact: Learning classification without disparate mistreatment}. In
  \bibinfo{booktitle}{\emph{Proceedings of the 26th international conference on
  world wide web}}. \bibinfo{pages}{1171--1180}.
\newblock


\bibitem[Zafar et~al\mbox{.}(2019)]%
        {zafar2019fairness}
\bibfield{author}{\bibinfo{person}{Muhammad~Bilal Zafar},
  \bibinfo{person}{Isabel Valera}, \bibinfo{person}{Manuel Gomez-Rodriguez},
  {and} \bibinfo{person}{Krishna~P Gummadi}.} \bibinfo{year}{2019}\natexlab{}.
\newblock \showarticletitle{Fairness constraints: A flexible approach for fair
  classification}.
\newblock \bibinfo{journal}{\emph{The Journal of Machine Learning Research}}
  \bibinfo{volume}{20}, \bibinfo{number}{1} (\bibinfo{year}{2019}),
  \bibinfo{pages}{2737--2778}.
\newblock


\bibitem[Zemel et~al\mbox{.}(2013)]%
        {zemel2013learning}
\bibfield{author}{\bibinfo{person}{Rich Zemel}, \bibinfo{person}{Yu Wu},
  \bibinfo{person}{Kevin Swersky}, \bibinfo{person}{Toni Pitassi}, {and}
  \bibinfo{person}{Cynthia Dwork}.} \bibinfo{year}{2013}\natexlab{}.
\newblock \showarticletitle{Learning fair representations}. In
  \bibinfo{booktitle}{\emph{International conference on machine learning}}.
  PMLR, \bibinfo{pages}{325--333}.
\newblock


\bibitem[Zhang et~al\mbox{.}(2019)]%
        {zhang2019group}
\bibfield{author}{\bibinfo{person}{Xueru Zhang}, \bibinfo{person}{Mohammadmahdi
  Khaliligarekani}, \bibinfo{person}{Cem Tekin}, {et~al\mbox{.}}}
  \bibinfo{year}{2019}\natexlab{}.
\newblock \showarticletitle{Group retention when using machine learning in
  sequential decision making: the interplay between user dynamics and
  fairness}.
\newblock \bibinfo{journal}{\emph{Advances in neural information processing
  systems}}  \bibinfo{volume}{32} (\bibinfo{year}{2019}).
\newblock


\bibitem[Zhang and Liu(2021)]%
        {zhang2021fairness}
\bibfield{author}{\bibinfo{person}{Xueru Zhang} {and} \bibinfo{person}{Mingyan
  Liu}.} \bibinfo{year}{2021}\natexlab{}.
\newblock \showarticletitle{Fairness in learning-based sequential decision
  algorithms: A survey}.
\newblock In \bibinfo{booktitle}{\emph{Handbook of Reinforcement Learning and
  Control}}. \bibinfo{publisher}{Springer}, \bibinfo{pages}{525--555}.
\newblock


\end{thebibliography}

	\ifarxiv
		 \appendix
		 % !TEX root=main.tex
\newpage
\section{Proofs of the technical claims in Section~\ref{sec:interval estimator expected value}}
\label{sec:proofs_general_monitor}

For any sequence $\vec{x}=(x_s)_{s>0}$ and any $t \in \mathbb{N}$, let $\vec{x}_t$ be the subsequence of length $t$. For simplicity we assume that that $\vec{x}_0$ is the empty sequence.

\begin{definition}
	\label{definition:stochastic process}
	We call the stochastic process $\vec{O}=(O_t)_{t>0}=(X_t,Y_t)_{t>0}$ a stochastic process with observed linear shift,
    if for every time step $t$: $X_t$ is real valued;  $X_t-\expe(X_t\mid \vec{O}_{t-1})$ is a sub-exponential random variable with parameter $(\sigma^2,\nu)$ as defined in \cite[p.~26]{wainwright2019high}; $\expe_g(|X_t|\mid \seq{o}_{t-1})<\infty$ for every past observation sequence $\seq{o}_{t-1}$. And if there exists a function $\Delta$ such that for every concrete sequence $\vec{o}_t$
	\begin{align*}
		 \expe(X_{t+1} \mid \vec{o}_t) = \expe(X_{t} \mid \vec{o}_{t-1}) + \Delta(o_t).
	\end{align*}
\end{definition}
Moreover, for the remainder let $\vec{O}$ be a process as in Definition \ref{definition:stochastic process} and let $\vec{o}$ any concrete evaluation of $\vec{O}$. 

\begin{definition}
    \label{definition:estimator}
    We define the estimators such that for any time step $t$ and any sequence $\seq{o}_t$:
    \begin{align*}
        \hat{E}_1(\vec{o}_t) \coloneqq \frac{1}{t} \sum_{s=1}^{t}\left( x_s - \sum_{r=1}^{s-1} \Delta(o_r) \right) \quad \text{and} \quad \hat{E}(\vec{o}_t) \coloneqq \hat{E_1}(\vec{o}_t) + \sum_{r=1}^{s-1} \Delta(o_r).
    \end{align*}
\end{definition}
If clear from the context the we write $\hat{E}_1$ instead of $\hat{E}_1(\vec{o}_T)$. Analogue for $\hat{E}$.

The following lemma shows that $\hat{E}_1$ is an unbiased estimator of $\expe(X_1)$. We demonstrate this for $X_t$ and $Y_t$ being discrete. 
If $X_t$ or $Y_t$ are continuous, the sums over the respective domains have to be replaced with integrals and the probabilities have to be replaced with the probability density function. 
\begin{lemma}
    \label{lemma:estimator_unbiased}
    On the stochastic process $\vec{O}$ as in Definition \ref{definition:stochastic process} and for any time step $t$,
    the estimator $\hat{E}_1$ is an unbiased estimator for $\expe(X_1)$.
\end{lemma}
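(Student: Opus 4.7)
The plan is to prove unbiasedness by a direct calculation that reduces $\expe(X_s)$, for every $s \in \{1,\dots,t\}$, to $\expe(X_1)$ plus the expected accumulated shift, so that the correction term inside $\hat{E}_1$ exactly cancels the drift.

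First, I would unfold the linear-shift assumption from Definition~\ref{definition:stochastic process} to get a closed form for $\expe(X_s \mid \vec{O}_{s-1})$. Specifically, applying the identity $\expe(X_{r+1} \mid \vec{O}_r) = \expe(X_r \mid \vec{O}_{r-1}) + \Delta(O_r)$ recursively from $r=1$ up to $r=s-1$ (with $\vec{O}_0$ the empty sequence, so $\expe(X_1 \mid \vec{O}_0) = \expe(X_1)$), I obtain the telescoping identity
\begin{align*}
  \expe(X_s \mid \vec{O}_{s-1}) = \expe(X_1) + \sum_{r=1}^{s-1} \Delta(O_r).
\end{align*}
Taking an outer expectation and using the tower property together with the integrability condition $\expe(|X_s|)<\infty$, this gives $\expe(X_s) = \expe(X_1) + \expe\bigl(\sum_{r=1}^{s-1} \Delta(O_r)\bigr)$.

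Second, I would apply linearity of expectation to $\hat{E}_1(\vec{O}_t)$ from Definition~\ref{definition:estimator}:
\begin{align*}
  \expe\bigl(\hat{E}_1(\vec{O}_t)\bigr)
  = \frac{1}{t}\sum_{s=1}^{t} \Bigl( \expe(X_s) - \expe\Bigl(\textstyle\sum_{r=1}^{s-1}\Delta(O_r)\Bigr) \Bigr).
\end{align*}
Substituting the telescoped expression for $\expe(X_s)$, the two sums inside each bracket cancel, leaving $\expe(X_1)$ in every term, so the average is $\expe(X_1)$, as required.

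The only subtle points, rather than genuine obstacles, are (i) justifying the interchanges of sums and expectations (covered by the finite-sum structure together with the integrability hypothesis in Definition~\ref{definition:stochastic process}), and (ii) handling the base case $s=1$, where the inner sum $\sum_{r=1}^{0}\Delta(O_r)$ is empty by convention and the telescoping identity collapses to $\expe(X_1 \mid \vec{O}_0)=\expe(X_1)$. Both the discrete and continuous cases are treated uniformly by this argument, since it only uses linearity and the tower property; the summation/integration remark in the lemma statement is therefore needed only for explicitness, not for the proof itself.
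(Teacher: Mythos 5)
Your proposal is correct and follows essentially the same route as the paper's proof: telescope the linear-shift identity to get $\expe(X_s \mid \vec{O}_{s-1}) = \expe(X_1) + \sum_{r=1}^{s-1}\Delta(O_r)$, then observe that the correction term in $\hat{E}_1$ cancels the accumulated drift term by term. The only difference is presentational --- the paper carries out the tower-property step by hand as an explicit sum over discrete histories, whereas you invoke it abstractly, which also handles the continuous case uniformly as you note.
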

\begin{proof}
    To show this we need to establish that $\expe(\hat{E}_1(\vec{O}_t)) =\expe(X_1) $.
    First by linearity of expectation we obtain 
    \begin{align*}
        \expe(\hat{E}_1(\vec{O}_t))=  \frac{1}{t} \sum_{s=1}^{t}  \expe\left( X_s  - \sum_{r=1}^{s-1} \Delta(O_r) \right).
    \end{align*}
    Now we show for all $s$ that $\expe\left( X_s  - \sum_{r=1}^{s-1} \Delta(O_r) \right) =\expe(X_1)$.
    Notice that we need to take the expectation over the joint probability of the history. Hence, we have
    \begin{align*}
        \expe\left( X_s  - \sum_{r=1}^{s-1} \Delta(O_r) \right) = \sum_{z=(o_1, \dots, o_{s})} \pr\left(\vec{o}_{s}\right)\cdot  \left( x_{s} - \sum_{r=1}^{s-1} \Delta(o_r)\right).
    \end{align*}
    Using the definition of joint probability we obtain
    \begin{align*}
        \pr\left(\vec{o}_{s} \right) = \prod_{r=1}^{s-1} \pr\left( o_r \; \middle| \;\vec{o}_{r-1}\right) \cdot \pr\left(o_s \; \middle| \;\vec{o}_{s-1}\right) = \pr\left(\vec{o}_{s-1}\right)\cdot \pr\left(o_s\; \middle| \;\vec{o}_{s-1} \right)
    \end{align*}
    and thus we can rearrange the sum to
    \begin{align*}
        \sum_{z=(o_1,\dots, o_{s-1})} \left( \pr\left(\vec{o}_{s-1} \right) \cdot \left( \sum_{o_s} \pr\left(o_s \; \middle| \; \vec{o}_{s-1} \right) \cdot x_{s}  - \sum_{o_s}  \pr\left(o_s \; \middle| \; \vec{o}_{s-1} \right)  \cdot \sum_{r=1}^{s-1} \Delta(o_r)\right)\right).
    \end{align*}
    Notice that we can split the sum 
    \begin{align*}
        \sum_{o_s} \pr\left(o_s \; \middle| \; \vec{o}_{s-1} \right) \cdot x_{s} = \sum_{x_s} \sum_{y_s} \pr\left(x_s ,y_s \; \middle| \; \vec{o}_{s-1} \right) \cdot x_{s}.
    \end{align*}
    Since $x_s$ is constant when summing over $y_s$ and since $\sum_{r=1}^{s-1} \Delta(o_r)$ is constant when summing over $o_s$, the internal sum can be rewritten as 
    \begin{align*}
        \sum_{x_s}    x_s \cdot \sum_{y_s} \pr\left(x_s ,y_s \; \middle| \; \vec{o}_{s-1} \right)  - \sum_{r=1}^{s-1} \Delta(o_r) \cdot \sum_{o_s} \pr\left(o_s \; \middle| \; \vec{o}_{s-1}  \right).
    \end{align*}
    By the law of total probability we obtain 
    \begin{align*}
        \expe\left( X_s  - \sum_{r=1}^{s-1} \Delta(O_s) \right) 
        &= \sum_{z=(o_1,\dots, o_{s-1})} \pr\left( \vec{o}_{s-1}\right)   \cdot \left( \sum_{x_s}  x_s \cdot\pr\left( x_s \; \middle| \;\vec{o}_{s-1} \right) - \sum_{r=1}^{s-1} \Delta(o_r) \right) \\
        &=
        \sum_{z=(o_1,\dots, o_{s-1})}  \pr\left( \vec{o}_{s-1}\right)  \cdot \left(\expe\left( X_s \; \middle| \; \vec{o}_{s-1}\right) - \sum_{r=1}^{s-1} \Delta(o_r) \right).
    \end{align*}
    Moreover, we know that $\expe\left( X_s \; \middle| \; \vec{o}_{s-1}\right) =  \expe\left( X_{s-1} \; \middle| \; \vec{o}_{s-2}\right) + \Delta(o_{s-2})$. By repeated application of this equality we obtain  
    \begin{align*}
        \expe\left( X_s  - \sum_{r=1}^{s-1} \Delta(O_s) \right)  &=
        \sum_{z=(o_1,\dots, o_{s-1})}\pr\left( \vec{o}_{s-1}\right) \cdot \left(\expe\left( X_1\right) + \sum_{r=1}^{s-1} \Delta(o_r)  - \sum_{r=1}^{s-1} \Delta(o_r) \right)  \\
        &= \expe(X_1) \cdot \sum_{z=(o_1,\dots, o_{s-1})} \pr\left( \vec{o}_{s-1}\right) = \expe(X_1).
    \end{align*}
\end{proof}

The following supporting lemma demonstrates a useful equality.
\begin{lemma}
    \label{lemma:martingale_conditional_expectation_equalities}
    For the stochastic process $\vec{O}$ as in Definition \ref{definition:stochastic process} and for any $t,k \in \mathbb{N}$ such that $k<t$, the following equalities hold
    \begin{align*}
        \expe\left( X_t \; \middle| \; \vec{O}_{k}\right) 
        = 
        \expe\left( X_1 \right) + \sum_{s=1}^{t-1} \expe\left(  \Delta(O_s)\; \middle| \; \vec{O}_{k}\right) 
        =
        \expe\left( X_1 \right) + \sum_{s=1}^{k} \Delta(O_s) + \sum_{s=k+1}^{t-1} \expe\left(  \Delta(O_s)\; \middle| \; \vec{O}_{k}\right).
    \end{align*}
\end{lemma}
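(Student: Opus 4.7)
The plan is to chain together the one-step linear-shift recurrence from Definition~\ref{definition:stochastic process} with the tower property of conditional expectation. The two asserted equalities are really two views of the same unrolled recurrence; the first equality holds for arbitrary $k < t$, and the second simply notes that the first $k$ summands in it are already $\sigma(\vec{O}_k)$-measurable.

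First I would establish the fully unrolled recurrence
\begin{align*}
    \expe\left(X_t \;\middle|\; \vec{O}_{t-1}\right) = \expe(X_1) + \sum_{s=1}^{t-1} \Delta(O_s)
\end{align*}
by induction on $t$. The base case $t=1$ is trivial, and the inductive step is immediate from the defining identity $\expe(X_{t+1}\mid \vec{o}_t) = \expe(X_t\mid \vec{o}_{t-1}) + \Delta(o_t)$ in Definition~\ref{definition:stochastic process} together with the fact that, on the event $\{\vec{O}_{t-1} = \vec{o}_{t-1}\}$, the quantity $\sum_{s=1}^{t-1}\Delta(O_s)$ equals the deterministic value $\sum_{s=1}^{t-1}\Delta(o_s)$.

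Next, for any $k < t$, I would take conditional expectations of both sides of the unrolled identity given $\vec{O}_k$. Since $\vec{O}_k$ is contained in the information generated by $\vec{O}_{t-1}$, the tower property yields
\begin{align*}
    \expe\left(X_t \;\middle|\; \vec{O}_k\right) = \expe\left(\expe\left(X_t\;\middle|\;\vec{O}_{t-1}\right) \;\middle|\; \vec{O}_k\right) = \expe(X_1) + \sum_{s=1}^{t-1} \expe\left(\Delta(O_s)\;\middle|\;\vec{O}_k\right),
\end{align*}
using linearity of conditional expectation to pull the finite sum outside; the integrability condition $\expe(|X_t|\mid \vec{o}_{t-1}) < \infty$ from Definition~\ref{definition:stochastic process} guarantees that each $\Delta(O_s)$ is integrable, so the interchange is legitimate. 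This gives the first claimed equality.

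For the second equality, I would split the sum at index $k$: for every $s \le k$, the random variable $\Delta(O_s)$ is a function of $O_s$ and hence $\sigma(\vec{O}_k)$-measurable, so $\expe(\Delta(O_s)\mid \vec{O}_k) = \Delta(O_s)$ almost surely. Replacing the first $k$ conditional-expectation terms with $\Delta(O_s)$ itself yields exactly the right-hand expression. There is no real obstacle here; the only point that requires a brief justification is integrability of the $\Delta(O_s)$, which follows from the finite-conditional-mean assumption and the defining recurrence rewritten as $\Delta(o_t) = \expe(X_{t+1}\mid\vec{o}_t) - \expe(X_t\mid\vec{o}_{t-1})$.
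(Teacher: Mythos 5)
Your proposal is correct and follows essentially the same route as the paper: unroll the one-step recurrence to get $\expe(X_t\mid\vec{O}_{t-1})=\expe(X_1)+\sum_{s=1}^{t-1}\Delta(O_s)$, condition on $\vec{O}_k$ via the tower property and linearity, then use $\sigma(\vec{O}_k)$-measurability of $\Delta(O_s)$ for $s\le k$ to split the sum. The only (cosmetic) difference is that the paper iterates the tower property through the nested chain $\vec{O}_{t-1},\vec{O}_{t-2},\dots,\vec{O}_k$ one level at a time, whereas you collapse it in a single step.
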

\begin{proof}
    First let $t=k+1$ then by the properties of the stochastic process.
    \begin{align*}
        \expe\left( X_{k+1} \; \middle| \; \vec{O}_{k} \right) = \expe\left( X_{k} \; \middle| \; \vec{O}_{k-1} \right) + \Delta(O_k).
    \end{align*}
    Hence, by repeated substitution we obtain 
    \begin{align*}
        \expe\left( X_{k+1} \; \middle| \; \vec{O}_{k} \right) = \expe\left( X_1 \right) + \sum_{s=1}^{k} \Delta(O_s).
    \end{align*}
    Let $t>k+1$ and notice that
    \begin{align*}
        \expe\left( X_{t} \; \middle| \; \vec{O}_{k} \right) = \expe \left( \expe\left( X_{t} \; \middle| \;   \vec{O}_{k+1 }\right) \; \middle| \;  \vec{O}_{k}  \right),
    \end{align*}
    which follows from the definition of conditional expectation.
    Hence, by repeated application of this equality we obtain the following nested expectation expression. 
    \begin{align*}
        \expe\left( X_{t} \; \middle| \; \vec{O}_{k} \right) = \expe \left( \dots \expe\left(\expe\left(\expe\left( X_{t} \; \middle| \;   \vec{O}_{t-1 }\right) \; \middle| \; \vec{O}_{t-2 }\right) \; \middle| \; \vec{O}_{t-3 }\right) \dots  \; \middle| \;  \vec{O}_{k}  \right).
    \end{align*}
    First we notice that we can decompose the innermost expectation using Lemma \ref{lemma:martingale_conditional_expectation_equalities} to obtain 
    \begin{align*}
        \expe\left( X_{t} \; \middle| \; \vec{O}_{k} \right) = \expe \left( \dots \expe\left(\expe\left( \expe\left( X_1 \right) + \sum_{s=1}^{t-1} \Delta(O_s)  \; \middle| \; \vec{O}_{t-2 }\right) \; \middle| \; \vec{O}_{t-3 }\right) \dots  \; \middle| \;  \vec{O}_{k}  \right).
    \end{align*}
    Now we apply the tower property of the conditional expectation and linearity of expectation to obtain
    \begin{align*}
        \expe\left( X_{t} \; \middle| \; \vec{O}_{k} \right) &= \expe \left(\expe\left( X_1 \right) + \sum_{s=1}^{t-1} \Delta(O_s)    \; \middle| \;  \vec{O}_{k}  \right)  \\
        &= \expe \left(\expe\left( X_1 \right)  \; \middle| \;  \vec{O}_{k}  \right)  + \sum_{s=1}^{t} \expe\left(\Delta(O_s)    \; \middle| \;  \vec{O}_{k}  \right) \\
        &= \expe\left( X_1 \right) + \sum_{s=1}^k \Delta(O_s) +  \sum_{s=k+1}^{t-1} \expe\left(  \Delta(O_s)\; \middle| \; \vec{O}_{k}\right).
    \end{align*}
    The last step is possible because the expected value is a constant and because $\expe\left( f(W)\cdot V \; \middle| \; W\right) = f(W) \cdot \expe\left(  V \; \middle| \; W\right)$ for any random variables $V$ and $W$.

\end{proof}

The following lemma shows that the estimator $\hat{E}$ estimates the quantity we are interested in, i.e.\ $\expe\left(X_t \; \middle| \; \vec{o}_{t-1} \right)$.

\begin{lemma}
    For the stochastic process $\vec{O}$ as in Definition \ref{definition:stochastic process} and for any time step $t$:
    \begin{align*}
        \expe\left(X_t \;\middle|\; \vec{O}_{t-1}\right) = \expe\left(\hat{E}\left(\vec{O}_t\right)\;\middle|\; \vec{O}_{t-1} \right).
    \end{align*}
\end{lemma}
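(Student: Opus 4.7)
The plan is to chain together the two preceding supporting lemmas to convert the left-hand side into the right-hand side, closely following the three-step sketch in the body of the paper. The first move is to apply Lemma~\ref{lemma:martingale_conditional_expectation_equalities} with $k = t-1$ to unfold the LHS as $\expe(X_1) + \sum_{s=1}^{t-1}\Delta(O_s)$.

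Second, I would invoke Lemma~\ref{lemma:estimator_unbiased} to substitute $\expe(X_1)$ with $\expe(\hat{E}_1(\vec{O}_t))$, obtaining $\expe(\hat{E}_1(\vec{O}_t)) + \sum_{s=1}^{t-1}\Delta(O_s)$. The goal of the final step is then to identify this quantity with $\expe(\hat{E}(\vec{O}_t) \mid \vec{O}_{t-1})$ via the definition $\hat{E}(\vec{O}_t) = \hat{E}_1(\vec{O}_t) + \sum_{r=1}^{t-1}\Delta(O_r)$. Since $\sum_{r=1}^{t-1}\Delta(O_r)$ is $\vec{O}_{t-1}$-measurable, it factors through the conditional expectation, so the closing step reduces to showing $\expe(\hat{E}_1(\vec{O}_t) \mid \vec{O}_{t-1}) = \expe(\hat{E}_1(\vec{O}_t))$.

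The main obstacle is this last identification, which is a conditional refinement of Lemma~\ref{lemma:estimator_unbiased}. The natural route is to replay the argument of that lemma under the conditioning on $\vec{O}_{t-1}$: decompose $\hat{E}_1(\vec{O}_t) = \frac{1}{t}\sum_{s=1}^{t} T_s$ where $T_s = X_s - \sum_{r=1}^{s-1}\Delta(O_r)$, and handle each summand. The summand $T_t$ is the only one with randomness beyond $\vec{O}_{t-1}$, and Lemma~\ref{lemma:martingale_conditional_expectation_equalities} yields $\expe(T_t \mid \vec{O}_{t-1}) = \expe(X_t \mid \vec{O}_{t-1}) - \sum_{r=1}^{t-1}\Delta(O_r) = \expe(X_1)$ directly. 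The delicate point is that the earlier summands $T_1, \ldots, T_{t-1}$ are already $\vec{O}_{t-1}$-measurable, so the telescoping pattern between the $X_s$'s and the partial $\Delta$-sums must be exploited carefully, mirroring the joint-probability and law-of-total-probability manipulations from the proof of Lemma~\ref{lemma:estimator_unbiased}, in order to recover $\expe(X_1)$ as the target constant.
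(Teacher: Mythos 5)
Your first two steps coincide with the paper's own (one-line) proof: Lemma~\ref{lemma:martingale_conditional_expectation_equalities} with $k=t-1$ gives $\expe(X_t\mid\vec{O}_{t-1})=\expe(X_1)+\sum_{s=1}^{t-1}\Delta(O_s)$, and Lemma~\ref{lemma:estimator_unbiased} rewrites $\expe(X_1)$ as $\expe(\hat{E}_1(\vec{O}_t))$. You are also right that, after pulling the $\vec{O}_{t-1}$-measurable term $\sum_{r=1}^{t-1}\Delta(O_r)$ out of the conditional expectation, everything hinges on the identity $\expe(\hat{E}_1(\vec{O}_t)\mid\vec{O}_{t-1})=\expe(\hat{E}_1(\vec{O}_t))$.

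That last identity is where the argument breaks, and ``replaying the proof of Lemma~\ref{lemma:estimator_unbiased} under the conditioning'' cannot rescue it. Writing $\hat{E}_1(\vec{O}_t)=\frac{1}{t}\sum_{s=1}^{t}T_s$ with $T_s=X_s-\sum_{r=1}^{s-1}\Delta(O_r)$, only $T_t$ carries randomness beyond $\vec{O}_{t-1}$; conditioning does give $\expe(T_t\mid\vec{O}_{t-1})=\expe(X_1)$, but for $s\le t-1$ the summand is $\vec{O}_{t-1}$-measurable, so $\expe(T_s\mid\vec{O}_{t-1})=T_s$, i.e.\ the realized value $x_s-\sum_{r=1}^{s-1}\Delta(o_r)$, not the constant $\expe(X_1)$. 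The averaging over histories (the law-of-total-probability step) that drives the unconditional lemma is exactly what disappears once you condition on the whole past. Hence
\begin{align*}
\expe\left(\hat{E}_1(\vec{O}_t)\,\middle|\,\vec{O}_{t-1}\right)-\expe\left(\hat{E}_1(\vec{O}_t)\right)=\frac{1}{t}\sum_{s=1}^{t-1}\left(T_s-\expe(X_1)\right),
\end{align*}
which is a nonzero random variable in general (already for $t=2$, $\Delta\equiv 0$ and $X_1,X_2$ i.i.d.\ with mean $\mu$, the left-hand side of the lemma is $\mu$ while the right-hand side is $\frac{1}{2}(X_1+\mu)$). So the stated equality fails as a pointwise identity of conditional expectations; it holds only after a further outer expectation, or approximately, with the discrepancy controlled by the concentration of $\hat{E}_1$ around $\expe(X_1)$ (Lemma~\ref{lemma:martingale_estimator_concentration_general}). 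To be fair, the paper's own final equality makes the same silent identification of $\expe(\hat{E}_1(\vec{O}_t)\mid\vec{O}_{t-1})$ with $\expe(\hat{E}_1(\vec{O}_t))$, so the obstacle you isolated is a genuine gap in the source rather than a defect of your reading; but as a proof of the lemma as stated, your plan cannot be completed.
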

\begin{proof}
    Here we utilize the properties of conditional expectation and Lemma \ref{lemma:martingale_conditional_expectation_equalities} to show that
    \begin{align*}
        \expe\left(X_t \;\middle|\; \vec{O}_{t-1}\right) 
        = 
        \expe(X_1)+ \sum_{s=1}^{t-1} \Delta(O_s) 
        = 
        \expe\left(\hat{E}_1\left(\vec{O}_t\right)\right) + \sum_{s=1}^{t-1} \Delta(O_s) 
        =
        \expe\left(\hat{E}\left(\vec{O}_t\right)\;\middle|\; \vec{O}_{t-1} \right).
    \end{align*}
\end{proof}

The following lemma provides insight into the martingale difference sequence used for the approximation of our estimator. 
\begin{lemma}
    \label{lemma:estimator_difference}
    For the stochastic process $\vec{O}$ as in Definition \ref{definition:stochastic process} and for any time step $t$ the sequence 
    \begin{align*}
        \left( \expe\left( \hat{E}_1(\vec{O}_t) \; \middle| \; \vec{O}_s \right) \right)_{s\in [1\twodots t]}
    \end{align*}
     is a martingale such that for all $s\in[1\twodots t-1]$:
    \begin{align*}
        \expe\left( \hat{E}_1(\vec{O}_t) \; \middle| \; \vec{O}_{s+1} \right)- \expe\left( \hat{E}_1(\vec{O}_t) \; \middle| \; \vec{O}_s \right) =\frac{1}{t}\left(X_{s+1} - \expe\left( X_{s+1} \; \middle| \; \vec{O}_s\right)\right).
    \end{align*}
\end{lemma}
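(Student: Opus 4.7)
The plan is to verify the martingale property by the standard Doob construction, and then obtain the closed-form for the difference by carefully decomposing $\hat{E}_1(\vec{O}_t)$ term-by-term and applying Lemma \ref{lemma:martingale_conditional_expectation_equalities} to each term.

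First, the martingale property follows immediately from the tower property of conditional expectation: for $s < t$,
\[
\expe\bigl(\expe(\hat{E}_1(\vec{O}_t) \mid \vec{O}_{s+1}) \mid \vec{O}_s\bigr) = \expe(\hat{E}_1(\vec{O}_t) \mid \vec{O}_s),
\]
together with the integrability condition coming from $\expe_g(|X_t| \mid \vec{o}_{t-1}) < \infty$ in Assumption~\ref{assump:change function}. So the only substantive work is to compute the difference.

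Write $Z_r \coloneqq X_r - \sum_{j=1}^{r-1} \Delta(O_j)$, so that $\hat{E}_1(\vec{O}_t) = \frac{1}{t}\sum_{r=1}^t Z_r$. I would split the sum at index $k$ into ``past'' terms ($r \le k$, which are measurable with respect to $\vec{O}_k$ and contribute themselves) and ``future'' terms ($r > k$). For the future terms, Lemma~\ref{lemma:martingale_conditional_expectation_equalities} gives
\[
\expe(X_r \mid \vec{O}_k) = \expe(X_1) + \sum_{j=1}^{k}\Delta(O_j) + \sum_{j=k+1}^{r-1}\expe(\Delta(O_j)\mid \vec{O}_k),
\]
and by linearity $\expe\bigl(\sum_{j=1}^{r-1}\Delta(O_j)\mid \vec{O}_k\bigr)$ equals $\sum_{j=1}^{k}\Delta(O_j) + \sum_{j=k+1}^{r-1}\expe(\Delta(O_j)\mid \vec{O}_k)$. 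Subtracting these, the $\Delta$-terms cancel and I obtain $\expe(Z_r \mid \vec{O}_k) = \expe(X_1)$ for every $r > k$. This is the refined, conditional analogue of the unbiasedness identity in Lemma~\ref{lemma:estimator_unbiased}, and is the technical heart of the argument.

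Combining both cases yields the clean closed form
\[
\expe(\hat{E}_1(\vec{O}_t) \mid \vec{O}_k) = \frac{1}{t}\left(\sum_{r=1}^{k} Z_r + (t-k)\,\expe(X_1)\right).
\]
Taking the difference at $k = s+1$ versus $k = s$ makes all but one term telescope, leaving $\frac{1}{t}(Z_{s+1} - \expe(X_1))$. Finally, using once more that $\expe(X_{s+1}\mid \vec{O}_s) = \expe(X_1) + \sum_{r=1}^{s}\Delta(O_r)$, I rewrite
\[
Z_{s+1} - \expe(X_1) = X_{s+1} - \sum_{r=1}^{s}\Delta(O_r) - \expe(X_1) = X_{s+1} - \expe(X_{s+1}\mid \vec{O}_s),
\]
which is exactly the claimed expression. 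The main obstacle is the careful bookkeeping in the ``future'' case to confirm that the $\vec{O}_k$-conditional expectations of $X_r$ and of $\sum_{j=1}^{r-1}\Delta(O_j)$ cancel cleanly down to $\expe(X_1)$; once that identity is secured, the remainder of the proof is essentially telescoping.
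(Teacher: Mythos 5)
Your proposal is correct and follows essentially the same route as the paper: split the sum defining $\hat{E}_1$ at the conditioning index, use Lemma \ref{lemma:martingale_conditional_expectation_equalities} to reduce each ``future'' term to $\expe(X_1)$ while the ``past'' terms are $\vec{O}_k$-measurable, and telescope the difference. If anything, your bookkeeping is tidier—the paper's displayed intermediate steps carry a sign slip (writing $X_s + \sum_{r=1}^{s-1}\Delta(O_r)$ where the estimator has a minus), which your $Z_r$ decomposition avoids while arriving at the same final identity.
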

\begin{proof}
    First, $M$ is a (Doob-)martingale, which follows from the definition of conditional expectation, i.e.\ for any $k\in[1\twodots t-1]$
    \begin{align*}
    \expe\left(\expe\left(\hat{E}_1(\vec{O}_t) \; \middle| \; \vec{O}_{k+1} \right) \; \middle| \;  \vec{O}_{k} \right)= \expe\left(\hat{E}_1(\vec{O}_t)\; \middle| \; \vec{O}_k \right),
    \end{align*}
    and the fact that the conditional expected value of $|X_t|$ is bounded, as required in Definition \ref{definition:stochastic process}.
    Second, we need to compute for any $k\in[1\twodots t-1]$ the difference between 
    \begin{align*}
        \expe\left(\frac{1}{t} \sum_{s=1}^{t} X_s - \sum_{r=1}^{s-1} \Delta(O_r)  \; \middle| \; \vec{O}_{k+1} \right) - \expe\left( \frac{1}{t} \sum_{s=1}^{t} X_s - \sum_{r=1}^{s-1} \Delta(O_r) \; \middle| \; \vec{O}_k \right).
    \end{align*}
    First notice that from Lemma \ref{lemma:martingale_conditional_expectation_equalities} if $s>k$,
    \begin{align*}
        \expe\left( X_s \; \middle| \; \vec{O}_k \right) - \sum_{r=1}^{s-1} \expe\left(\Delta(O_r) \; \middle| \; \vec{O}_k \right) 
        =\expe\left( X_1\right)
    \end{align*}
    and for $s\leq k$,
    \begin{align*}
        \expe\left( X_s \; \middle| \; \vec{O}_k \right) - \sum_{r=1}^{s-1} \expe\left(\Delta(O_r) \; \middle| \; \vec{O}_k  \right) = X_s + \sum_{r=1}^{s-1} \Delta(O_r).
    \end{align*}
    Therefore, now we split the sums at $k$ and $k+1$. Then we apply the previous observations to obtain
    \begin{align*}
        &\expe\left( \sum_{s=1}^{t} X_s - \sum_{r=1}^{s-1} \Delta(O_r)  \; \middle| \; \vec{O}_{k+1} \right) - \expe\left( \sum_{s=1}^{t} X_s - \sum_{r=1}^{s-1} \Delta(O_r) \; \middle| \; \vec{O}_k \right) \\
        &= \sum_{s=1}^{k+1}\left( X_s + \sum_{r=1}^{s-1} \Delta(O_r)\right) + (t-k-1)\cdot \expe\left( X_1\right) - \sum_{s=1}^{k} \left(X_s + \sum_{r=1}^{s-1} \Delta(O_r)\right) - (t-k)\cdot \expe\left( X_1\right)  \\
        &= X_{k+1} + \sum_{s=1}^{k}  \Delta(O_s) - \expe(X_1)  
        = X_{k+1} - \expe\left( X_{k+1} \; \middle| \; \vec{O}_k\right),
    \end{align*}
    which concludes the proof.
\end{proof}
With the martingale difference sequence bounded, we can use an Azuma-style concentration inequality to demonstrate that our estimator $\hat{E}_1$ concentrates well around its mean, and thus $\expe(X_1)$.

\begin{lemma}
    \label{lemma:martingale_estimator_concentration_general}
    
	Let $\vec{O}$ be a stochastic process as in Definition \ref{definition:stochastic process}. Let $\varepsilon>0$.
	For every time step $t$,
    \begin{align*}
        \pr\left(\left| \hat{E}_1(\vec{O}_t) - \expe(X_1)\right| \geq \varepsilon \right) \leq 2\exp\left( - \min\left\lbrace \frac{t\cdot \varepsilon^2}{2 \sigma^2}, \frac{t\cdot \varepsilon}{2\nu} \right\rbrace \right)
    \end{align*}
    and $\hat{E}_1(\vec{O}_t) - \expe(X_1)$ is sub-exponential with parameters $(\frac{\sigma^2}{t},\frac{\nu}{t})$.
\end{lemma}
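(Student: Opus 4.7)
The plan is to express $\hat{E}_1(\vec{O}_t) - \expe(X_1)$ as a telescoping sum of the martingale differences computed in Lemma \ref{lemma:estimator_difference}, then apply a Bernstein-type concentration inequality for martingales with conditionally sub-exponential increments.

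First I would observe, using Lemma \ref{lemma:estimator_unbiased} together with the tower property, that the Doob martingale $M_s \coloneqq \expe(\hat{E}_1(\vec{O}_t) \mid \vec{O}_s)$ has endpoints $M_0 = \expe(\hat{E}_1(\vec{O}_t)) = \expe(X_1)$ and $M_t = \hat{E}_1(\vec{O}_t)$. Therefore
\begin{align*}
\hat{E}_1(\vec{O}_t) - \expe(X_1) = \sum_{s=0}^{t-1} (M_{s+1} - M_s).
\end{align*}
By Lemma \ref{lemma:estimator_difference}, each increment equals $\frac{1}{t}\left(X_{s+1} - \expe(X_{s+1}\mid \vec{O}_s)\right)$. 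Definition \ref{definition:stochastic process} guarantees that the centered random variable $X_{s+1} - \expe(X_{s+1}\mid \vec{O}_s)$ is (conditionally) sub-exponential with parameters $(\sigma^2,\nu)$. Scaling by $1/t$ preserves sub-exponentiality, yielding conditional parameters $(\sigma^2/t^2,\, \nu/t)$ for each $M_{s+1}-M_s$.

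Next I would invoke the standard conditional moment-generating-function bound for sub-exponential variables: for each $s$ and every $|\lambda| < t/\nu$,
\begin{align*}
\expe\!\left(e^{\lambda(M_{s+1}-M_s)} \;\middle|\; \vec{O}_s\right) \leq \exp\!\left(\frac{\lambda^2 \sigma^2}{2 t^2}\right).
\end{align*}
Iterating the tower property across $s = 0,1,\ldots,t-1$ lets the conditional bounds multiply, giving
\begin{align*}
\expe\!\left(e^{\lambda(\hat{E}_1(\vec{O}_t)-\expe(X_1))}\right) \leq \exp\!\left(\frac{\lambda^2 \sigma^2}{2 t}\right)
\end{align*}
for $|\lambda| < t/\nu$. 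This is exactly the definition of $\hat{E}_1(\vec{O}_t)-\expe(X_1)$ being sub-exponential with parameters $(\sigma^2/t,\, \nu/t)$, establishing the second claim of the lemma.

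Finally, the tail bound follows by applying the standard Chernoff argument to this MGF bound and optimising over $\lambda$ in the two regimes $\lambda \in [0, t/\nu)$ and $\lambda = t/\nu$, which produces the piecewise minimum $\min\{t\varepsilon^2/(2\sigma^2),\, t\varepsilon/(2\nu)\}$; a symmetric argument for the lower tail and a union bound deliver the factor of $2$. The main obstacle is using the \emph{conditional} sub-exponential property rather than independence: the iterated-expectation step must be set up so that at each stage $M_{s+1}-M_s$ is integrated against the conditional law given $\vec{O}_s$ before taking the outer expectation, which is where the martingale structure from Lemma \ref{lemma:estimator_difference} is essential. Once this conditioning is handled correctly, the remaining calculations are routine applications of standard sub-exponential concentration machinery such as in Wainwright's book.
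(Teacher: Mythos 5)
Your proposal is correct and follows essentially the same route as the paper: the paper likewise invokes the Doob martingale and the difference formula from Lemma~\ref{lemma:estimator_difference}, notes that each increment is sub-exponential with parameters $(\sigma^2/t^2,\nu/t)$, cites the standard concentration bound for sub-exponential martingale difference sequences to obtain both the tail bound and the $(\sigma^2/t,\nu/t)$ sub-exponentiality of $\hat{E}_1(\vec{O}_t)-\expe(\hat{E}_1(\vec{O}_t))$, and then uses Lemma~\ref{lemma:estimator_unbiased} to replace $\expe(\hat{E}_1(\vec{O}_t))$ with $\expe(X_1)$. The only difference is that you unpack that cited black-box inequality into its conditional-MGF/tower-property proof, which is a faithful expansion rather than a different argument.
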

\begin{proof}
    Recall the Doob martingale as defined in Lemma \ref{lemma:estimator_difference}. For any $s\in[1\twodots t-1]$, the corresponding martingale difference is $D_s\coloneqq \frac{1}{t}\cdot \left( X_{s+1} - \expe\left( X_{s+1} \; \middle| \; \vec{O}_s\right) \right)$. That is, it is a sub-exponential random variable with parameters $(\sigma^2,\nu)$ divided by $t$. Hence, from the definition of a sub-exponential random variables it follows directly that $D_s$ is a sub-exponential random variable with parameters $(\frac{\sigma^2}{t^2},\frac{\nu}{t})$. 
    By applying a standard concentration bound for sub-exponential martingale difference sequences we obtain 
    \begin{align*}
        \pr\left(\left| \hat{E}_1(\vec{O}_t) - \expe(\hat{E}_1(\vec{O}_t)) \right| \geq \varepsilon \right) \leq 2\exp\left( - \min\left\lbrace \frac{t\cdot \varepsilon^2}{2 \sigma^2}, \frac{t\cdot \varepsilon}{2\nu} \right\rbrace \right).
    \end{align*}
    as well as the knowledge that $\hat{E}_1(\vec{O}_t) - \expe(\hat{E}_1(\vec{O}_t))$ is sub-exponential with parameters $(\frac{\sigma^2}{t},\frac{\nu}{t})$. 
    By applying Lemma \ref{lemma:estimator_unbiased} we can replace $\expe(\hat{E}(\vec{O}_t))$ with $\expe(X_1)$, thus concluding the proof.
\end{proof}

Now we can show the soundness of our monitor, by transferring this bound to $\hat{E}$.
\begin{theorem}
    \label{theorem:martingale_estimator_concentration_general_at_T}
    Let $\delta\in [0,1]$.
	Let $\vec{O}$ be a stochastic process as defined in Definition \ref{definition:stochastic process}.
	For every time step $t$,
	\begin{align*}
        \left| \expe(X_t\mid \vec{O}_{t-1})-\hat{E}(\vec{O}_t) \right| \leq\max\left\lbrace 
		\sqrt{\frac{2\sigma^2}{t} \cdot \log\left( \frac{2}{\delta}\right)} ,
		\frac{2\nu}{t}\log\left( \frac{2}{\delta}\right) \right\rbrace
    \end{align*}
	holds with probability $1-\delta$. 
\end{theorem}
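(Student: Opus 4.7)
The plan is to reduce the claim to the concentration inequality already established in Lemma~\ref{lemma:martingale_estimator_concentration_general} and then invert the resulting tail to put it in the promised PAC form. The key observation is that the error $\expe(X_t\mid \vec{O}_{t-1})-\hat{E}(\vec{O}_t)$ telescopes to $\expe(X_1)-\hat{E}_1(\vec{O}_t)$, so no new probabilistic work is required on top of what has already been done.

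Concretely, I would first apply Lemma~\ref{lemma:martingale_conditional_expectation_equalities} with $k=t-1$ to rewrite $\expe(X_t\mid \vec{O}_{t-1}) = \expe(X_1) + \sum_{s=1}^{t-1}\Delta(O_s)$, and use the definition $\hat{E}(\vec{O}_t) = \hat{E}_1(\vec{O}_t) + \sum_{s=1}^{t-1}\Delta(O_s)$ from Definition~\ref{definition:estimator}. Subtracting, the $\Delta$-sums cancel almost surely, giving the pathwise identity
\[
\expe(X_t\mid \vec{O}_{t-1}) - \hat{E}(\vec{O}_t) \;=\; \expe(X_1) - \hat{E}_1(\vec{O}_t).
\]
Hence any tail bound on $|\hat{E}_1(\vec{O}_t) - \expe(X_1)|$ transfers verbatim to the quantity appearing in the theorem.

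It then remains to invert the sub-exponential concentration inequality from Lemma~\ref{lemma:martingale_estimator_concentration_general}, namely $\pr(|\hat{E}_1(\vec{O}_t) - \expe(X_1)| \geq \varepsilon) \leq 2\exp(-\min\{t\varepsilon^2/(2\sigma^2),\, t\varepsilon/(2\nu)\})$. Setting the right-hand side equal to $\delta$ and solving for $\varepsilon$, the equation $\log(2/\delta) = \min\{t\varepsilon^2/(2\sigma^2),\, t\varepsilon/(2\nu)\}$ splits into two regimes: the sub-gaussian regime, where the quadratic term is the active one and inversion yields $\varepsilon = \sqrt{(2\sigma^2/t)\log(2/\delta)}$, and the sub-exponential regime, where the linear term is active and inversion yields $\varepsilon = (2\nu/t)\log(2/\delta)$. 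Taking the maximum of the two candidate radii dominates the exact case-wise inverse, and substituting back into the concentration bound gives the stated high-probability estimate.

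I do not anticipate a substantial obstacle here, since the heavy lifting, the construction of the Doob martingale, the identification of its differences as sub-exponential with parameters $(\sigma^2/t^2, \nu/t)$, and the application of the Azuma-style inequality for sub-exponential martingale difference sequences, is already packaged inside Lemma~\ref{lemma:martingale_estimator_concentration_general}. The only subtle point is in the inversion step: I would argue that replacing the piecewise exact inverse by the single expression $\max\{\sqrt{(2\sigma^2/t)\log(2/\delta)},\, (2\nu/t)\log(2/\delta)\}$ only weakens the bound, because for any fixed $\delta$ exactly one of the two terms is tight while the other is at least as large, so the maximum always upper-bounds the true radius needed for probability $\delta$.
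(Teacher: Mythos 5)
Your proposal is correct and follows essentially the same route as the paper: the paper's proof likewise reduces the claim to Lemma~\ref{lemma:martingale_estimator_concentration_general} by observing that adding and subtracting $\sum_{s=1}^{t-1}\Delta(O_s)$ turns $|\hat{E}_1(\vec{O}_t)-\expe(X_1)|$ into $|\hat{E}(\vec{O}_t)-\expe(X_t\mid\vec{O}_{t-1})|$, and then expresses $\varepsilon$ as a function of $\delta$ exactly as you do. Your justification of the inversion step (that the maximum of the two candidate radii forces both exponents to be at least $\log(2/\delta)$, hence dominates the piecewise-exact inverse) is a welcome piece of detail that the paper leaves implicit.
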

\begin{proof}
    We start by noticing that
    \begin{align*}
        \pr\left(\left| \expe\left( \hat{E}_1(\vec{O}_t)\right) - \expe(X_1) + \sum_{s=1}^{t-1} \left( \Delta(O_s)-\Delta(O_s) \right)\right| \geq \varepsilon \right) &= \pr\left(\left| \expe\left( \hat{E}_1(\vec{O}_t)\right) + \sum_{s=1}^{t-1} \Delta(O_s) - \expe(X_1)- \sum_{s=1}^{t-1} \Delta(O_s)\right| \geq \varepsilon \right)
        \\
        & = \pr\left(\left| \expe\left( \hat{E}(\vec{O}_t)\right)  - \expe\left( X_t\; \middle|\; \vec{O}_{t-1}  \right)\right| \geq \varepsilon \right),
    \end{align*}
    which we can bound by applying Lemma \ref{lemma:martingale_estimator_concentration_general}. To conclude we express $\varepsilon$ as function of $\delta$.
\end{proof}

\section{Proofs of the technical claims in Section~\ref{sec:attention}}

To show that the monitor for the attention setting is sound. We need to establish that for $X \sim \poisson{\lambda}$, $X-\lambda$ is sub-exponential. 
\begin{lemma}
    \label{lemma:subexponential_poisson}
    Let $X \sim \poisson{\lambda}$ then $X-\lambda$ is a sub-exponential random variable with parameter $(2\lambda,2)$.
\end{lemma}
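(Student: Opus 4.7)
The plan is to verify the sub-exponential condition directly via the moment generating function. Recall the standard definition (as in \cite[p.~26]{wainwright2019high}): a zero-mean random variable $Y$ is sub-exponential with parameters $(\sigma^2,\nu)$ if $\expe[e^{sY}] \leq \exp(s^2\sigma^2/2)$ for all $|s| \leq 1/\nu$. So for $Y = X - \lambda$ with target parameters $(2\lambda, 2)$, I need to establish
\begin{align*}
\expe\!\left[e^{s(X-\lambda)}\right] \leq e^{\lambda s^2} \quad \text{for all } |s| \leq 1/2.
\end{align*}

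First I would compute the MGF exactly. Since the MGF of $X \sim \poisson{\lambda}$ is the well-known $\expe[e^{sX}] = \exp(\lambda(e^s - 1))$, a direct calculation gives
\begin{align*}
\expe\!\left[e^{s(X-\lambda)}\right] = e^{-s\lambda}\cdot e^{\lambda(e^s - 1)} = \exp\!\bigl(\lambda(e^s - 1 - s)\bigr).
\end{align*}
Hence the claim reduces to the scalar inequality $e^s - 1 - s \leq s^2$ for $|s| \leq 1/2$, which can then be multiplied through by $\lambda \geq 0$ without changing direction.

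Second, I would establish this scalar bound using the Taylor expansion $e^s - 1 - s = \sum_{k=2}^{\infty} s^k / k!$. The key observation is that $k! \geq 2\cdot 2^{k-2}$ for all $k\geq 2$, so each term is dominated by $\frac{s^2}{2}(|s|/2)^{k-2}$. Summing the geometric series gives
\begin{align*}
e^s - 1 - s \;\leq\; \frac{s^2}{2} \cdot \frac{1}{1 - |s|/2},
\end{align*}
and for $|s| \leq 1/2$ the factor $1/(1-|s|/2) \leq 4/3 < 2$, so $e^s - 1 - s \leq s^2$ as required. Combining the two steps yields the MGF bound, which is exactly the sub-exponential condition with parameters $(2\lambda, 2)$.

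I do not expect any serious obstacle: the only subtle point is matching the constants so that the admissible range $|s|\leq 1/\nu = 1/2$ is precisely where the Taylor-series bound holds. A looser version of this lemma is standard folklore for Poisson concentration (often stated as Bennett/Bernstein-type bounds), so the main work is simply picking constants consistent with the $(\sigma^2,\nu)$ convention used elsewhere in the paper.
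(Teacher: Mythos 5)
Your proposal is correct and follows essentially the same route as the paper: both compute the moment generating function $\expe[e^{s(X-\lambda)}] = \exp(\lambda(e^s - 1 - s))$ exactly and reduce the claim to the scalar inequality $e^s - 1 - s \leq s^2$ on $|s| \leq 1/2$. The only difference is how that elementary inequality is verified --- you use the Taylor expansion with the bound $k! \geq 2 \cdot 2^{k-2}$ and a geometric series, whereas the paper analyzes the critical points of $c \mapsto e^c - c^2 - c - 1$; both are valid.
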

\begin{proof}
    Using the definition of sub-exponential random variables, it suffices to check whether the moment generating function can be bounded on the appropriate interval using those parameters.
    First we observe that for any $c\in\mathbb{R}$
    \begin{align*}
        \expe\left( e^{c (X-\lambda)}\right)
        = \sum_{k=0}^{\infty} e^{c\cdot k}\cdot e^{-c\lambda}\cdot e^{-\lambda}\cdot \frac{\lambda^k}{k!} = e^{-c\lambda}\cdot e^{-\lambda}\cdot e^{e^c\lambda} = e^{\lambda\cdot (e^c-c-1)}.
    \end{align*}
    In the following, we establish that for  $|c|\leq \frac{1}{2}$ the following holds
    \begin{align*}
        e^{\lambda\cdot (e^c-c-1)} \leq e^{c^2\cdot \lambda}, 
    \end{align*}
   	or, equivalently, the following holds
    \begin{align}
        e^c-c-1 \leq c^2. \label{eq:proof:subexponential_poisson:a} 
    \end{align}
    To establish \eqref{eq:proof:subexponential_poisson:a}, define the function $f(c) \coloneqq e^c-c^2-c-1$, and note that it is continuous everywhere.
    The function $f(c)$ attains its extrema at the roots of the following equation:
    \begin{align*}
    		f'(c) = e^c-2c -1 =0.
    \end{align*}
    Now observe that the roots of the above equation are exactly the points where $e^c$ intersects with the line $2c+1$.
    Since $e^c$ is a convex function (follows from the fact that its double derivative is positive everywhere), hence it can have at most two intersection points with $2c+1$, and they are $c=0$ and $c\approx 1.256$.
    Thus, clearly there is only one extremum point of $f$ within the interval $c\in [-0.5,0.5]$, which is at $c=0$.
    Since $f''(0) = e^0-2 = -1$, hence $c=0$ is a maxima.
   	Lastly, by observing that $f(0) = 0$, \eqref{eq:proof:subexponential_poisson:a} follows.
\end{proof}

The following theorem demonstrates that the expected value of the attention allocation property has a closed form. 

\begin{theorem} \label{theorem:attention_property_closed_form}
        Let  $X \sim \poisson{\lambda}$ and $y$ be a constant. Let $f(X)=\frac{\hat{X}}{X+1}$, where $\hat{X}=\min \set{y,X+1}$. Then $\expe(f(X))=\eta(y,\lambda)$, where $\eta$ is the function defined in eq.~\eqref{eq:eta}.
    \end{theorem}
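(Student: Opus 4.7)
The plan is a direct computation of the expectation against the Poisson probability mass function, followed by an index shift and an algebraic rearrangement matching the stated closed form.

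First, I would write out
\begin{align*}
  \expe(f(X)) \;=\; \sum_{k=0}^{\infty} \frac{\min\{y,k+1\}}{k+1}\, e^{-\lambda}\,\frac{\lambda^k}{k!},
\end{align*}
and split the sum at $k = y-1$. On the \emph{head} ($0\le k\le y-1$) we have $\min\{y,k+1\} = k+1$, so the factor $\tfrac{\min\{y,k+1\}}{k+1}$ equals $1$; on the \emph{tail} ($k\ge y$) we have $\min\{y,k+1\} = y$, so it equals $\tfrac{y}{k+1}$.

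Second, I would simplify the tail using the identity $\tfrac{1}{k+1}\cdot\tfrac{\lambda^k}{k!} = \tfrac{1}{\lambda}\cdot\tfrac{\lambda^{k+1}}{(k+1)!}$, shift the summation index from $k$ to $j = k+1$, and then apply $\sum_{j=0}^{\infty} \tfrac{\lambda^j}{j!} = e^{\lambda}$ to close the tail:
\begin{align*}
  y\, e^{-\lambda} \sum_{k=y}^{\infty} \frac{\lambda^k}{(k+1)!}
  \;=\; \frac{y}{\lambda}\, e^{-\lambda}\!\!\sum_{j=y+1}^{\infty}\frac{\lambda^j}{j!}
  \;=\; \frac{y}{\lambda}\Bigl(1 - e^{-\lambda}\sum_{j=0}^{y}\frac{\lambda^j}{j!}\Bigr).
\end{align*}
Thus
\begin{align*}
  \expe(f(X)) \;=\; e^{-\lambda}\sum_{k=0}^{y-1}\frac{\lambda^k}{k!} \;+\; \frac{y}{\lambda} \;-\; \frac{y}{\lambda}\,e^{-\lambda}\sum_{j=0}^{y}\frac{\lambda^j}{j!}.
\end{align*}

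Third, I would verify that this matches $\eta(y,\lambda)$. Expanding the definition and using $\tfrac{1}{k!(k+1)} = \tfrac{1}{(k+1)!}$,
\begin{align*}
  e^{-\lambda}\sum_{k=0}^{y-1}\frac{\lambda^k}{k!}\cdot\frac{y}{k+1}
  \;=\; \frac{y}{\lambda}\,e^{-\lambda}\sum_{j=1}^{y}\frac{\lambda^j}{j!}
  \;=\; \frac{y}{\lambda}\,e^{-\lambda}\Bigl(\sum_{j=0}^{y}\frac{\lambda^j}{j!} - 1\Bigr),
\end{align*}
so
\begin{align*}
  \eta(y,\lambda)
  &\;=\; e^{-\lambda}\sum_{k=0}^{y-1}\frac{\lambda^k}{k!}
       \;-\; \frac{y}{\lambda}\,e^{-\lambda}\sum_{j=0}^{y}\frac{\lambda^j}{j!}
       \;+\; \frac{y}{\lambda}\,e^{-\lambda}
       \;+\; \frac{y}{\lambda}\bigl(1 - e^{-\lambda}\bigr) \\
  &\;=\; e^{-\lambda}\sum_{k=0}^{y-1}\frac{\lambda^k}{k!}
       \;+\; \frac{y}{\lambda}
       \;-\; \frac{y}{\lambda}\,e^{-\lambda}\sum_{j=0}^{y}\frac{\lambda^j}{j!},
\end{align*}
which coincides with the expression derived above.

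There is no serious obstacle here; the entire argument is a careful index manipulation. The only step that needs a bit of care is the boundary case where $y=0$, which I would handle separately by noting that both sides equal $0$ (the head sum is empty and $\hat{X}\equiv 0$), and the edge case $\lambda \to 0$, where the $y/\lambda\cdot(1-e^{-\lambda})$ term is to be read as its limit $y$, consistent with $\hat{X}/(X+1)=0$ almost surely in that limit. Everything else is bookkeeping, and the identity drops out cleanly.
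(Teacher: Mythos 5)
Your proof is correct and follows essentially the same route as the paper's: split the expectation at $k=y-1$ where $\min\{y,k+1\}$ switches form, reindex the tail via $\frac{\lambda^k}{(k+1)!}=\frac{1}{\lambda}\frac{\lambda^{k+1}}{(k+1)!}$, and close it with the exponential series. The only cosmetic difference is that you expand $\eta$ and match it against your derived expression, whereas the paper rearranges $A+B$ directly into the form of $\eta$; your extra remarks on the $y=0$ and $\lambda\to 0$ edge cases are a harmless bonus.
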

    
    \begin{proof}
        \begin{equation*}
            \expe(f(X)) = \sum_{k=0}^\infty f(k) \cdot \pr(X=k)
        \end{equation*}
        Observe that $\hat{X}=X+1$ and $f(X)=1$ if $X<y$. Thus we have
        \begin{align*}
            \expe(f(X)) & = \sum_{k=0}^{y-1} 1 \cdot \pr(X=k) + \sum_{k=y}^\infty \frac{y}{k+1} \cdot \pr(X=k) \\ 
            & \underbrace{e^{-\lambda} \sum_{k=0}^{y-1} \frac{\lambda^k}{k!}}_{A} + \underbrace{a \cdot e^{-\lambda} \sum_{k=y}^\infty \frac{\lambda^k}{(k+1)\cdot k!}}_B
        \end{align*}
    
        We simplify $B$ as follows:
        \begin{align*}
            B=a \cdot e^{-\lambda} \sum_{k=y}^\infty \frac{\lambda^k}{(k+1)\cdot k!} & = \frac{y \cdot e^{-\lambda}}{\lambda} \sum_{k=y}^{\infty} \frac{\lambda^{k+1}}{(k+1)!} \\
            & = \frac{y \cdot e^{-\lambda}}{\lambda} \left( \sum_{k=0}^\infty \frac{\lambda^{k+1}}{(k+1)!} - \sum_{k=0}^{a-1} \frac{\lambda^{k+1}}{(k+1)!} \right) \\
            & = \frac{y \cdot e^{-\lambda}}{\lambda} \left( \left(e^\lambda - 1\right) - \sum_{k=0}^{y-1} \frac{\lambda^{k+1}}{(k+1)!} \right) \\
            & = \frac{y}{\lambda} - \frac{y \cdot e^{-\lambda}}{\lambda} \left(1 + \sum_{k=0}^{y-1} \frac{\lambda^{k+1}}{(k+1)!} \right)
        \end{align*}
        Hence, we obtain
        \begin{align*}
            \expe(f(X))=A+B & = e^{-\lambda} \sum_{k=0}^{y-1} \frac{\lambda^k}{k!} + \frac{y}{\lambda} - \frac{y \cdot e^{-\lambda}}{\lambda} \left(1 + \sum_{k=0}^{y-1} \frac{\lambda^{k+1}}{(k+1)!} \right) \\
            & = e^{-\lambda} \sum_{k=0}^{y-1} \frac{\lambda^k}{k!} \left(1 - \frac{y}{k+1}\right) + \frac{y}{\lambda}\left(1 - e^{-\lambda}\right)
        \end{align*}
    \end{proof}
    
    \begin{theorem} \label{theorem:eta_decreasing}
        Given a natural number $y>0$, function $\eta(y, \lambda)$ as defined in eq.~\eqref{eq:eta} is strictly decreasing for $\lambda > 0$.
    \end{theorem}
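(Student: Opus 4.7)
The plan is to work from the probabilistic reformulation that was already established in Thm.~\ref{theorem:attention_property_closed_form}, namely
\[
	\eta(y,\lambda) = \expe\!\left[\frac{\min\{y,X+1\}}{X+1}\right] = \sum_{k=0}^{y-1}\pr(X=k) \;+\; \sum_{k=y}^{\infty}\frac{y}{k+1}\,\pr(X=k),
\]
where $X\sim\poisson{\lambda}$. Differentiating the series termwise in $\lambda$ (justified by uniform convergence on compact subsets of $(0,\infty)$, since $\pr(X=k)=e^{-\lambda}\lambda^k/k!$ is an entire function of $\lambda$), and using the well-known identity
\[
	\frac{d}{d\lambda}\pr(X=k) \;=\; \pr(X=k-1) - \pr(X=k),
\]
with the convention $\pr(X=-1)=0$, reduces the problem to a bookkeeping exercise on two resulting series.

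The first step is to observe that the finite sum telescopes:
\[
	\sum_{k=0}^{y-1}\bigl[\pr(X=k-1) - \pr(X=k)\bigr] = -\pr(X=y-1).
\]
The second step is to handle the infinite tail. Re-indexing $j=k-1$ in the part multiplied by $\pr(X=k-1)$ and peeling off the boundary term $k=y-1$ gives
\[
	\sum_{k=y}^{\infty}\frac{y}{k+1}\bigl[\pr(X=k-1)-\pr(X=k)\bigr]
	= \frac{y}{y+1}\pr(X=y-1) + \sum_{k=y}^{\infty} y\!\left(\frac{1}{k+2}-\frac{1}{k+1}\right)\!\pr(X=k).
\]
Combining both parts and using $\frac{1}{k+2}-\frac{1}{k+1}=-\frac{1}{(k+1)(k+2)}$ together with $-1+\frac{y}{y+1}=-\frac{1}{y+1}$ yields
\[
	\frac{d\eta}{d\lambda}(y,\lambda) = -\frac{1}{y+1}\pr(X=y-1) \;-\; \sum_{k=y}^{\infty}\frac{y}{(k+1)(k+2)}\pr(X=k).
\]
For $\lambda>0$ every Poisson point mass is strictly positive, so the right-hand side is strictly negative, proving that $\eta(y,\cdot)$ is strictly decreasing on $(0,\infty)$.

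\textbf{Main obstacle.} The computation itself is elementary, so the only real care is in the telescoping/re-indexing: keeping the boundary term at $k=y-1$ that emerges when shifting the index, and correctly matching it with the $-\pr(X=y-1)$ coming from the finite sum, so that the two contributions collapse to the clean coefficient $-\tfrac{1}{y+1}$. An alternative, more conceptual route would be to note that $k\mapsto\min\{y,k+1\}/(k+1)$ is non-increasing and strictly decreasing on $\{y-1,y,\dots\}$, combined with the stochastic monotonicity of the Poisson family in $\lambda$; this gives monotonicity for free and strictness since the strictly-decreasing region has positive probability. I would however present the direct derivative calculation above, as it matches the analytic style of \eqref{eq:eta} and produces an explicit, interpretable negative expression for $\partial \eta/\partial\lambda$.
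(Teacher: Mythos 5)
Your proof is correct, and it takes a genuinely different route from the paper's. The paper splits $\eta(y,\lambda)$ as written in \eqref{eq:eta} into the finite sum $e^{-\lambda}\sum_{k=0}^{y-1}\frac{\lambda^k}{k!}\left(1-\frac{y}{k+1}\right)$ and the term $\frac{y}{\lambda}\left(1-e^{-\lambda}\right)$, and argues via Lemmas~\ref{lemma:eta_sum_decreasing} and~\ref{lemma:eta_term_decreasing} that each summand is separately strictly decreasing; you instead differentiate the probabilistic form $\sum_{k<y}\pr(X=k)+\sum_{k\geq y}\frac{y}{k+1}\pr(X=k)$ termwise and arrive at the explicit, manifestly negative expression $\partial\eta/\partial\lambda=-\frac{1}{y+1}\pr(X=y-1)-\sum_{k\geq y}\frac{y}{(k+1)(k+2)}\pr(X=k)$. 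I checked your telescoping and the re-indexing of the tail; the boundary term $\frac{y}{y+1}\pr(X=y-1)$ does combine with $-\pr(X=y-1)$ from the finite sum to give the coefficient $-\frac{1}{y+1}$, and the termwise differentiation is legitimately justified by uniform convergence on compacta. Your route buys more than elegance: the paper's decomposition does not actually go through, because the first summand is non-positive and in general \emph{non-decreasing} in $\lambda$ (for $y=2$ it equals $-e^{-\lambda}$, which is strictly increasing), so the final step of Lemma~\ref{lemma:eta_sum_decreasing} --- multiplying an inequality between non-positive sums by the decreasing positive factors $e^{-\lambda_i}$ --- reverses rather than preserves the claimed direction. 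Your global derivative computation (and equally your alternative remark via stochastic monotonicity of the Poisson family applied to the non-increasing function $k\mapsto\min\{y,k+1\}/(k+1)$) sidesteps this entirely, and as a bonus yields the quantitative bound $\partial\eta/\partial\lambda\leq-\frac{1}{y+1}\pr(X=y-1)<0$, which is reusable if one ever needs a modulus of strict decrease.
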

    
    \begin{proof}
        Assume $0<\lambda_1<\lambda_2$. From Lemmas \ref{lemma:eta_sum_decreasing} and \ref{lemma:eta_term_decreasing}, we have
        \begin{align*}
            e^{-\lambda_1} \sum_{k=0}^{y-1} \frac{\lambda_1^k}{k!} \left( 1 - \frac{y}{k+1} \right) & > e^{-\lambda_2} \sum_{k=0}^{y-1} \frac{\lambda_2^k}{k!} \left( 1 - \frac{y}{k+1} \right) \\
            \frac{y}{\lambda_1}\left(1 - e^{-\lambda_1}\right) & > \frac{y}{\lambda_2}\left(1 - e^{-\lambda_2}\right)
        \end{align*}
        Thus we have
        \begin{equation*}
            \eta(y, \lambda_1) = e^{-\lambda_1} \sum_{k=0}^{y-1} \frac{\lambda_1^k}{k!} \left( 1 - \frac{y}{k+1} \right) + \frac{y}{\lambda_1}\left(1 - e^{-\lambda_1}\right) > e^{-\lambda_2} \sum_{k=0}^{y-1} \frac{\lambda_2^k}{k!} \left( 1 - \frac{y}{k+1} \right) + \frac{y}{\lambda_2}\left(1 - e^{-\lambda_2}\right) = \eta(y, \lambda_2)
        \end{equation*}
    \end{proof}

    The following lemma is a supporting lemma.

    \begin{lemma} \label{lemma:eta_sum_decreasing}
        Given a natural number $y>0$ and real numbers $0 < \lambda_1 < \lambda_2$,
        \begin{equation*}
            e^{-\lambda_1} \sum_{k=0}^{y-1} \frac{\lambda_1^k}{k!} \left( 1 - \frac{y}{k+1} \right) > e^{-\lambda_2} \sum_{k=0}^{y-1} \frac{\lambda_2^k}{k!} \left( 1 - \frac{y}{k+1} \right) 
        \end{equation*}
    \end{lemma}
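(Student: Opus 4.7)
My plan is to view both sides of the inequality as values of the single function
\[
h(\lambda) \coloneqq e^{-\lambda} \sum_{k=0}^{y-1} \frac{\lambda^k}{k!}\left(1 - \frac{y}{k+1}\right),
\]
so the lemma is equivalent to $h$ being strictly decreasing on $(0,\infty)$. I would try to establish monotonicity directly by differentiating: write $h(\lambda) = e^{-\lambda} P(\lambda)$ with $P(\lambda) = \sum_{k=0}^{y-1} c_k \lambda^k/k!$ and $c_k = 1 - y/(k+1)$, apply the product rule to obtain $h'(\lambda) = e^{-\lambda}(P'(\lambda) - P(\lambda))$, and analyze the sign of $P'(\lambda) - P(\lambda)$. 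Since $e^{-\lambda} > 0$, the sign question is reduced to a purely polynomial one.

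The key algebraic simplification should come from re-indexing $P'$ and exploiting that the top coefficient vanishes: $c_{y-1} = 1 - y/y = 0$. Writing $P'(\lambda) = \sum_{k=0}^{y-2} c_{k+1} \lambda^k/k!$ and subtracting $P(\lambda)$ term by term, the $k = y-1$ contribution of $P$ drops out and we are left with
\[
P'(\lambda) - P(\lambda) = \sum_{k=0}^{y-2}\bigl(c_{k+1} - c_k\bigr)\frac{\lambda^k}{k!},
\]
and the increments $c_{k+1} - c_k = y/\bigl((k+1)(k+2)\bigr)$ have an explicit closed form. What is left is to determine the sign of this combinatorial remainder for all $\lambda > 0$.

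The main obstacle will be precisely this sign check: the lemma asserts strict decrease, so I need the displayed remainder to be negative, and that is not immediate from the closed-form increments alone. If the direct algebraic route does not cooperate, I would fall back on a probabilistic reformulation: rewrite $h(\lambda) = \expe\bigl[g(X)\bigr]$ with $X \sim \poisson{\lambda}$ and $g(k) = \mathbf{1}\{k \leq y-1\}\bigl(1 - y/(k+1)\bigr)$, and try to exploit the stochastic monotonicity of the Poisson family in $\lambda$ together with the shape of $g$ (it is non-positive, vanishes at $k = y-1$, and is monotone in $k$ on its support). As a sanity check before committing to the general argument, I would verify the small cases $y = 2, 3$ by direct substitution, both to confirm the inequality's direction and to calibrate which sign-analysis technique is doing the real work.
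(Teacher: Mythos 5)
Your derivative computation is correct, but look at what it actually yields. With $c_k = 1 - y/(k+1)$ and the top coefficient $c_{y-1}=0$, you get
\[
P'(\lambda)-P(\lambda) \;=\; \sum_{k=0}^{y-2}\frac{y}{(k+1)(k+2)}\cdot\frac{\lambda^k}{k!},
\]
which is strictly \emph{positive} for every $\lambda>0$ once $y\geq 2$ (and identically zero for $y=1$). Hence $h'(\lambda)=e^{-\lambda}\bigl(P'(\lambda)-P(\lambda)\bigr)>0$, so $h$ is strictly increasing and the lemma as stated is false. The sanity check you propose confirms this at once: for $y=2$ the sum collapses to the single nonzero term $-1$, so $h(\lambda)=-e^{-\lambda}$, which increases in $\lambda$. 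Your probabilistic fallback points the same way: $g(k)=\mathbf{1}\{k\leq y-1\}\bigl(1-y/(k+1)\bigr)$ is non-positive and \emph{non-decreasing} in $k$, so stochastic monotonicity of the Poisson family makes $\expe(g(X))$ non-decreasing in $\lambda$. No sign-analysis technique will rescue the claimed direction; the obstacle you flagged is not an obstacle in your argument but a defect in the statement.

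For comparison, the paper's own proof breaks at its final step. It correctly establishes the termwise inequality (each coefficient $1-y/(k+1)$ is non-positive, so increasing $\lambda^k$ decreases each term), giving $S(\lambda_1)\geq S(\lambda_2)$ with both sums non-positive; it then concludes from $e^{-\lambda_1}>e^{-\lambda_2}>0$ that $e^{-\lambda_1}S(\lambda_1)>e^{-\lambda_2}S(\lambda_2)$. That inference is invalid for non-positive $S$ — multiplying by the larger positive factor reverses the inequality precisely when $S(\lambda_1)=S(\lambda_2)<0$, which is the $y=2$ case. The downstream claim that $\eta(y,\cdot)$ is strictly decreasing remains true, and your probabilistic reformulation is the right way to see it: apply it to the full function $f(k)=\min\{1,\,y/(k+1)\}$, which \emph{is} non-increasing in $k$ and strictly decreasing on $\{k\geq y-1\}$, rather than to the two summands separately. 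But the decomposition into two individually decreasing pieces, and hence this lemma, cannot be proved by your plan or any other.
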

    \begin{proof}
        \begin{equation*}
            0 < \lambda_1 < \lambda_2 \implies \forall k \in \Naturals.~\lambda_1^k \leq \lambda_2^k \implies \forall k \in \Naturals.~ \frac{\lambda_1^k}{k!} \leq \frac{\lambda_2^k}{k!}
        \end{equation*}
        For any $k \in \set{0,1,\cdots,y-1}$, we have $\frac{y}{k+1} \geq 1$. Thus we have
        \begin{align*}
            \frac{\lambda_1^k}{k!} \left(1 - \frac{y}{k+1}\right) & \geq \frac{\lambda_2^k}{k!} \left(1 - \frac{y}{k+1}\right) \\
            \implies \sum_{k=0}^{y-1}\frac{\lambda_1^k}{k!} \left(1 - \frac{y}{k+1}\right) & \geq  \sum_{k=0}^{y-1} \frac{\lambda_2^k}{k!} \left(1 - \frac{y}{k+1}\right)
        \end{align*}
        For any $0 < x_1 < x_2$, we have $e^{-x_1} > e^{-x_2} > 0$. Thus we have
        \begin{equation*}
            e^{-\lambda_1} \sum_{k=0}^{y-1}\frac{\lambda_1^k}{k!} \left(1 - \frac{y}{k+1}\right) > e^{-\lambda_2} \sum_{k=0}^{y-1} \frac{\lambda_2^k}{k!} \left(1 - \frac{y}{k+1}\right)
        \end{equation*}
    \end{proof}
    
    The following lemma demonstrates that $\eta$ is strictly decreasing for $\lambda$. Hence, our monitor can efficiently compute its estimate.

    \begin{lemma} \label{lemma:eta_term_decreasing}
        Given a natural number $y>0$, $\frac{y}{\lambda}\left(1 - e^{-\lambda}\right)$ is strictly decreasing for $\lambda > 0$.
    \end{lemma}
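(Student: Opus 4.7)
The plan is to factor out the constant $y > 0$ and reduce the claim to showing that $g(\lambda) \coloneqq \frac{1 - e^{-\lambda}}{\lambda}$ is strictly decreasing on $(0, \infty)$. Since $y$ is a positive natural number, multiplying $g$ by $y$ preserves strict monotonicity, so this reduction is safe.

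First, I would compute the derivative directly. By the quotient rule,
\begin{equation*}
    g'(\lambda) = \frac{\lambda e^{-\lambda} - (1 - e^{-\lambda})}{\lambda^2} = \frac{e^{-\lambda}(1 + \lambda) - 1}{\lambda^2}.
\end{equation*}
Since $\lambda^2 > 0$ for $\lambda > 0$, the sign of $g'(\lambda)$ agrees with the sign of $e^{-\lambda}(1 + \lambda) - 1$, equivalently with the sign of $(1 + \lambda) - e^{\lambda}$. Hence it suffices to prove the classical strict inequality $e^{\lambda} > 1 + \lambda$ for all $\lambda > 0$.

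The inequality $e^{\lambda} > 1 + \lambda$ follows immediately from the Taylor expansion $e^{\lambda} = 1 + \lambda + \sum_{k \geq 2} \lambda^k/k!$, where every omitted term is strictly positive for $\lambda > 0$. Plugging this back gives $g'(\lambda) < 0$ on $(0, \infty)$, so $g$ is strictly decreasing, and therefore so is $\lambda \mapsto \frac{y}{\lambda}(1 - e^{-\lambda})$.

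I do not anticipate any genuine obstacle here; the only subtlety is ensuring the algebraic simplification of $g'$ is correct and that the bound $e^{\lambda} > 1 + \lambda$ is invoked with \emph{strict} inequality (not just $\geq$), which is needed because the lemma claims strict monotonicity. An equivalent, perhaps slicker, route would be to use the representation $g(\lambda) = \int_0^1 e^{-\lambda t}\, dt$ and observe that for each fixed $t \in (0, 1]$ the integrand $e^{-\lambda t}$ is strictly decreasing in $\lambda$, so the integral is strictly decreasing as well; this avoids any derivative computation entirely.
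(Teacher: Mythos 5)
Your proof is correct and follows essentially the same route as the paper's: compute the derivative, observe that its sign is governed by $e^{-\lambda}(1+\lambda)-1$, and conclude via the strict inequality $e^{\lambda}>1+\lambda$ for $\lambda>0$ (which the paper establishes by a monotonicity argument on $1+\lambda-e^{\lambda}$ rather than your Taylor expansion, an immaterial difference). Your aside giving the representation $\frac{1-e^{-\lambda}}{\lambda}=\int_0^1 e^{-\lambda t}\,dt$ is a clean derivative-free alternative, but the main argument matches the paper.
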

    
    \begin{proof}
        Let $f(\lambda)=\frac{y}{\lambda}\left(1 - e^{-\lambda}\right)$. We have
        \begin{align*}
            f'(\lambda) & = \frac{-y}{\lambda^2} \left(1 - e^{-\lambda}\right) + e^{-\lambda}\frac{y}{\lambda} \\
            & = \frac{y}{\lambda} \left(\frac{e^{-\lambda} - 1}{\lambda} + e^{-\lambda}\right)
        \end{align*}
    
        We show that for any $\lambda>0$, $1+\lambda < e^\lambda$. Let $g_1(\lambda)=1+\lambda$ and $g_2(\lambda)=e^\lambda$. We have $g_1(0)=g_2(0)=0$. Let $h(\lambda)=g_1(\lambda)-g_2(\lambda)$. We have $h'(\lambda)=1-e^{\lambda}<0$ for any $\lambda > 0$; hence, $h$ is strictly decreasing for any $\lambda > 0$, and we obtain $1+\lambda < e^\lambda$ for any $\lambda > 0$. From this result, we have
        \begin{align*}
            & \frac{1+\lambda}{e^\lambda} = e^{-\lambda}(1+\lambda) < 1 \\
            \implies & e^{-\lambda}(1+\lambda) - 1  = e^{-\lambda}-1 + \lambda e^{-\lambda} < 0 \\
            \implies & \frac{e^{-\lambda}-1}{\lambda} + e^{-\lambda} < 0 \\
            \implies & \frac{y}{\lambda} \left(\frac{e^{-\lambda}-1}{\lambda} + e^{-\lambda}\right) < 0
        \end{align*}
    
        Thus, $f(\lambda)$ is strictly decreasing for any $\lambda > 0$.
    \end{proof}

	\fi
\end{document}